
\documentclass[10pt,fleqn]{article}
\textwidth 16cm \textheight 23cm \topmargin -1.5cm \oddsidemargin
0cm
\usepackage{amsfonts}
\usepackage{amsmath}
\usepackage{color}
\usepackage{indentfirst,latexsym,bm}
\usepackage{amssymb}
\usepackage{amsmath}
\usepackage{graphics}
\usepackage{graphicx}

\setlength{\parindent}{1.2em}

\numberwithin{equation}{section}

 \newcommand{\ep}{\varepsilon}
 \newcommand{\p}{\partial}
 \newcommand{\be}{\begin{eqnarray}}
 \newcommand{\ee}{\end{eqnarray}}
 \newcommand{\bee}{\begin{eqnarray*}}
 \newcommand{\eee}{\end{eqnarray*}}
 \newcommand{\ra}{\rightarrow}
\newtheorem{thm}{Theorem}[section]

\newtheorem{lem}[thm]{Lemma}

\newtheorem{rmk}{Remark}[section]

\newtheorem{theorem}{Theorem}[section]

\newtheorem{lemma}[theorem]{Lemma}

\newtheorem{proposition}[theorem]{Proposition}

\newtheorem{assumption}[theorem]{Assumption}
\newenvironment{proof}[1][Proof]{\textbf{#1.} }{\ \rule{0.5em}{0.5em}}
\begin{document}

\title{\LARGE\bf Dynkin Game of Convertible Bonds and Their Optimal
Strategy\footnote{\noindent We thank the Editor-in-Chief, Goong
Chen, an Associate Editor and two referees for their valuable
comments and helpful suggestions, which helped to significantly
improve the presentation of the paper. The work is supported by NNSF
of China (No. 11271143, 11371155, 11326199), University Special
Research Fund for Ph.D. Program (No.20124407110001).}}
\date{}
\author{
Huiwen Yan~\thanks{School of Mathematics and Statistics,
  Guangdong University of Finance \& Economics, Guangzhou 510320, China,
  \texttt{hwyan10@gmail.com}}$\quad$
Fahuai Yi~\thanks{School of Mathematical Science, South China Normal
 University, Guangzhou 510631, China,
 \texttt{fhyi@scnu.edu.cn}} $\quad$
Zhou Yang~\thanks{School of Mathematical Science, South China Normal
 University, Guangzhou 510631, China,
 \texttt{yangzhou@scnu.edu.cn}. The corresponding author.}$\quad$
Gechun Liang~\thanks{Department of Mathematics, King's College
London, London WC2R 2LS, U.K.,
 \texttt{gechun.liang@kcl.ac.uk} }$\quad$}
 \maketitle

%

\begin{abstract}
This paper studies the valuation and optimal strategy of convertible
bonds as a Dynkin game by using the reflected backward stochastic
differential equation method and the variational inequality method.
We first reduce such a Dynkin game to an optimal stopping time
problem with state constraint, and then in a Markovian setting, we
investigate the optimal strategy by analyzing the properties of the
corresponding free boundary, including its position, asymptotics,
monotonicity and regularity. We identify situations when call
precedes conversion, and vice versa. Moreover, we show that the
irregular payoff results in the possibly non-monotonic conversion
boundary. Surprisingly, the price of the convertible bond is not
necessarily monotonic in time: it may even increase when time
approaches maturity.
\\

{\bf Key words:} Convertible bond, Dynkin game, optimal stopping
time problem,
reflected BSDE, variational inequality, free boundary.\\

{\bf Mathematics subject classification:} 35R35, 60H30, 91B25.\\
\end{abstract}

%
%

\baselineskip 18 pt
\section{Introduction}

Convertible bonds are often advertised as products with upside
potential and limited downside risk, since a convertible bond is
often supplemented with an option to exchange this bond for a given
number of shares. The bondholder decides whether to keep the bond,
in order to collect coupons, or to convert it to the firm's stocks.
She will choose a conversion strategy to maximize the bond value. On
the other hand, the issuing firm has the right to call the bond, and
presumably acts to maximize the equity value of the firm by
minimizing the bond value. This creates a \emph{two-person, zero-sum
Dynkin game}. One of the central questions for convertible bonds is
to study such a Dynkin game, and more importantly, the corresponding
optimal call and conversion strategies.

The study of convertible bonds dates back to Brennan and Schwartz
\cite{Brennan} and Ingersoll \cite{Ingersoll}. However, both the
call and conversion strategies are predetermined in these papers, so
neither of them need to address the free boundary problem that
arises if early conversion or early call is optimal. Sirbu et al
\cite{SPS} is one of the first to analyze the optimal strategy of
perpetual convertible bonds (see also Sirbu and Shreve \cite{SS1}
for the finite horizon counterpart). They reduce the problem from a
Dynkin game to an optimal stopping time problem, and discuss when
call precedes conversion,  and vice versa. Several more realistic
features of convertible bonds have been taken into account since
then. For example, Bielecki et al \cite{BCJR} consider the problem
of the decomposition of a convertible bond into bond component and
option component. Cr\'epey and Rahal \cite{Crepey} study the
convertible bond with call protection, which is typically path
dependent, and more recently, Chen et al \cite{Chen} consider the
tax benefit and bankruptcy cost for convertible bonds. For a
complete literature review, we refer to the aforementioned papers
with references therein.

In this paper, we first study the Dynkin game of convertible bonds
by using the reflected backward stochastic differential equation
(\emph{reflected BSDE for short}) method. Instead of regarding the
convertible written on the stock value which is endogenously
determined as the difference between the firm value and the bond
value, we take a reduced form approach by assuming that the firm's
stock value follows a general It\^o process exogenously.
Interestingly, similar to \cite{SPS} and \cite{SS1}, we can also
reduce the Dynkin game to an optimal stopping time problem with
state constraint, i.e. reducing the reflected BSDE with two
obstacles to a reflected BSDE with one obstacle and state
constraint. An important consequence of this representation result
is to allow us to identify when call precedes conversion, and vice
versa, which is in line with \cite{SS1}. That is, we show in
Propositions \ref{proposition1}-\ref{proposition2} that when the
coupon rate is bounded above by the interest rate times the
surrender price, the bondholder will always convert her bond first;
when the coupon rate is bounded below by the dividend rate times the
surrender price, the firm will always call the bond first; when the
coupon rate lies between the above two bounds, both the bondholder
and the firm will terminate the contract simultaneously. We show
that the above representation result holds in a general It\^o
process setting which is not necessarily Markovian, the latter of
which is the standing assumption in both \cite{SPS} and \cite{SS1}.

In the Markovian case, one way to study the optimal strategy of
convertible bonds is to analyze the properties of the free boundary
for the corresponding variational inequality (\emph{VI for short}).
Notwithstanding, the research on the free boundary analysis to
understand the optimal strategy of convertible bonds is rare
compared to the study on American options, for which the
corresponding free boundary has already been well studied. One of
the main reasons is that the corresponding Dynkin game (variational
inequality) is too complicated to study. By utilizing the
aforementioned representation result, we can reduce the
corresponding Dynkin game to an optimal stopping time problem with
state constraint, and this paves the way to study the properties of
the corresponding free boundary. The current authors have already
taken this path in some special cases (see \cite{Yan,Yang3,Yang2}).
For example, in \cite{Yang2} the authors assume that the issuer has
no right to call. In \cite{Yang3} the authors only consider the
surrender price and the final pre-specified price exactly equal, so
the corresponding free boundary is always monotonic. In \cite{Yan}
only the case that the coupon rate is less than the interest rate
times the surrender price is considered. In the present paper, we
attempt to close the previous gaps, and give a complete analysis of
the free boundary under different cases, including the position of
the free boundary with its asymptotics, monotonicity and regularity,
etc. In particular, we concentrate on the case with irregular payoff
(see Assumption 2.2).

There are several interesting properties of the free boundary as we
prove in Section 3. First, it is well known that the asymptotics of
the free boundary is more difficult to obtain than the asymptotics
of the solution to the equation, because the convergence of the
solution does not imply the convergence of the free boundary in
general, and it is very difficult to deduce the latter via partial
differential equation (\emph{PDE for short}) estimates. In Theorem
\ref{th5.3}, we manage to obtain the asymptotics of both solution
and free boundary. The main idea for the latter is as follows:  we
solve the corresponding
 perpetual problem, then use its solution to construct a sub-solution sequence
 and a super-solution sequence of the finite horizon problem, and show
the asymptotic behavior of the free boundary via the two sequences.

Secondly, the free boundary in the VI (\ref{eq2.10}) is
non-monotonic under some parameter assumptions (see Theorem
\ref{th5.4} and Figure 3.5). This is due to the singular terminal
payoff which results in the blowup of the time derivative of the
price near the maturity around the singular point. The
non-monotonicity of the free boundary results in the
non-monotonicity of the convertible bond price. In particular, the
price may go up near maturity. In order to prove such a
non-monotonicity property, we discuss its terminal asymptotic
behavior and its initial asymptotic behavior as time goes to
infinity, and prove that the terminal value is larger than the
initial value, but less than the value at some middle point.

Thirdly, a standard assumption to prove the smoothness of the free
boundary is that the difference between the solution and the lower
obstacle of the VI is increasing with respect to time (see
\cite{Fr2}). Without this monotonicity property, the regularity is
difficult to achieve as discussed in \cite{Bla, Pet}. Unless the
coupon rate is greater than the interest rate times the
pre-specified price for the final payoff as in Theorem \ref{th5.5},
this monotonicity condition does not hold, and the smoothness of the
free boundary is not obvious at all. In Theorem \ref{th5.6} we show
the smoothness of the free boundary even when this monotonicity
condition fails, by using a subtle coordinate transformation and the
comparison principle for VI.

 The rest of the paper is organized as follows: In Section 2, we
 formulate our pricing model of convertible bonds as a Dynkin game by using
 the reflected BSDE method. In Section 3, we study the optimal
 strategies of convertible bonds by analyzing the properties of the corresponding free boundary. Some technical details about
 the solvability of the VI are presented in the appendix.

\section{The Dynkin Game of Convertible Bonds}

In this section, we formulate the pricing problem of convertible
bonds as a zero-sum Dynkin game by using the reflected BSDE method.
Our main result in this section is to show that such a Dynkin game
can be reduced to an optimal stopping time problem with state
constraint.

For a fixed time horizon $T>0$, let $W$ be a one dimensional
Brownian motion on a filtered probability space
$(\Omega,\mathcal{F}, \mathbb{F}=\{{\mathcal{F}}_t\} ,\mathbb{P})$
satisfying the \emph{usual conditions}, where $\mathbb{F}$ is the
augmented filtration generated by the Brownian motion $W$, and
$\mathbb{P}$ is interpreted as the risk-neutral probability measure.
Consider a firm who issues convertible bonds with the coupon rate
$c$ and the maturity $T$. The convertible bond is written on the
firm's underlying stocks $S$, whose price process under the
risk-neutral probability measure $\mathbb{P}$ is given by \be
\label{eq2.1}
 S_s=S_t+\int_t^s (r_u-q_u) S_u \,du +\int_t^s \sigma_u S_u dW_u,
\ee for $0\leq t\leq s\leq T$,
where $r,\,q,\,\sigma$ represent the
risk-free interest rate, the dividend rate and the volatility,
 respectively.

\begin{assumption}\label{Assumption1}
The coupon rate $c$, the risk-free interest rate $r$, the dividend
rate $q$ and the volatility $\sigma$ are {$\mathbb{F}$-progressively
measurable} and
uniformly bounded. Additionally, the volatility is positive
$\sigma_t>0$, $a.s.$ for $t\in[0,T]$.
\end{assumption}

Consider an investor purchasing a share of convertible bond from the
issuer at any stating time $t\in[0,T]$. Assume there is no default
for the firm. By holding the convertible bond, she will continuously
receive the coupon rate $c$ from the issuer until the contract is
terminated. Prior to the contract maturity $T$, the investor has the
right to convert her bond to the firm's stocks, while the firm has
the right to call the bond and force the bondholder to surrender her
bond to the firm. Hence there are three situations that the contract
will be terminated: (1) if the firm calls the bond at some
$\mathbb{F}$-stopping time $\tau$ first, the bondholder will receive
a pre-specified surrender price $K$ at time $\tau$; (2) if the
investor chooses to convert her bond  at some $\mathbb{F}$-stopping
time $\theta$ first or both players choose to stop the contract
simultaneously, the bondholder will obtain $\gamma S_{\theta}$ at
time $\theta$ from converting her bond with a pre-specified
conversion rate $\gamma\in(0,+\infty)$; (3) if neither players take
any action during the contract period, then at the maturity $T$, the
investor must sell her bond to the firm with a pre-specified price
$L$ or convert it to the firms' stocks with the conversion rate
$\gamma$, so she will obtain $\max\{L, \gamma S_T\}$. In summary,
the investor will obtain the following discounted payoff at the
starting time $t\in[0,T]$: \be\nonumber
 P(\tau,\theta)&=&\int_t^{\tau\wedge\theta}\,R(t,u)\,c_u\,du
 +R(t,\tau)\,K\,I_{\{\tau<\theta\}}
 +R(t,\theta)\,\gamma\,S_\theta\,I_{\{\theta\leq\tau,\,\theta<T\}}
 \\[2mm]                                                                              \label{eq2.2}
 &&+\,R(t,T)\,\max\{L,\,\gamma S_T\}\,I_{\{\tau\wedge\theta=
 T\}},
 \ee
where $\tau,\theta\in\mathcal{U}_{t,T}$, the set of all
$\mathbb{F}-$stopping times taking values in $[\,t,T\,]$, and
$R(t,u)=\exp\{-\int_t^u r_sds\}$ is the discount rate from $t$ to
$u$ in the risk-neutral world.

The investor will choose $\theta\in\mathcal{U}_{t,T}$ to maximize
$P(\tau,\theta)$, while the firm will choose
$\tau\in\mathcal{U}_{t,T}$ to minimize $P(\tau,\theta)$. Hence we
have the upper value and lower value, respectively,
\begin{align*}
 \overline{V}_t{=}
 \mathop{{\rm ess.inf}}_{\tau\in{\cal U}_{t,T}}
 \mathop{{\rm ess.sup}}_{\theta\in{\cal U}_{t,T}}
 \mathbb{E}\,[\,P(\tau,\theta)|{\cal F}_t\,];\\
\underline{V}_t{=}
 \mathop{{\rm ess.sup}}_{\theta\in{\cal U}_{t,T}}
 \mathop{{\rm ess.inf}}_{\tau\in{\cal U}_{t,T}}
 \mathbb{E}\,[\,P(\tau,\theta)|{\cal F}_t \,]
\end{align*}
of a corresponding Dynkin game (see \cite{Karatzas} for the
definition of Dynkin game). The value of this game exists if there
exists some process $V$ such that
$$V_t=\overline{V}_t=\underline{V}_t,\ \ \ a.s.\ \text{for}\  t\in[0,T],$$
and $V_t$ is the time $t$ value of this convertible bond by
no-arbitrage principle (see Chapter 36 of \cite{WIL}). It is
standard to show that if there exits a Nash equilibrium point
$(\tau^*_t,\theta^*_t)
 \in {\cal U}\,_{t,\,T}\times{\cal U}\,_{t,\,T}$ such that
$$
 \mathbb{E}\,[\,P(\tau^*_t,\theta)|{\cal F}_t  \,]\leq
 \mathbb{E}\,[\,P(\tau^*_t,\theta^*_t)|{\cal F}_t \,]\leq
 \mathbb{E}\,[\,P(\tau,\theta^*_t)|{\cal F}_t \,],\;\; a.s.\ \
 \text{for}\;\tau,\theta\in {\cal U}\,_{t,\,T},\qquad
 $$
then the value of this game $V$ exists and is given by
 \be                                                                                  \label{eq2.4}
 V_t=\mathbb{E}\,[\,P(\tau^*_t,\theta^*_t)|{\cal F}_t \,].
 \ee
The Nash equilibrium point $(\tau^*_t,\theta^*_t)$ is called the
optimal strategy for such a convertible bond, where $\tau^*_t$ and
$\theta^*_t$ represent the optimal calling and conversion strategy,
respectively. The conversion payoff $\gamma S$ is usually called the
lower obstacle, and the surrender price $K$ called the upper
obstacle.

\begin{assumption}
The risk-free interest rate is no less than the dividend rate:
$r_t\geq q_t\geq0$,
 $a.s.$ for $t\in[0,T]$,
 and the surrender price
is greater than the maturity payment: $K>L>0$.
\end{assumption}

The first assumption $r_t\geq q_t\geq0$ is natural. If $K<L$, then
the pre-specified price $L$ is irrelevant, since the firm could
always call with the surrender price $K$ before the maturity to
avoid paying more (see \cite{SS1}). If $K=L$, as shown in
\cite{Yang2}, the terminal value in the effective domain (state
constraint) of the corresponding VI is just constant $K$, so the
problem is relatively standard to study, and the corresponding free
boundary is always monotonic. In this paper, we mainly consider the
case $K>L$ which results in the singular terminal value across the
free boundary, and this makes the problem much more complicated and
involved. Moreover, the free boundary is not necessarily monotonic
in this case.

In the following, we represent the optimal strategy
$(\tau^*_t,\theta^*_t)$ and the price $V_t$ of the convertible bond
in terms of the solution of reflected BSDE. Note that it is not
always true that the conversion payoff (the lower obstacle) $\gamma
S$ is dominated by the surrender price (the upper obstacle) $K$, so
we have to resort to a reflected BSDE with the state constraint as
follows.

\begin{lemma}
Let $(Y,Z, K^+,K^-)$ be the  unique solution of the following
reflected BSDE on $[t,\sigma_t^*]$:
 \be\label{reflectedBSDE}
 \left\{
 \begin{array}{l}
 {\displaystyle Y_s=\max\{\gamma S_T,L\}I_{\{\sigma_t^*=T\}}+\gamma S_{\sigma_t^*}I_{\{\sigma_t^*<T\}}
 +\int_s^{\sigma_t^*}(c_u-r_uY_u)du-\int_s^{\sigma_t^*}Z_udW_u}
 \vspace{2mm}\\
 \qquad +\displaystyle\int_s^{\sigma_t^*}dK_u^+-\int_s^{\sigma_t^*}dK_u^-,\ \ \ \gamma S_s\leq Y_s\leq
 K,\ \ \ \mbox{for}\;s\in[\,t,\sigma_t^*\,],
 \vspace{2mm} \\
 {\displaystyle
 \int_t^{\sigma_t^*}(Y_u-\gamma
 S_u)dK^+_u=\int_t^{\sigma_t^*}(K-Y_u)dK^-_u=0},
 \end{array}
 \right.
 \ee
where
$$\sigma_t^*=\inf\{u\geq t: S_u\geq K/\gamma\}\wedge T.$$
Then the value of the convertible bond is given by $V_t=Y_t$ and the
optimal strategy is given by
$$
 \tau^*_t=\inf\{s\geq t:Y_s=K\}\wedge\sigma_t^*,\qquad
 \theta^*_t=\inf\{s\geq t:Y_s=\gamma S_s\}\wedge\sigma_t^*.
$$
\end{lemma}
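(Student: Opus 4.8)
The plan is to read the doubly reflected BSDE \eqref{reflectedBSDE} as the value process of the Dynkin game after it has been frozen at the stopping time $\sigma^*_t$, and then to verify directly, via It\^o's formula together with the Skorokhod flat-off conditions, that the capped hitting times $(\tau^*_t,\theta^*_t)$ form a Nash equilibrium for the original game on $[t,T]$. The role of $\sigma^*_t$ is to cut the horizon exactly at the first instant the conversion value $\gamma S$ reaches the call value $K$: before that instant the lower obstacle $\gamma S$ and the upper obstacle $K$ are correctly ordered, which is precisely what makes a two-obstacle reflected BSDE solvable, whereas past it the obstacles would cross and no such representation could exist.

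First I would establish existence and uniqueness of $(Y,Z,K^+,K^-)$ on the stochastic interval $[t,\sigma^*_t]$. There one has $S_u\le K/\gamma$, hence $\gamma S_s\le K$ throughout, with the terminal datum equal to the upper obstacle on $\{\sigma^*_t<T\}$ (since $\gamma S_{\sigma^*_t}=K$) and squeezed between the obstacles on $\{\sigma^*_t=T\}$ (using $L<K$ and $\gamma S_T\le K$). The driver $f(s,y)=c_s-r_sy$ is Lipschitz in $y$ with bounded coefficients by Assumption \ref{Assumption1}, and both obstacles are bounded and continuous; the classical theory of doubly reflected BSDEs then yields a unique solution obeying the two flat-off conditions $\int(Y-\gamma S)\,dK^+=\int(K-Y)\,dK^-=0$.

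The heart of the argument is a verification by It\^o's formula applied to $R(t,s)Y_s$, which, after taking conditional expectation to kill the $Z$-martingale, gives for any stopping times $\tau,\theta$ and $\rho=\tau\wedge\theta$
\[ Y_t=\mathbb E\Big[R(t,\rho)Y_\rho+\int_t^{\rho}R(t,u)c_u\,du+\int_t^{\rho}R(t,u)\,dK^+_u-\int_t^{\rho}R(t,u)\,dK^-_u\,\Big|\,\mathcal F_t\Big]. \]
Because $\tau^*_t\le\sigma^*_t$, for an arbitrary $\theta\in\mathcal U_{t,T}$ the time $\tau^*_t\wedge\theta$ never exceeds $\sigma^*_t$, so $Y$ is only evaluated inside its domain; moreover $Y<K$ on $[t,\tau^*_t)$ forces $K^-$ to be flat there, while $dK^+\ge0$ and $Y_\rho\ge\gamma S_\rho$ (with $Y_{\tau^*_t}=K$ when the firm calls first) together give $\mathbb E[P(\tau^*_t,\theta)\,|\,\mathcal F_t]\le Y_t$. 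Symmetrically, with $\theta^*_t\le\sigma^*_t$ fixed, $K^+$ is flat on $[t,\theta^*_t)$ while $dK^-\ge0$ and $Y\le K$ yield $\mathbb E[P(\tau,\theta^*_t)\,|\,\mathcal F_t]\ge Y_t$ for every $\tau\in\mathcal U_{t,T}$. Combined with the trivial bound $\underline V_t\le\overline V_t$, these two inequalities force $\overline V_t=\underline V_t=Y_t$, exhibiting $(\tau^*_t,\theta^*_t)$ as a Nash equilibrium and giving $V_t=Y_t$ through \eqref{eq2.4}.

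I expect the main obstacle to be the third step, and within it the conceptual point that $\sigma^*_t$ is exactly the truncation keeping both $\tau^*_t\wedge\theta$ and $\tau\wedge\theta^*_t$ inside $[t,\sigma^*_t]$, so that $Y$ is never needed outside its domain even when a player's nominal stopping rule lies in $(\sigma^*_t,T]$; this is what lets the comparison extend from stopping times bounded by $\sigma^*_t$ to all of $\mathcal U_{t,T}$ without loss. The remaining delicate bookkeeping is to identify $R(t,\rho)Y_\rho$ with the correct branch of the payoff $P$ across the indicator sets $\{\tau<\theta\}$, $\{\theta\le\tau,\theta<T\}$ and $\{\tau\wedge\theta=T\}$, and across the two regimes $\{\sigma^*_t<T\}$ and $\{\sigma^*_t=T\}$, so that in each case the boundary value of $Y$ matches the surrender price $K$, the conversion value $\gamma S_\theta$, or the maturity payment $\max\{L,\gamma S_T\}$ as required.
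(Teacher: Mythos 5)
Your proposal is correct and follows essentially the route the paper itself relies on: the paper omits the proof and cites Theorem 4.1 of Cvitani\'c and Karatzas \cite{Karatzas} with the fixed maturity $T$ replaced by the random horizon $\sigma_t^*$, and your argument -- existence of the doubly reflected BSDE on $[\,t,\sigma_t^*\,]$ where the obstacles are correctly ordered, followed by the It\^o/flat-off-condition verification that the capped hitting times form a saddle point, with the observation that $\gamma S_{\sigma_t^*}=K$ on $\{\sigma_t^*<T\}$ reconciles the two terminal payoffs -- is precisely that argument written out. In particular, your key point that $\tau^*_t\wedge\theta$ and $\tau\wedge\theta^*_t$ never leave $[\,t,\sigma_t^*\,]$, so the comparison extends to all of $\mathcal{U}_{t,T}$, is exactly what justifies the paper's reduction to the random maturity $\sigma_t^*$.
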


The proofs of the above representation result and the well posedness
of (\ref{reflectedBSDE}) are similar to Theorem 4.1 of Cvianic and
Karatzas \cite{Karatzas} with the fixed maturity $T$ replaced by the
random maturity $\sigma_t^*$, so we omit the proofs and refer to
\cite{Karatzas} for the details. Our main result in this section is
to reduce (\ref{reflectedBSDE}) into an optimal stopping time
problem with state constraint.

First note that if $S_t\geq K/\gamma$, i.e. the lower obstacle is
greater than the upper obstacle, then $\sigma_t^*=t$, and in this
case, both the investor and the firm will choose to terminate the
contract at the same time $\tau_t^*=\theta_t^*=t$, and the value of
the convertible bond is nothing but $V_t=\gamma S_t$. Hence, in the
following we only consider the case $S_t< K/\gamma$.

\begin{proposition}\label{proposition1} Suppose that
{$c_s\leq r_s K$ a.s. on $s\in[t,\sigma_t^*]$. }
Then the value of the convertible bond is given by $V_t=Y_t^1$,
where $Y^1$ solves the following reflected BSDE:
\be\label{reflectedBSDE1}
 \left\{
 \begin{array}{l}
 {\displaystyle Y_s^1=\max\{\gamma S_T,L\}I_{\{\sigma_t^*=T\}}+\gamma S_{\sigma_t^*}I_{\{\sigma_t^*<T\}}
 +\int_s^{\sigma_t^*}(c_u-r_uY_u^1)du-\int_s^{\sigma_t^*}Z_u^1dW_u}
 \vspace{2mm}\\
 \qquad +\displaystyle\int_s^{\sigma_t^*}dK_u^{1,+},\ \ \ Y_s^1\geq\gamma S_s,\ \ \ \mbox{for}\;s\in[\,t,\sigma_t^*\,],
 \vspace{2mm} \\
 {\displaystyle\int_t^{\sigma_t^*}(Y_u^1-\gamma
 S_u)dK^{1,+}_u=0}.
 \end{array}
 \right.
 \ee
 {In particular, if $c_s<r_sK$ a.s. on $s\in[t,\sigma_t^*]$, then
$Y_s^1<K$ on $s\in[t,\sigma_t^*)$}, so the optimal strategy is given
by
$$
 \tau^*_t=\sigma_t^*,\qquad
 \theta^*_t=\inf\{s\geq t:Y_s^1=\gamma S_s\}\wedge\sigma_t^*.
$$
\end{proposition}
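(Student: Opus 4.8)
The plan is to show that the solution $Y^1$ of the singly reflected BSDE (\ref{reflectedBSDE1}) automatically satisfies the upper constraint $Y^1_s\le K$, so that the quadruple $(Y^1,Z^1,K^{1,+},K^{1,-}\equiv0)$ is in fact a solution of the doubly reflected BSDE (\ref{reflectedBSDE}); the identification $V_t=Y_t=Y^1_t$ then follows at once from the uniqueness asserted in the Lemma. The existence and uniqueness of $Y^1$ are standard for reflected BSDEs with one (lower) obstacle, bounded random terminal time $\sigma_t^*$ and the Lipschitz driver $f(u,y)=c_u-r_uy$, so the only substantive tasks are to establish the bound $Y^1\le K$ and then to read off the optimal strategy.

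The heart of the argument is the estimate $Y^1_s\le K$ on $[t,\sigma_t^*]$, for which I would use the optimal stopping representation of the lower-obstacle reflected BSDE,
\[
 Y^1_s=\mathop{{\rm ess.sup}}_{\theta\in{\cal U}_{s,\sigma_t^*}}
 \mathbb{E}\Big[\int_s^{\theta}R(s,u)c_u\,du
 +R(s,\theta)\gamma S_\theta\,I_{\{\theta<\sigma_t^*\}}
 +R(s,\sigma_t^*)\,\Xi\,I_{\{\theta=\sigma_t^*\}}\,\Big|\,{\cal F}_s\Big],
\]
where $\Xi=\max\{\gamma S_T,L\}I_{\{\sigma_t^*=T\}}+\gamma S_{\sigma_t^*}I_{\{\sigma_t^*<T\}}$ and the discount factor $R$ absorbs the linear term $-r_uY_u$ of the driver. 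For every admissible $\theta$ the coupon part is controlled by the hypothesis $c_u\le r_uK$ through $\int_s^{\theta}R(s,u)c_u\,du\le K\int_s^{\theta}r_uR(s,u)\,du=K\,(1-R(s,\theta))$, while the payoff collected at $\theta$ never exceeds $R(s,\theta)K$: on $\{\theta<\sigma_t^*\}$ one has $\gamma S_\theta\le K$ because $S_\theta<K/\gamma$ before the barrier is reached, on $\{\theta=\sigma_t^*<T\}$ the payoff equals $\gamma S_{\sigma_t^*}=K$ by continuity of $S$, and on $\{\theta=\sigma_t^*=T\}$ one has $\gamma S_T\le K$ together with $L<K$, so $\Xi\le K$. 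Adding the two bounds yields an integrand $\le K$ for every $\theta$, whence $Y^1_s\le K$. (Equivalently, this says that the constant process $K$ is a supersolution of (\ref{reflectedBSDE1}), so the solution is dominated by it.) I would flag the verification $\Xi\le K$ on $\{\sigma_t^*=T\}$ as the one point needing care, since the maturity payoff $\max\{\gamma S_T,L\}$ is the only term not obviously bounded by $K$; it is saved by the fact that $\sigma_t^*=T$ forces $S_T\le K/\gamma$ and by the standing assumption $L<K$.

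Granting $\gamma S_s\le Y^1_s\le K$, the triple $(Y^1,Z^1,K^{1,+},0)$ satisfies every requirement of (\ref{reflectedBSDE}): both obstacle constraints hold, the upper reflection is absent so $\int(K-Y^1_u)\,dK^{-}_u=0$ trivially, and the lower Skorokhod condition is exactly the one built into (\ref{reflectedBSDE1}). Uniqueness of the solution of (\ref{reflectedBSDE}) then forces $Y=Y^1$ and $V_t=Y_t=Y^1_t$. For the optimal strategy, once one knows that $Y=Y^1$ does not reach $K$ before $\sigma_t^*$, the calling time $\tau^*_t=\inf\{s\ge t:Y_s=K\}\wedge\sigma_t^*$ collapses to $\sigma_t^*$, while $\theta^*_t$ is unchanged.

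That last reduction is precisely the strict statement, and I expect it to be the main obstacle, since strict inequalities are not automatic from the comparison principle. Under the strengthened hypothesis $c_u<r_uK$ I would upgrade $Y^1\le K$ to $Y^1_s<K$ on $[t,\sigma_t^*)$ by contradiction and localization. If $Y^1_{s_0}=K$ on an event of positive probability for some $s_0<\sigma_t^*$, then $Y^1_{s_0}=K>\gamma S_{s_0}$ (as $S_{s_0}<K/\gamma$), so continuity of $Y^1$ and $S$ produces a stopping time $s_1>s_0$ up to which the lower obstacle stays inactive; hence $dK^{1,+}\equiv0$ and $Y^1$ solves a plain linear BSDE on $[s_0,s_1]$. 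Writing $\Delta=Y^1-K$, the explicit linear representation
\[
 \Delta_{s_0}=\mathbb{E}\Big[R(s_0,s_1)\,\Delta_{s_1}
 +\int_{s_0}^{s_1}R(s_0,u)\,(c_u-r_uK)\,du\,\Big|\,{\cal F}_{s_0}\Big]
\]
has a nonpositive first term (since $Y^1\le K$ already) and a strictly negative second term (since $c_u<r_uK$), forcing $\Delta_{s_0}<0$ and contradicting $\Delta_{s_0}=0$. Therefore $Y^1_s<K$ for $s<\sigma_t^*$, which gives $\tau^*_t=\sigma_t^*$ and completes the proof.
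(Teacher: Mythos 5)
Your proof is correct, and it shares the paper's overall skeleton --- prove $Y^1\leq K$, conclude that $(Y^1,Z^1,K^{1,+},0)$ solves the doubly reflected BSDE (\ref{reflectedBSDE}) so that uniqueness gives $V_t=Y_t=Y^1_t$, then upgrade to the strict bound $Y^1<K$ by localizing on a stochastic interval where the lower obstacle is inactive so that $Y^1$ solves a plain linear BSDE there --- but the two key inequalities are obtained by genuinely different techniques. For $Y^1\leq K$ the paper argues by the comparison principle for reflected BSDEs: it introduces an auxiliary reflected BSDE with driver $r_uK-r_u\bar Y_u$ and terminal value $K$, whose solution is identically $(K,0,0)$, and compares. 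You instead invoke the Snell-envelope (optimal stopping) representation of the one-obstacle reflected BSDE and estimate the payoff of every stopping time directly, via $\int_s^{\theta}r_uR(s,u)\,du=1-R(s,\theta)$ together with the observations that $\gamma S_\theta<K$ before $\sigma_t^*$, $\gamma S_{\sigma_t^*}=K$ on $\{\sigma_t^*<T\}$, and $\max\{\gamma S_T,L\}\leq K$ on $\{\sigma_t^*=T\}$; this route is more self-contained in the estimation step and makes completely transparent where each hypothesis ($c_u\leq r_uK$, the state constraint $S<K/\gamma$, and $L<K$) enters, at the cost of invoking the representation theorem of El Karoui et al. For the strict inequality, the paper localizes exactly as you do (first hitting time of the lower obstacle wedged with $\sigma_t^*$) but then cites the strict comparison principle of BSDEs against the constant solution $(K,0)$ of an auxiliary BSDE; your explicit linear representation of $\Delta=Y^1-K$, with a nonpositive terminal term and a strictly negative integral term, is the same mechanism written out by hand rather than quoted as a black box. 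One stylistic advantage of the paper's choice is that its comparison-principle argument is reused almost verbatim, with the roles of the obstacles reversed, in Proposition \ref{proposition3}; your direct estimates would have to be redone there with the upper obstacle in place of the lower one.
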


\begin{proof}
 {We first prove that $Y^1_s\leq K$ on $s\in[t,\sigma_t^*]$.}
Then $(Y^1,Z^1,K^{1,+},0)$ is the solution to (\ref{reflectedBSDE}).
Indeed, consider the following auxiliary reflected BSDE:
\begin{equation*}
 \left\{
 \begin{array}{l}
 {\displaystyle \bar{Y}_s^1=K
 +\int_s^{\sigma_t^*}(r_uK-r_u\bar{Y}_u^1)du-\int_s^{\sigma_t^*}\bar{Z}_u^1dW_u}
 \displaystyle+\int_s^{\sigma_t^*}d\bar{K}_u^{1,+},\ \ \ \bar{Y}_s^1\geq\gamma S_s,\ \ \
\mbox{for}\;s\in[\,t,\sigma_t^*\,],
 \vspace{2mm} \\
 {\displaystyle\int_t^{\sigma_t^*}(\bar{Y}_u^1-\gamma
 S_u)d\bar{K}^{1,+}_u=0},
 \end{array}
 \right.
 \end{equation*}
which obviously has a unique solution
 $(\bar{Y}^1_s,\bar{Z}_s^1,\bar{K}_s^{1,+})=(K,0,0)$.
Since $$K\geq \max\{\gamma S_T,L\}I_{\{\sigma_t^*=T\}}+\gamma
S_{\sigma_t^*}I_{\{\sigma_t^*<T\}},$$ and
 {$r_sK-r_s Y_s^1\geq c_s-r_sY_s^1$ a.s. }on
$s\in[t,\sigma_t^*]$, the comparison principle of reflected BSDE
(see \cite{ElKaroui}) implies that $Y_s^1\leq \bar{Y}_s^1=K$ on
$s\in[t,\sigma_t^*]$.

Next we show that $Y_s^1<K$ on $s\in[t,\sigma_t^*)$
 {if $c_s<r_sK$ a.s. on $s\in[t,\sigma_t^*]$}
 . If not, there exits $\bar{s}\in[t,\sigma_t^*)$ such that
$Y_{\bar{s}}^1=K$. Note that we must have $Y^1_{\bar{s}}>\gamma
S_{\bar{s}}$ (otherwise {$\gamma S_{\bar{s}}\geq Y_{\bar{s}}^1=K$
would imply}
$\bar{s}=\sigma_t^*$). Define
$$\theta_{\bar{s}}^*=\inf\{s\geq {\bar{s}}:Y_s^1=\gamma S_s\}\wedge\sigma_t^*.$$
Then $Y_{\theta_{\bar{s}}^*}^1=\gamma S_{\theta_{\bar{s}}^*}\leq K$.
Since {$Y_s>\gamma S_s$ and $d\bar{K}_s^{1,+}=0$ on
$[\bar{s},\theta_{\bar{s}}^*)$,}
(\ref{reflectedBSDE1}) reads
$$Y_{\bar{s}}^1=Y_{\theta_{\bar{s}}^*}^1
 +\int_{\bar{s}}^{\theta_{\bar{s}}^*}(c_u-r_uY_u^1)du-\int_{\bar{s}}^{\theta_{\bar{s}}^*}Z_u^1dW_u.$$
Consider the following auxiliary BSDE:
$$\hat{Y}_{\bar{s}}^1=K
 +\int_{\bar{s}}^{\theta_{\bar{s}}^*}(r_uK-r_u\hat{Y}_u^1)du-\int_{\bar{s}}^{\theta_{\bar{s}}^*}\hat{Z}_u^1dW_u,$$
 which obviously has a unique solution
 $(\hat{Y}_s^1,\hat{Z}_s^1)=(K,0)$. Then the strict comparison
 principle of BSDE (see \cite{ElKaroui1997}) implies that $Y_{\bar{s}}^1<K$.
\end{proof}

From the above proposition, if $c_s < r_sK$, the value of the
convertible bond $V_t$ is strictly less than the surrender price $K$
before the termination of the contact, so the firm will not call the
bond back until the contract is terminated at $\sigma_t^*$ , and the
investor will always convert her bond first.

\begin{proposition}\label{proposition3} Suppose that $c_s\geq q_s K$ a.s. on $s\in[t,\sigma_t^*]$. Then the value of the convertible bond is
given by $V_t=Y_t^2$, where $Y^2$ solves the following reflected
BSDE: \be\label{reflectedBSDE3}
 \left\{
 \begin{array}{l}
 {\displaystyle Y_s^2=\max\{\gamma S_T,L\}I_{\{\sigma_t^*=T\}}+\gamma S_{\sigma_t^*}I_{\{\sigma_t^*<T\}}
 +\int_s^{\sigma_t^*}(c_u-r_uY_u^2)du-\int_s^{\sigma_t^*}Z_u^2dW_u}
 \vspace{2mm}\\
 \qquad -\displaystyle \int_s^{\sigma_t^*}dK_u^{2,-},\ \ \ Y_s^2\leq K,\ \ \ \mbox{for}\;s\in[\,t,\sigma_t^*\,],
 \vspace{2mm}\\
 {\displaystyle
 \int_t^{\sigma_t^*}(K-Y_u^2)dK^{2,-}_u=0}.
 \end{array}
 \right.
 \ee
 In particular, if $c_s> q_sK$ a.s. on $s\in[t,\sigma_t^*]$, then
$Y_s^2>\gamma S_s$ on $s\in[t,\sigma_t^*)$, so the optimal strategy
is given by
$$
 \tau^*_t=\inf\{s\geq t:Y_s^2=K\}\wedge\sigma_t^*,\qquad
 \theta^*_t=\sigma_t^*.
 $$
\end{proposition}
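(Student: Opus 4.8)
The plan is to run the proof of Proposition \ref{proposition1} in a mirror-symmetric way, interchanging the two obstacles: rather than bounding the solution \emph{above} by the upper obstacle $K$, I would bound $Y^2$ \emph{below} by the lower obstacle $\gamma S$, so that the upward reflection never fires and $(Y^2,Z^2,0,K^{2,-})$ becomes a solution of the two-obstacle equation (\ref{reflectedBSDE}). The one genuinely new ingredient is that the comparison process is now $\gamma S$ rather than the constant $K$ used for $Y^1$, so I would first record its backward dynamics: from (\ref{eq2.1}),
\[
\gamma S_s=\gamma S_{\sigma_t^*}+\int_s^{\sigma_t^*}\!\big[q_u(\gamma S_u)-r_u(\gamma S_u)\big]\,du-\int_s^{\sigma_t^*}\!\sigma_u(\gamma S_u)\,dW_u .
\]
Hence $\gamma S$ is the unique solution on $[t,\sigma_t^*]$ of a reflected BSDE with terminal value $\gamma S_{\sigma_t^*}$, driver $q_u(\gamma S_u)-r_u y$, upper obstacle $K$, and identically zero reflection, the last point being legitimate because $\gamma S_u\le K$ throughout $[t,\sigma_t^*]$ by the very definition of $\sigma_t^*$.

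First I would prove $Y^2_s\ge\gamma S_s$ on $[t,\sigma_t^*]$ by comparison. The terminal datum of (\ref{reflectedBSDE3}) dominates $\gamma S_{\sigma_t^*}$, since $\max\{\gamma S_T,L\}\ge\gamma S_T$ on $\{\sigma_t^*=T\}$ and the two agree on $\{\sigma_t^*<T\}$. For the drivers, the decisive observation is that $S_u<K/\gamma$ for $u<\sigma_t^*$ gives $\gamma S_u\le K$, which together with $q_u\ge0$ and the hypothesis $c_u\ge q_uK$ yields the pointwise inequality $c_u-r_uy\ge q_u(\gamma S_u)-r_uy$. Since both equations carry the same upper obstacle $K$, the comparison principle for reflected BSDE (see \cite{ElKaroui}) delivers $Y^2_s\ge\gamma S_s$. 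Consequently the lower reflection in (\ref{reflectedBSDE}) is never active, so $(Y^2,Z^2,0,K^{2,-})$ solves (\ref{reflectedBSDE}); by uniqueness of that equation, $V_t=Y_t=Y^2_t$.

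For the sharper claim $Y^2_s>\gamma S_s$ on $[t,\sigma_t^*)$ under $c_s>q_sK$, I would argue by contradiction exactly as for $Y^1$. If $Y^2_{\bar s}=\gamma S_{\bar s}$ at some $\bar s<\sigma_t^*$, then $\gamma S_{\bar s}<K$ forces $Y^2_{\bar s}<K$, so the calling time $\tau^*_{\bar s}=\inf\{s\ge\bar s:Y^2_s=K\}\wedge\sigma_t^*$ satisfies $\tau^*_{\bar s}>\bar s$, and on $[\bar s,\tau^*_{\bar s})$ one has $Y^2<K$ and hence $dK^{2,-}=0$, reducing (\ref{reflectedBSDE3}) to a plain BSDE there. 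Comparing it on this interval with the BSDE solved by $\gamma S$---whose terminal value $\gamma S_{\tau^*_{\bar s}}$ is dominated by $Y^2_{\tau^*_{\bar s}}$ and whose driver is now \emph{strictly} dominated because $c_u>q_uK\ge q_u(\gamma S_u)$---the strict comparison principle of BSDE (see \cite{ElKaroui1997}) forces $Y^2_{\bar s}>\gamma S_{\bar s}$, a contradiction. With $Y^2>\gamma S$ on $[t,\sigma_t^*)$ the conversion time collapses to $\theta^*_t=\sigma_t^*$ while $\tau^*_t$ retains its form $\inf\{s\ge t:Y^2_s=K\}\wedge\sigma_t^*$, which is precisely the asserted optimal strategy.

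The step I expect to be the main obstacle is the lower bound $Y^2\ge\gamma S$. Unlike in Proposition \ref{proposition1}, where the comparison process was the constant $K$, here I must track the backward dynamics of the non-constant process $\gamma S$ and, crucially, exploit the localization $\gamma S_u\le K$ provided by $\sigma_t^*$ to convert the global hypothesis $c_u\ge q_uK$ into the pointwise driver inequality $c_u\ge q_u(\gamma S_u)$ needed to invoke the comparison theorem. I would also take care that the version of the comparison principle used is the one for the upper-obstacle reflected BSDE---monotone increasing in both terminal value and driver---which is the exact mirror of the lower-obstacle statement applied to $Y^1$.
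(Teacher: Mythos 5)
Your proposal is correct and is essentially the paper's own proof: the paper likewise compares $Y^2$ with the process $\gamma S$, viewed as the (zero-reflection) solution of an auxiliary reflected BSDE with upper obstacle $K$, terminal value $\gamma S_{\sigma_t^*}$ and driver $\gamma q_u S_u - r_u y$, and uses exactly your driver inequality $\gamma q_u S_u - r_u y \leq q_u K - r_u y \leq c_u - r_u y$ (valid since $\gamma S_u \leq K$ up to $\sigma_t^*$ and $c_u \geq q_u K$) together with the comparison principle of \cite{ElKaroui} to get $Y^2 \geq \gamma S$, so that $(Y^2,Z^2,0,K^{2,-})$ solves (\ref{reflectedBSDE}). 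The strict inequality under $c_s > q_s K$ is also proved in the paper by the same contradiction argument you give: localizing on $[\bar s,\tau^*_{\bar s})$ where $dK^{2,-}=0$ and invoking the strict comparison principle of \cite{ElKaroui1997} against the plain BSDE solved by $\gamma S$.
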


\begin{proof} We first prove that $Y_s^2\geq \gamma S_s$ on
$s\in[t,\sigma_t^*]$. Then $(Y^2,Z^2,0,K^{2,-})$ is the solution to
(\ref{reflectedBSDE}). Indeed, consider the following auxiliary
reflected BSDE:
\begin{equation*}
 \left\{
 \begin{array}{l}
 {\displaystyle \bar{Y}_s^2=\gamma S_{\sigma_t^*}
 +\int_s^{\sigma_t^*}(\gamma q_uS_u-r_u\bar{Y}_u^2)du-\int_s^{\sigma_t^*}\bar{Z}_u^2dW_u}
 \displaystyle-\int_s^{\sigma_t^*}d\bar{K}_u^{2,-},\ \ \ \bar{Y}_s^2\leq K,\ \ \
\mbox{for}\;s\in[\,t,\sigma_t^*\,],
 \vspace{2mm} \\
 {\displaystyle\int_t^{\sigma_t^*}(K-\bar{Y}_u^2)d\bar{K}^{2,-}_u=0},
 \end{array}
 \right.
 \end{equation*}
which obviously has a unique solution
 $(\bar{Y}^2_s,\bar{Z}_s^2,\bar{K}_s^{2,-})=(\gamma S_s,\gamma
 \sigma_s{S_s},0)$. Since
 $$\gamma S_{\sigma_t^*}\leq \max\{\gamma S_T,L\}I_{\{\sigma_t^*=T\}}+\gamma
S_{\sigma_t^*}I_{\{\sigma_t^*<T\}},$$ and
 $
 \gamma q_sS_s-r_sY_s^2\leq  q_s K-r_s Y_s^2 \leq c_s-r_sY_s^2
 $ on $s\in[t,\sigma_t^*]$,
the comparison principle implies that $Y_s^2\geq \bar{Y}_s^2=\gamma
S_s$ on $s\in[t,\sigma_t^*]$.

Next we show that $Y_s^2>\gamma S_{s}$ on $s\in[t,\sigma_t^*)$ if
$c_s> q_sK$ a.s. on $s\in[t,\sigma_t^*]$. If not, there exits
$\bar{s}\in[t,\sigma_t^*)$ such that $Y_{\bar{s}}^2=\gamma
S_{\bar{s}}$. Note that we must have $Y^2_{\bar{s}}<K$ (otherwise
 {$\gamma S_{\bar{s}}= Y_{\bar{s}}^2\geq K$}
 would imply that $\bar{s}=\sigma_t^*$). Define
$$\tau_{\bar{s}}^*=\inf\{s\geq {\bar{s}}:Y_s^2=K\}\wedge\sigma_t^*.$$
Then $Y_{\tau_{\bar{s}}^*}^2=K\geq \gamma S_{\tau_{\bar{s}}^*}$.
Since $Y_s^2<K$, and $d\bar{K}_s^{2,-}=0$ on
$[\bar{s},\tau_{\bar{s}}^*)$, (\ref{reflectedBSDE3}) reads
$$Y_{\bar{s}}^2=Y_{\tau_{\bar{s}}^*}^2
 +\int_{\bar{s}}^{\tau_{\bar{s}}^*}(c_u-r_uY_u^2)du-\int_{\bar{s}}^{\tau_{\bar{s}}^*}Z_u^2dW_u.$$
Consider the following auxiliary BSDE:
$$\hat{Y}_{\bar{s}}^2=\gamma S_{\tau_{\bar{s}}^*}
 +\int_{\bar{s}}^{\tau_{\bar{s}}^*}(\gamma q_uS_u-r_u\hat{Y}_u^2)du-\int_{\bar{s}}^{\tau_{\bar{s}}^*}\hat{Z}_u^2dW_u,$$
 which obviously has a unique solution
 $(\hat{Y}_s^2,\hat{Z}_s^2)=(\gamma S_s,\gamma \sigma_s{S_s})$. Then the strict comparison
 principle implies that $Y_{\bar{s}}^1>\gamma S_{\bar{s}}$.
\end{proof}

From the above proposition, if $c_s>q_sK$, the value of the
convertible bond $V_t$ is strictly larger than the converting value
$\gamma S_t$ before the termination of the contact, so the investor
will not convert her bond until the contract is terminated at
$\sigma_t^*$ , and the firm will always call the bond first.

 {By repeating the arguments as in the proofs of Propositions
 \ref{proposition1} and \ref{proposition3}, we obtain that
 the price can be represented as the solution of the
 following BSDE~\eqref{reflectedBSDE2} if $q_s K\leq c_s\leq r_sK$. In particular, }
if $q_s K<c_s< r_sK$, then the value $V_s$ of the convertible bond
is bounded between $(\gamma S_s, K)$ before the termination of the
contact. Hence, neither the investor will convert her bond nor the
firm will call the bond back until the contract is terminated at
$\sigma_t^*$.


\begin{proposition}\label{proposition2} Suppose that
 {$q_s K\leq c_s\leq r_sK$ a.s. }on
$s\in[t,\sigma_t^*]$. Then the value of the convertible bond is
given by $V_t=Y_t^3$, where $Y^3$ solves the following BSDE on
$[t,\sigma_t^*]$:
\begin{equation}\label{reflectedBSDE2}
Y_s^3=\max\{\gamma S_T,L\}I_{\{\sigma_t^*=T\}}+\gamma
S_{\sigma_t^*}I_{\{\sigma_t^*<T\}}
 +\int_s^{\sigma_t^*}(c_u-r_uY_u^3)du-\int_s^{\sigma_t^*}Z_u^3dW_u.
\end{equation}
In particular, if $q_s K< c_s< r_sK$ a.s. on $s\in[t,\sigma_t^*)$,
then $Y_s^3\in(\gamma S_s,K)$ on $s\in[t,\sigma_t^*)$, so the
optimal strategy is given by
$$
 \tau^*_t=\theta_t^*=\sigma_t^*.
$$

\end{proposition}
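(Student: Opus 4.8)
The plan is to prove that the solution $Y^3$ of the plain (unreflected) BSDE \eqref{reflectedBSDE2} automatically obeys the two-sided obstacle constraint $\gamma S_s\le Y^3_s\le K$ on $[t,\sigma^*_t]$. Granting this, the quadruple $(Y^3,Z^3,0,0)$---with both reflecting processes identically zero---solves the double-obstacle reflected BSDE \eqref{reflectedBSDE}, and its Skorokhod conditions hold trivially because $dK^+\equiv dK^-\equiv 0$. The uniqueness of the solution of \eqref{reflectedBSDE} then forces $Y=Y^3$, so that $V_t=Y_t=Y^3_t$. In this way the entire argument reduces to the two one-sided bounds, which I would obtain by transcribing almost verbatim the two comparison arguments already used in Propositions \ref{proposition1} and \ref{proposition3}; the only new feature is that they must now be run simultaneously for a single process.

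For the upper bound $Y^3_s\le K$ I use the hypothesis $c_s\le r_sK$. Comparing \eqref{reflectedBSDE2} with the auxiliary BSDE whose driver is $r_uK-r_u\hat Y_u$ and terminal datum is $K$ (explicit solution $(\hat Y,\hat Z)=(K,0)$), the driver inequality $c_u-r_uY^3_u\le r_uK-r_uY^3_u$ holds, and the terminal value of \eqref{reflectedBSDE2} never exceeds $K$: on $\{\sigma^*_t<T\}$ it equals $\gamma S_{\sigma^*_t}=K$, while on $\{\sigma^*_t=T\}$ it equals $\max\{\gamma S_T,L\}\le K$, since $L<K$ and $\gamma S_T\le K$ by continuity of $S$ and the definition of $\sigma^*_t$. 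The comparison principle for BSDEs (El Karoui et al. \cite{ElKaroui1997}) gives $Y^3_s\le K$. For the lower bound $Y^3_s\ge\gamma S_s$ I use $c_s\ge q_sK$, comparing instead with the BSDE driven by $\gamma q_uS_u-r_u\bar Y_u$ and terminal datum $\gamma S_{\sigma^*_t}$, whose explicit solution is $(\gamma S,\gamma\sigma S)$. Here the crucial chain is $\gamma q_sS_s\le q_sK\le c_s$, where the first inequality uses $\gamma S_s\le K$ on $[t,\sigma^*_t]$ (immediate from the definition of $\sigma^*_t$) together with $q_s\ge0$, and the second is the standing hypothesis; the comparison principle then yields $Y^3_s\ge\gamma S_s$.

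For the strict refinement under $q_sK<c_s<r_sK$ I would replace the two comparisons by their strict versions. Because $Y^3$ solves a genuine unreflected BSDE, the strict comparison principle (El Karoui et al. \cite{ElKaroui1997}) applies directly, so there is no need for the localized contradiction argument used in the proofs of Propositions \ref{proposition1} and \ref{proposition3}. The driver gap is $r_sK-c_s>0$ for the upper bound and $c_s-\gamma q_sS_s\ge c_s-q_sK>0$ for the lower bound, both strict on the nondegenerate stochastic interval $[t,\sigma^*_t)$; this forces $Y^3_s\in(\gamma S_s,K)$ there. Consequently $\inf\{s\ge t:Y^3_s=K\}\ge\sigma^*_t$ and $\inf\{s\ge t:Y^3_s=\gamma S_s\}\ge\sigma^*_t$, whence $\tau^*_t=\theta^*_t=\sigma^*_t$. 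The one point deserving genuine care, rather than mechanical copying, is checking that both bounds can hold for the \emph{same} process $Y^3$: this is exactly where the two-sided condition $q_sK\le c_s\le r_sK$ is indispensable, and why neither one-sided comparison alone would close the argument.
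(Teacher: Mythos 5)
Your proposal is correct and follows essentially the paper's own (implicit) proof: the paper gives no separate argument for this proposition, stating only that it follows ``by repeating the arguments as in the proofs of Propositions \ref{proposition1} and \ref{proposition3}'', which is exactly what you do --- upper bound $Y^3\leq K$ via comparison with the constant solution $(K,0)$ under $c_s\leq r_sK$, lower bound $Y^3\geq \gamma S$ via comparison with $(\gamma S,\gamma\sigma S)$ under $c_s\geq q_sK$, and then identification $V_t=Y^3_t$ through uniqueness of the two-obstacle reflected BSDE with $K^{+}\equiv K^{-}\equiv 0$. Your only deviation is a legitimate streamlining of the strict part: since $Y^3$ solves an unreflected BSDE on all of $[t,\sigma_t^*]$, you may invoke the strict comparison principle directly, whereas the localized contradiction argument in Propositions \ref{proposition1} and \ref{proposition3} is needed there only because the reflection term must first be switched off on a suitable stochastic subinterval.
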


\section{The Optimal Strategy of Convertible Bonds}

In this section, we further consider the optimal strategy of
convertible bonds in the Markovian case by investigating the
properties of the corresponding calling/conversion boundaries.

\begin{assumption}
Assume that all the coefficients are constants:
$c_t=c,r_t=r>0,q_t=q,$ and  $\sigma_t=\sigma$ for $t\in[0,T]$.
\end{assumption}

Due to the above Markovian assumption, we know that there exists a
function $V(S,t)$ such that $V_t=V(S_t,t)$. Define the following
domains
\begin{align*}
\text{Conversion domain \bf CV}=&\  \{(S,t)\in(0,\infty)\times[0,T):V(S,t)=\gamma S\};\\
\text{Calling domain \bf CL}=&\  \{(S,t)\in(0,\infty)\times[0,T):V(S,t)=K\neq\gamma S\};\\
\text{Continuation domain {\bf CT}}=&\
\{(S,t)\in(0,\infty)\times[0,T):\gamma S<V(S,t)<K\}.
\end{align*}
The intersecting line between the conversion domain ${\bf CV}$ and
the continuation domain ${\bf CT}$ is called the conversion boundary
$C(t)$, while the intersecting line between the calling domain ${\bf
CL}$ and the continuation domain ${\bf CT}$ is called the calling
boundary $H(t)$.

From the Feynman-Kac formula for the solution of reflected BSDE and
the viscosity solution of VI (see Section 8 of \cite{ElKaroui}),
Proposition \ref{proposition1} implies that if {$c\leq q K\ (\leq
rK)$£¬}
 then $V(S,t)=V^1(S,t)$ where $V^1$ solves the
following VI with the state constraint: \be \label{eq2.10}
 \left\{
 \begin{array}{l}
 \p_t V^1+{\cal L}_0 V^1=-c,
 \hspace{2cm}\mbox{if}\;\;V^1>\gamma S\;\;\mbox{and}\;\;
 (S,\,t)\in D_T,
 \vspace{2mm} \\
 \p_t V^1+{\cal L}_0 V^1\leq -c,
 \hspace{2cm}\mbox{if}\;\;V^1=\gamma S\;\;\mbox{and}\;\;
 (S,\,t)\in D_T,
 \vspace{2mm} \\
 V^1\left({K/\gamma},t\right)\;=K,
 \hspace{3.9cm}0\leq t\leq T,
 \vspace{2mm} \\
 V^1(S,T)\;=\max\{L,\gamma S\},\hspace{2.3cm}
 0\leq S\leq {K/\gamma},
 \end{array}
 \right.
 \ee
where $${\cal L}_0 V^1={\frac{\sigma^2}2}
 S^2\,\p_{SS} V^1 +\left(\,r-q
 \,\right)S\,\p_{S} V^1-rV^1, \quad D_T{=}(0,K/\gamma)\times [\,0,T).$$
 Herein, $D_T$ is the effective domain
(the state constraint) of our problem, since in the domain
$[K/\gamma,\infty)\times[\,0,T)$, $V(S,t)=\gamma S$, so the investor
will always choose to convert, and
$[K/\gamma,\infty)\times[0,T)\subset\bf CV$. Moreover, if $c<qK$,
Proposition \ref{proposition1} also implies that $V^1(t,S)=Y^1_t<K$
on $(t,S)\in D_T$. Hence, we have
${\bf CL}\cap D_T=\O$.

Similarly, Proposition \ref{proposition3} implies that if
 {$c\geq rK\ (\geq qK)$,}
 then $V(S,t)=V^2(S,t)$ where $V^2$ solves the
following VI with the state constraint: \be \label{eq2.101}
 \left\{
 \begin{array}{l}
 \p_t V^2+{\cal L}_0 V^2=-c,
 \hspace{2cm}\mbox{if}\;\;V^2<K\;\;\mbox{and}\;\;
 (S,\,t)\in D_T,
 \vspace{2mm} \\
 \p_t V^2+{\cal L}_0 V^2\geq -c,
 \hspace{2cm}\mbox{if}\;\;V^2=K\;\;\mbox{and}\;\;
 (S,\,t)\in D_T,
 \vspace{2mm} \\
 V^2\left({K/\gamma},t\right)\;=K,
 \hspace{3.9cm}0\leq t\leq T,
 \vspace{2mm} \\
 V^2(S,T)\;=\max\{L,\gamma S\},\hspace{2.3cm}
 0\leq S\leq {K/\gamma}.
 \end{array}
 \right.
 \ee
Moreover, if $c>rK$, Proposition \ref{proposition3} also implies
that
 {$V^2(t,S)=Y^2_t>\gamma S$ on $(t,S)\in D_T$, so ${\bf CV}\cap D_T=\O$.}

Finally, if
 {$qK\leq c\leq rK$,}
 then $V(S,t)=V^3(S,t)$ where $V^3$ solves
the following Dirichlet problem: \be \label{eq2.5}
 \left\{
 \begin{array}{l}
 \p_t V^3+{\cal L}_0 V^3=-c,
 \hspace{3.2cm} \mbox{in}\; D_T,\;
 \vspace{2mm} \\
 V^3\left({K/\gamma},t\right)\;=K,
 \hspace{3.4cm}0\leq t\leq T,
 \vspace{2mm} \\
 V^3(S,T)\;=\max\{L,\gamma S\},\hspace{2cm}
 0\leq S\leq {K/\gamma}.
 \end{array}
 \right.
 \ee
Moreover, the strong maximum principle (see \cite{Evans}) implies
that $V^3(t,S)=Y_t^3\in(\gamma S,K)$ on $(t,S)\in D_T$ (not only for
the case $qK<c<rK$).

Therefore, the analysis of the calling/conversion strategies boils
down to the properties of the free boundaries imbedded in the above
three PDE problems.

The {VI} (\ref{eq2.101}) for the case $c>rK$ has been studied in
\cite{Yan}. In such a case, the bondholder will not convert in the
domain $(S,t)\in D_T$, and the calling boundary $H(t)$ is always
monotonic (See Figure $3.1$). The problem is therefore relatively
standard. The PDE (\ref{eq2.5}) for the case
 {$qK\leq c\leq rK$}
 is trivial in the sense that neither the bondholder will
convert nor the firm will call in the domain $(S,t)\in D_T$ (See
Figure $3.2$).
 {We leave the explicit solution of the PDE (\ref{eq2.5}) in Appendix B.}
In this paper, we mainly consider the VI (\ref{eq2.10}) for the case
$c< qK$. The situation in such a case is much more complicated and
involved (See Figure $3.3-3.5$). The conversion boundary $C(t)$ may
even lose the monotonicity property in such a case due to the
singular payoff.

\begin{picture}(280,120)(150,0)
 \put(180,10){\vector(1,0){120}}
 \put(180,6){\vector(0,1){105}}
 \put(180,100){\line(1,0){100}}
 \put(178,98){$\bullet$}\put(165,98){$T$}
 \textcolor[rgb]{0.00,0.00,1.00}{\thicklines{\qbezier(180,25)(220,30)(243,70)}}
 \put(240,68){$\bullet$}
 \put(178,23){$\bullet$}
 \put(180,70){$\gamma\,S<V<K$}\put(260,35){$V=\gamma\,S$}\put(205,12){$V=K$}
 \put(183,108){$t$}\put(305,10){$S$}
 \put(177,7){$\bullet$}\put(165,8){$O$}
 \put(241,7){$\bullet$}\put(241,-5){$K/\gamma$}
 \textcolor[rgb]{1.00,0.00,0.00}{\thicklines\put(243,10){\line(0,1){60}}}
 \textcolor[rgb]{1.00,0.00,0.00}{\thicklines\put(239,70){\line(0,1){30}}}
 \put(200,82){${\bf CT}$}\put(260,50){${\bf CV}$}
 \put(215,25){${\bf CL}$}
 \put(195,40){${\bf H}(t)$} \put(230,87){${\bf C}(t)$}
\end{picture}
\begin{picture}(160,120)(210,0)
 \put(180,10){\vector(1,0){120}}
 \put(180,6){\vector(0,1){105}}
 \put(180,100){\line(1,0){100}}
 \put(180,30){$\gamma\,S<V<K$}\put(260,30){$V=\gamma\,S$}
 \put(178,98){$\bullet$}\put(165,98){$T$}
 \put(183,108){$t$}\put(305,10){$S$}
 \put(177,7){$\bullet$}\put(165,12){$O$}
 \put(241,7){$\bullet$}\put(241,-5){$K/\gamma$}
 \textcolor[rgb]{1.00,0.00,0.00}{\thicklines\put(243,10){\line(0,1){90}}}
 \put(200,45){${\bf CT}$}\put(270,45){${\bf CV}$}
 \put(235,75){${\bf C}(t)$}
\end{picture}
 \begin{center}
 Figure$\!$ 3.1.$\;\;\;c>rK$
 \hspace{4.2cm}$\;\;$
 Figure$\!$ 3.2.$\;\;\;qK\leq c\leq rK$
 \end{center}
 \begin{picture}(280,120)(135,0)
 \put(180,10){\vector(1,0){120}} \put(180,6){\vector(0,1){105}}
 \put(180,100){\line(1,0){100}}\put(178,98){$\bullet$}\put(165,98){$T$}
 \put(183,108){$t$}\put(305,10){$S$}
 \textcolor[rgb]{1.00,0.00,0.00}{\thicklines\qbezier(220,100)(260,80)(263,30)
 \put(263,10){\line(0,1){20}}}
 \put(177,7){$\bullet$}\put(182,14){$O$}
 \put(218,97){$\bullet$}
 \put(261,7){$\bullet$}\put(261,-5){$K/\gamma$}
 \put(210,70){${\bf CT}$}\put(270,70){${\bf CV}$}
 \put(185,55){$\gamma\,S<V<K$}\put(265,55){$V=\gamma\,S$}
 \put(254,20){${\bf C}(t)$}
 \end{picture}
 \begin{picture}(160,120)(200,0)
 \put(180,10){\vector(1,0){120}} \put(180,6){\vector(0,1){105}}
 \put(180,100){\line(1,0){100}}\put(178,98){$\bullet$}\put(165,98){$T$}
 \put(183,108){$t$}\put(305,10){$S$}
 \textcolor[rgb]{1.00,0.00,0.00}{\thicklines\qbezier(220,100)(245,60)(255,9)}
 \put(177,7){$\bullet$}\put(182,14){$O$}
 \put(218,97){$\bullet$}
 \put(265,7){$\bullet$}\put(265,-5){$K/\gamma$}
 \put(210,65){${\bf CT}$}\put(270,65){${\bf CV}$}
 \put(180,50){$\gamma\,S<V<K$}\put(260,50){$V=\gamma\,S$}
 \put(243,15){${\bf C}(t)$}
 \end{picture}
 \begin{center}
 $\;$Figure$\!$ 3.3.$\;\;\;rL\leq c<qK,\,c>{(\alpha_+-1)\,rK\over \alpha_+}$
 $\qquad\qquad$
 Figure$\!$ 3.4.$\;\;\;rL\leq c<qK,\,c\leq {(\alpha_+-1)rK\over \alpha_+}$
 \end{center}

 \begin{picture}(260,130)(30,0)
 \put(180,10){\vector(1,0){120}} \put(180,6){\vector(0,1){105}}
 \put(180,100){\line(1,0){100}}
 \put(178,98){$\bullet$}\put(165,98){$T$}
 \put(183,108){$t$}\put(305,10){$S$}
 \textcolor[rgb]{1.00,0.00,0.00}{\thicklines\qbezier(220,99)(255,85)(260,60)
 \qbezier(260,60)(255,45)(235,40)
 \qbezier(235,40)(206,35)(205,10)}
 \put(177,7){$\bullet$}\put(182,14){$O$}
 \put(219,96){$\bullet$}\put(210,104){$L/\gamma$}
 \put(202,7){$\bullet$}
 \put(268,7){$\bullet$}\put(270,-5){$K/\gamma$}
 \put(220,70){${\bf CT}$}\put(270,70){${\bf CV}$}
 \put(190,55){$\gamma\,S<V<K$}\put(265,55){$V=\gamma\,S$}
 \put(198,15){${\bf C}(t)$}
\end{picture}
 \begin{center}
 Figure$\!$ 3.5.$\;\;\;c\leq rL(\alpha_+-1)/\alpha_+$
 \end{center}

\subsection{Properties of the Conversion Boundary}

In this subsection, we prove the properties of the free boundary
$C(t)$ of (\ref{eq2.10}), such as its position, asymptotic property,
monotonicity property, regularity property etc.  We first show in
Theorem \ref{th4.1} that the solution $V^1$ is not only the
viscosity solution, but also the strong solution to (\ref{eq2.10}):
$V^1\in W^{2,\,1}_p(D_T)\cap C(\overline{D_T}\,)$ with $p>1$.

Since (\ref{eq2.10}) is degenerate, we first transform it into a
familiar non-degenerate VI via the following transformation:
 \be                                                                            \label{eq3.1}
 u(x,\tau)=V^1(S,\,t),\qquad \tau=T-t,\qquad x=\ln S-\ln K+\ln\gamma.
 \ee
Then it is not difficult to check that $u$ is governed by
\be
\label{eq4.1}
 \left\{
 \begin{array}{l}
 \p_\tau u-{\cal L} u=c
 \qquad\mbox{if}\;u>Ke^x\;\;\mbox{and}\;\;
 (x,\,\tau)\in \Omega_T,
 \vspace{2mm} \\
 \p_\tau u-{\cal L} u\geq c
 \qquad\mbox{if}\;u=Ke^x\;\;\mbox{and}\;\;
 (x,\,\tau)\in \Omega_T,
 \vspace{2mm} \\
 u(0,\,\tau)\;=K,\hspace{2.6cm}
 0\leq \tau\leq T,
 \vspace{2mm} \\
 u(x,0)\;=\max\{L,Ke^x\},\hspace{1cm}
 x\leq 0,
 \end{array}
 \right.
 \ee
 where
 \be                                                                                \label{eq3.3}
 {\cal L} u={\frac{\sigma^2}2}
 \,\p_{xx} u +\left(\,r-q-{\frac{\sigma ^2}2}
 \,\right)\,\p_x u-ru,\quad
 \Omega_T{=}(-\infty,0)\times(0,T\,].
 \ee

 \begin{thm}                                                                           \label{th4.1}
 For the case $c< qK$, the VI (\ref{eq4.1}) has a unique
 strong solution $u\in W_{p,\,loc}^{2,\,1}(\Omega_T)\cap C(\,\overline{\Omega_T}\,)$
 with $p>1$. Moreover, $\p_x u\in C(\Omega_T)$ and
 we have the following estimates:
 \be                                                                                \label{eq4.11}
 &\displaystyle{\max\left\{Ke^x,{c\over r}
 +{rL-c\over r}\,e^{-r\tau}\right\}
 \leq u\leq K}
 &\mbox{in}\;\;\overline{\Omega}_T,
 \\[2mm]                                                                             \label{eq4.12}
 &0\leq\p_x  u\leq K\,e^{x}
 &\mbox{in}\;\;\overline{\Omega}_T.
 \ee
 If furthermore $c\geq rL$ holds, we also have the following estimate:
 \begin{equation}                                                                                  \label{eq4.13}
 \p_\tau  u \geq 0\;\; a.e.\
 \mbox{in}\;\;\Omega_T.
 \end{equation}
\end{thm}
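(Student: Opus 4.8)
The plan is to prove Theorem~\ref{th4.1} by the classical penalization method together with a bounded--domain approximation for the existence, uniqueness and regularity, and then to extract the four estimates \eqref{eq4.11}--\eqref{eq4.13} by comparison and by differentiating the equation. Since the log--transform \eqref{eq3.1} has turned the degenerate operator into the uniformly parabolic, constant--coefficient ${\cal L}$ of \eqref{eq3.3}, I would replace the constraint $u\geq Ke^x$ by a penalty term $\beta_\epsilon(u_\epsilon-Ke^x)$, with $\beta_\epsilon\geq0$ smooth, vanishing on $[0,\infty)$ and blowing up on $(-\infty,0)$ as $\epsilon\to0$, solve the resulting semilinear equation on the truncated strip $(-N,0)\times(0,T]$ with the explicit lower barrier at $x=-N$ and a smoothed version of the Lipschitz datum $\max\{L,Ke^x\}$ at $\tau=0$, and establish bounds uniform in $(\epsilon,N)$. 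The interior $W^{2,1}_p$ estimate then follows from the $L^p$ theory once $\beta_\epsilon(u_\epsilon-Ke^x)$ is bounded in $L^\infty_{loc}$, which is routine given the $C^0$ bound below; passing to the limit $N\to\infty$ and $\epsilon\to0$ and invoking the compact embedding yields the strong solution $u\in W^{2,1}_{p,loc}(\Omega_T)\cap C(\overline{\Omega_T})$, with $\p_x u\in C(\Omega_T)$ supplied by the gradient bound. Uniqueness is the comparison principle for the VI.

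For \eqref{eq4.11} I would argue purely at the VI level via explicit sub/super--solutions. The constant $K$ satisfies $K\geq Ke^x$ and $\p_\tau K-{\cal L}K=rK>c$ (using the standing hypothesis $c<qK\leq rK$), so it is a super--solution and $u\leq K$. The function $w(\tau)=\frac{c}{r}+\frac{rL-c}{r}e^{-r\tau}$ solves $\p_\tau w-{\cal L}w=c$ with $w(0)=L$ and $w<K$, hence is a sub--solution and $u\geq w$; combined with the obstacle constraint $u\geq Ke^x$ this gives the stated lower bound.

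For the gradient bounds \eqref{eq4.12} I would differentiate the penalized equation in $x$, which avoids differentiating across the free boundary. Using that ${\cal L}$ commutes with $\p_x$ and that ${\cal L}(Ke^x)=-qKe^x$, the quantities $p=\p_x u_\epsilon$ and $m=p-Ke^x$ satisfy $\p_\tau p-{\cal L}p+\beta_\epsilon' p=\beta_\epsilon' Ke^x\geq0$ and $\p_\tau m-{\cal L}m+\beta_\epsilon' m=-qKe^x\leq0$, both with nonnegative zeroth--order coefficient $r+\beta_\epsilon'$. The maximum principle then reduces everything to the parabolic boundary: at $\tau=0$ one reads off $p=\p_x\max\{L,Ke^x\}\geq0$ and $m\leq0$ directly; at $x=0$ the $C^0$ bound $u\leq K=u(0,\tau)$ forces $\p_x u(0,\tau)\geq0$ while $u-Ke^x\geq0$ with equality at $x=0$ forces $m(0,\tau)\leq0$; and as $x\to-\infty$ one has $u\to w$, so $\p_x u\to0$. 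Hence $p\geq0$ and $m\leq0$ throughout, i.e. $0\leq\p_x u\leq Ke^x$.

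The estimate \eqref{eq4.13} is the delicate part, and I would obtain it by a time--monotonicity (comparison) argument rather than by a direct initial--layer computation. Because the VI \eqref{eq4.1} is autonomous in $\tau$ with constant lateral datum $u(0,\tau)=K$, the time shift $u(\cdot,\cdot+h)$ solves the very same VI with initial datum $u(\cdot,h)$; by the comparison principle it therefore suffices to show $u(\cdot,h)\geq u(\cdot,0)=\max\{L,Ke^x\}$ for every $h>0$. The obstacle gives $u\geq Ke^x$, and the lower bound in \eqref{eq4.11} gives $u(\cdot,h)\geq w(h)$. The point is that $w'(\tau)=(c-rL)e^{-r\tau}$, so \emph{exactly} when $c\geq rL$ the barrier $w$ is nondecreasing with $w(0)=L$, whence $w(h)\geq L$ and $u(\cdot,h)\geq\max\{L,Ke^x\}$; comparison then yields $u(\cdot,\tau)\leq u(\cdot,\tau+h)$, that is $\p_\tau u\geq0$. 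I expect this sign analysis to be the main obstacle: the condition $c\geq rL$ enters precisely through the monotonicity of the lower barrier $w$, which in the PDE language reflects the competition at $\tau=0^+$ between the coincidence region $\{x\geq\ln(c/(qK))\}$, where $\p_\tau u=0$, and the continuation region, where $\p_\tau u$ equals $c-rL$ on $\{u=L\}$ and $c-qKe^x>0$ on the part of the obstacle lying in continuation; the inequality $c\geq rL$ is what makes these pieces fit together consistently. The remaining technical nuisances, namely the kink of the initial datum at $x=\ln(L/K)$ and the unbounded domain, are handled by the smoothing and truncation already built into the approximation scheme.
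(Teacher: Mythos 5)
Your overall scheme---penalize the constraint, truncate to $(-N,0)\times(0,T]$, prove bounds uniform in the approximation parameters, pass to the limit, and get uniqueness from the comparison principle---is exactly the paper's route to existence, uniqueness, \eqref{eq4.11} and \eqref{eq4.12}. Where you genuinely depart from the paper is \eqref{eq4.13}: the paper proves it at the penalized level, differentiating the penalized equation in $\tau$ and estimating $\p_\tau u_{\ep,n}(x,0)$ by a three-case analysis around the kink $Ke^x=L$ (yielding $\p_\tau u_{\ep,n}\geq -r\ep$, see \eqref{eq4.7}), whereas you use a time-shift comparison at the VI level: since the VI is autonomous with constant lateral datum, $u(\cdot,\cdot+h)$ solves the same VI, and the ordering of initial data $u(\cdot,h)\geq\max\{Ke^x,w(h)\}\geq\max\{Ke^x,L\}=u(\cdot,0)$ follows from \eqref{eq4.11} precisely when $w'=(c-rL)e^{-r\tau}\geq0$, i.e.\ $c\geq rL$. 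This is correct, cleaner (it avoids the initial-layer computation entirely), and it isolates where $c\geq rL$ enters; amusingly, it is the very trick the paper itself deploys later in the proof of Theorem \ref{th5.6} for the transformed function $v$, just not here. Its cost is that \eqref{eq4.13} becomes logically downstream of \eqref{eq4.11} and of a comparison principle for VIs on the unbounded strip (for bounded solutions), which must be invoked with some care but is available.

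There is one genuine gap, in your proof of the upper bound in \eqref{eq4.12}: you dispose of the behavior at the left end by asserting ``as $x\to-\infty$ one has $u\to w$, so $\p_x u\to0$.'' Neither half of this is free. The bounds you have at that stage give only $\max\{Ke^x,w(\tau)\}\leq u\leq K$, which does not force $u\to w$; and even granted $u\to w$, convergence of values does not imply convergence of derivatives without interior estimates. More to the point, your maximum-principle argument for $m=\p_x u_\ep-Ke^x$ is run on the truncated strip, so what you actually need is control of $\p_x u_{\ep,N}(-N,\tau)$ at the artificial boundary, where the datum is the $x$-independent function $w(\tau)$: the lower bound $\p_x u_{\ep,N}(-N,\tau)\geq0$ is immediate (the minimum of $u_{\ep,N}-w\geq 0$ is attained at $x=-N$), but the upper bound is not, and in fact $\p_x u_{\ep,N}(-N,\tau)\leq Ke^{-N}$ is not what one gets. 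The paper resolves this by first building the super-solution $w(\tau)+K\bigl(e^{x}-e^{\alpha_-(x+N)-N}\bigr)$ to control $u_{\ep,N}$ itself, then comparing $\p_x u_{\ep,N}$ against the corrected barrier $K\bigl(e^{x}-\alpha_-e^{\alpha_-(x+N)-N}\bigr)$ (see \eqref{eq4.6} and Lemma \ref{lm4.1}); this gives a bound strictly weaker than $Ke^x$ at $x=-N$, and only the limit $N\to\infty$ recovers $\p_x u\leq Ke^x$. Your argument needs this (or an equivalent barrier/interior-estimate step) to close; as written, the reduction ``everything follows from the parabolic boundary'' silently assumes the piece of boundary information that is hardest to get.
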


The proof of Theorem \ref{th4.1} is quite long and relatively
standard, so we leave its proof in the appendix.

\begin{picture}(160,130)(230,0)
 \put(270,10){\vector(1,0){140}} \put(360,6){\vector(0,1){110}}
 \put(358,118){$\tau$}\put(413,10){$x$}
 {\thicklines\qbezier(310,10)(343,30)(348,110)}
 \multiput(350,10)(0,5){21}{\line(0,1){2}}
 \put(308,8){$\bullet$}\put(280,-2){$c_{0}=\underline{X}$}
 \put(348,8){$\bullet$}\put(345,-2){$c_{\infty}$}
 \put(290,55){${\bf CT}_x$}\put(270,30){$u>K\,e^x$}
 \put(350,55){${\bf CV}_x$} \put(340,30){$u=K\,e^x$}
 \put(320,70){$c(\tau)$}
\end{picture}
\begin{picture}(160,130)(200,0)
 \put(270,10){\vector(1,0){140}} \put(360,6){\vector(0,1){110}}
 \put(358,118){$\tau$}\put(413,10){$x$}
 {\thicklines\qbezier(310,10)(350,30)(360,90)
 \put(360,90){\line(0,1){15}}}
 \put(308,8){$\bullet$}\put(293,-2){$c_{0}=\underline{X}$}
 \put(358,87){$\bullet$}\put(362,90){$\overline{T}$}
 \put(290,55){${\bf CT}_x$}\put(270,30){$u>K\,e^x$}
 \put(360,55){${\bf CV}_x$} \put(350,30){$u=K\,e^x$}
 \put(337,70){$c(\tau)$}
\end{picture}
 \begin{center}
 Figure$\!$ 3.6.$\;\;\;c\geq rL,\,c\leq rK(\alpha_+-1)/\alpha_+$
 \qquad
 Figure$\!$ 3.7.$\;\;\;c\geq rL,\,c>rK(\alpha_+-1)/\alpha_+$
 \end{center}

\begin{picture}(130,130)(130,0)
 \put(270,10){\vector(1,0){140}} \put(360,6){\vector(0,1){110}}
 \put(358,118){$\tau$}\put(413,10){$x$}
 {\thicklines\qbezier(310,10)(343,20)(345,30)
 \qbezier(345,30)(340,45)(300,50)
 \qbezier(300,50)(287,55)(283,110)}
 \multiput(280,10)(0,5){21}{\line(0,1){2}}
 \multiput(310,10)(0,5){4}{\line(0,1){2}}
 \put(308,6){$\bullet$}\put(298,-2){$c_{0}=\ln L-\ln K$}
 \put(278,6){$\bullet$}\put(275,0){$c_{\infty}$}
 \put(308,25){$\bullet$}\put(313,28){$\underline{t}$}
 \put(260,65){${\bf CT}_x$}\put(270,35){$u>K\,e^x$}
 \put(320,65){${\bf CV}_x$} \put(350,35){$u=K\,e^x$}
 \put(290,95){$c(\tau)$}
 \put(270,6){$\bullet$}\put(265,-2){$\underline{X}$}
\end{picture}
 \begin{center}
 Figure$\!$ 3.8.$\;\;\;c\leq rL(\alpha_+-1)/\alpha_+$
 \end{center}

We denote ${\bf CV}_x,\,{\bf CT}_x,\,c(\tau)$ as the counterparts of
${\bf CV},\,{\bf CT},\,C(t)$ by the transformation (\ref{eq3.1}),
respectively. {From~\eqref{eq4.12}, $u-Ke^x$ is decreasing with
respect to $x$, so the conversion boundary is given as
$c(\tau)=\inf\{x\leq0:u(x,\tau)=Ke^x\}$, and}
 the conversion region and the
continuation region can further be characterized as
\begin{align*}
{\bf CV}_x&=\{(x,\tau)\in(-\infty,0)\times(0,T]: u(x,\tau)=
Ke^x\}=\{(x,\tau)\in(-\infty,0)\times(0,T]: x\geq c(\tau)\};\\
{\bf CT}_x&=\{(x,\tau)\in(-\infty,0)\times(0,T]: u(x,\tau)>
Ke^x\}=\{(x,\tau)\in(-\infty,0)\times(0,T]: x< c(\tau)\}.
\end{align*}

Our main result in this section is to prove that the conversion and
continuation regions have the following shapes under different
parameter assumptions (see Figures $3.6-3.8$). Note that Figure 3.8
shows that the conversion boundary is non-monotonic, and the price
may go up when time approaches maturity around the starting point
$c_0$ of the conversion boundary.

In the following, we prove the position, the asymptotics, the
monotonicity and the regularity of the free boundary $c(\tau)$.

\begin{thm} ({Position of the free boundary})

                                                                         \label{th5.2}
For the case $c<qK$, the free boundary $c(\tau)$ of the variational
inequality (\ref{eq4.1}) has the following properties:
 \smallskip

 \noindent (1) ${\bf CV}_x\subset\{x\geq
 \underline{X}\,\}$, so that ${\bf
 CT}_x\supset\{x<\underline{X}\,\}$ and
 $c(\tau)\in[\,\underline{X}\,,0\,]$, where $\underline{X}\,\stackrel{\Delta}{=}\ln c-\ln K-\ln q$.
 \smallskip

 \noindent (2) There exists a positive constant $\underline{t}$ such
that
 $c(\tau)>c_0$ for any $\tau\in(0,\underline{t}\,]$ where $c_{0}\stackrel{\Delta}{=}\max\{\underline{X},\,\ln L-\ln
 K\}$.
 \smallskip

 \noindent (3) The starting point of $c(\tau)$ is
 $(c(0),0)$ with $c(0)\stackrel{\Delta}{=}\lim\limits_{\tau\rightarrow0^+}c(\tau)=c_{0}$.

\end{thm}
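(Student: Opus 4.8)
The plan is to recast everything in terms of the excess function $w:=u-Ke^x\ge0$, so that conversion is exactly $\{w=0\}$ and continuation is $\{w>0\}$. Since $\p_\tau(Ke^x)=0$ and ${\cal L}(Ke^x)=-qKe^x$, a direct computation shows that $w$ solves the one-phase obstacle problem
\[
\min\{\p_\tau w-{\cal L}w-f,\;w\}=0,\qquad f(x):=c-qKe^x,
\]
with initial datum $w(x,0)=(L-Ke^x)^+$ and $w(0,\tau)=0$. The structural fact I would exploit is that the source changes sign precisely at $\underline X$: $f>0$ for $x<\underline X$ and $f<0$ for $x>\underline X$. I also record that $w$ is nonincreasing in $x$ (from $\p_x u\le Ke^x$ in (\ref{eq4.12})), so $\{w=0\}=[c(\tau),0]$, and hence $w(x_0,\tau)>0$ at a single point already forces $c(\tau)>x_0$. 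Part (1) is then immediate: on the conversion set $u=Ke^x$, so $\p_\tau u-{\cal L}u=qKe^x$, and the inequality $\p_\tau u-{\cal L}u\ge c$ of (\ref{eq4.1}) forces $qKe^x\ge c$, i.e. $x\ge\underline X$; thus ${\bf CV}_x\subset\{x\ge\underline X\}$ and $c(\tau)\ge\underline X$. Since $u(0,\tau)=K=Ke^{0}$ puts $(0,\tau)$ on the obstacle, $c(\tau)\le0$, giving $c(\tau)\in[\underline X,0]$.

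For Part (2) the goal is $w(c_0,\tau)>0$ for small $\tau$, and I would split according to which term realizes $c_0$. If $c_0=\ln L-\ln K$ (equivalently $c\le qL$), the initial datum carries a strictly positive reservoir $w(\cdot,0)=L-Ke^x>0$ on $\{x<c_0\}$; comparing $w$ with the solution $\tilde w$ of the linear equation $\p_\tau\tilde w-{\cal L}\tilde w=f$ with the same data (so $w\ge\tilde w$ by the parabolic maximum principle), a heat-kernel estimate shows the diffused ramp gives $\tilde w(c_0,\tau)\ge C\sqrt\tau$, which dominates the $O(\tau)$ contribution of the source and yields $w(c_0,\tau)>0$. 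The delicate case is $c_0=\underline X$ (equivalently $c\ge qL$): here $w(\cdot,0)\equiv0$ near $c_0$ and the source degenerates, $f(\underline X)=0$, so no zeroth-order effect forces lift-off. When $c\ge rL$ I would argue by contradiction using the monotonicity $\p_\tau u\ge0$ of (\ref{eq4.13}): if $c\equiv\underline X$ on an interval, smooth fit gives $w=\p_x w=0$ on $x=\underline X$, and evaluating the equation and its $x$-derivative there (with $\p_\tau w(\underline X,\cdot)=0$) forces $\p_{xx}w(\underline X)=0$ and then $\p_{xx}(\p_x w)(\underline X)=2c/\sigma^2>0$, so $\p_x w>0$ just left of $\underline X$, contradicting that $w>0=w(\underline X,\cdot)$ is decreasing. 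In the remaining range $qL\le c<rL$, where monotonicity is unavailable, I would instead build a self-similar subsolution $\underline w(x,\tau)=\tau^{3/2}\Phi\big((x-\underline X)/\sqrt\tau\big)$ with $\Phi\ge0$ solving the leading-order similarity ODE, giving $\underline w\le w$ and $\underline w(\underline X,\tau)>0$.

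Part (3) is obtained by squeezing. The lower bound $\liminf_{\tau\to0^+}c(\tau)\ge c_0$ is Part (2). For the matching upper bound I would fix any $x_1\in(c_0,0)$ and show $w(x_1,\tau)=0$ for small $\tau$. Since $f<0$ on $\{x>\underline X\}\supset(c_0,0)$ and $w(\cdot,0)\equiv0$ there, the constant $0$ is a supersolution of the obstacle problem on a right-neighborhood of $c_0$; correcting it near the left edge by a barrier that dominates $w$ on $\{x=c_0+\ep\}$ (using $w\le K(1-e^x)$) while vanishing at $x_1$ for $\tau$ small, the comparison principle for the VI forces $w(x_1,\tau)=0$, hence $c(\tau)\le x_1$. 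Letting $x_1\downarrow c_0$ gives $\limsup_{\tau\to0^+}c(\tau)\le c_0$, so $c(0^+)=c_0$.

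The main obstacle is the corner analysis at $(\underline X,0)$ when $c_0=\underline X$: because the source vanishes there and the initial datum is flat, the free boundary detaches from $\underline X$ only at rate $O(\sqrt\tau)$, so neither a constant nor a first-order barrier detects the lift-off. One is therefore forced to construct matched self-similar sub- and super-solutions, and to justify the smooth fit together with sufficient regularity of $w$ up to the free boundary, in order to pin down simultaneously $c(\tau)>c_0$ and $c(\tau)\to c_0$.
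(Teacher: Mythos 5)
Your part (1) and the case $c_0=\ln L-\ln K$ of part (2) are correct and coincide with the paper's own argument: the paper also obtains the lift-off in that case by comparing $u$ with the unobstructed Dirichlet solution (the explicit formula (\ref{eq3.5})) and extracting the asymptotics $w(\ln L-\ln K,\tau)-L=\tfrac{\sigma L}{\sqrt{2\pi}}\sqrt{\tau}+o(\sqrt{\tau})$, which is exactly your heat-kernel estimate. The two remaining steps, however, have genuine gaps.

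Part (2) in the case $c_0=\underline{X}$ (i.e. $c\ge qL$): the negation of the claim is \emph{not} ``$c(\tau)\equiv\underline{X}$ on an interval'', but only that there exist times $\tau_n\downarrow 0$ with $c(\tau_n)=\underline{X}$. Your entire computation --- $\p_\tau w=0$ on $\{x=\underline{X}\}$, hence $\p_{xx}w=0$ and $\p_{xxx}w=2c/\sigma^2$ there --- presupposes the interval of constancy, so it addresses the wrong statement; moreover it uses second and third derivatives of $w$ up to the free boundary, while Theorem \ref{th4.1} only provides $W^{2,1}_{p,loc}$ regularity together with continuity of $\p_x u$. In the remaining range $qL\le c<rL$ you defer to a self-similar subsolution $\tau^{3/2}\Phi\bigl((x-\underline{X})/\sqrt{\tau}\bigr)$ that is never constructed, and whose existence (nonnegative, positive at $\xi=0$, a genuine subsolution of the obstacle problem) is essentially the claim itself. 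The missing idea, which is the paper's proof and requires neither a case split in $c$ nor any corner analysis, is this: if $c(t_1)=\underline{X}$ for a \emph{single} $t_1>0$, then $v=u-Ke^x$ is positive on ${\cal N}=(-\infty,\underline{X})\times(0,t_1]$ by part (1), satisfies $(\p_\tau-{\cal L})v=c-qKe^x\ge 0$ there, and vanishes at $(\underline{X},t_1)$; the parabolic Hopf lemma yields $\p_x v(\underline{X},t_1)<0$, contradicting the smooth fit $\p_x v(\underline{X},t_1)=0$ guaranteed by $\p_x u\in C(\Omega_T)$. This uses only the first-derivative regularity that is actually available, works at a single touch point, and in fact proves the stronger statement $c(\tau)>\underline{X}$ for all $\tau>0$.

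Part (3): your barrier is required to dominate $w$ on the lateral edge $\{x=c_0+\ep\}$ through the time-uniform bound $w\le K(1-e^x)$, i.e. to have height of order $K$ there at all times down to $\tau=0^+$, while vanishing at $x_1$ and satisfying the obstacle-problem supersolution conditions $B\ge0$ and $\p_\tau B-{\cal L}B\ge c-qKe^x$. This cannot work once $x_1-c_0$ is small: at an interior zero of $B$ one is forced to have $\tfrac{\sigma^2}{2}\p_{xx}B\le qKe^{x_1}-c$, and (for instance for a time-independent barrier, with $B(x_1)=\p_x B(x_1)=0$) a Gronwall comparison applied to the differential inequality bounds $B(c_0+\ep)$ by a constant multiple of $(x_1-c_0-\ep)$, which cannot reach the required height $K(1-e^{c_0+\ep})$ as $x_1\downarrow c_0$. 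Barriers that do the job exist, but they must reproduce a parabolic boundary layer of width $O\bigl(\sqrt{\tau\log(1/\tau)}\bigr)$ emanating from the corner, which is far more delicate than anything in your sketch. The paper's fix is located precisely here: it never invokes the crude bound, but uses $u\in C(\overline{\Omega}_T)$ at $\tau=0$ instead. Since $w(\cdot,0)\equiv0$ on $[x_1-\delta,x_1+\delta]\subset(c_0,0)$, the lateral data to be dominated is $o(1)$ as $\tau\to0^+$, and the tiny quadratic barrier $W=Ke^x+\delta(x-x_1)^2$ --- a strict supersolution because $x_1-\delta>c_0\ge\underline{X}$ --- closes the comparison on $[x_1-\delta,x_1+\delta]\times[0,\delta^*]$. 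With that single replacement, your outline of (3) becomes the paper's proof.
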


\begin{proof} (1).
 According to (\ref{eq4.1}), in the domain
 ${\bf CV}_x\cap\Omega_T,\,V=Ke^x$ and it must hold
 $$
 c\leq \p_\tau V-{\cal L}V=\p_\tau (K\,e^x)-{\cal L}(K\,e^x)=q\,K\,e^x
 \Rightarrow x\geq \underline{X}.
 $$
 Hence, ${\bf CV}_x\cap\Omega_T\subset\{x\geq \underline{X}\,\}$. Since
 ${\bf CL}_x\cap\Omega_T=\O$, then
 ${\bf CT}_x\supset\{x<\underline{X}\,\}$ and  $c(\tau)\geq\underline{X}\,$.

 (2). The proof is divided into two cases:

 \noindent \underline{Case 1}: If $c_0=\underline{X}$ (see Figures $3.6$ and $3.7$), then it is sufficient to prove that
 $c(\tau)>\underline{X}$ for any $\tau>0$. Suppose not. Property (1) implies that
 there exists a $t_1>0$ such that $c(t_1)=\underline{X}$.
We deduce that in
 the domain ${\cal
 N}=(-\infty,\underline{X}\,)\times(0,t_1\,]: \,{u>Ke^x}$, and $u$ satisfies
 $$
 \p_\tau u-{\cal L} u=c\geq qKe^x=\p_\tau(Ke^x)-{\cal L}(Ke^x),\quad
 u(\underline{X}\,,t_1)=Ke^x|_{x=\underline{X}\,}.
 $$
In view of the Hopf lemma (see \cite{Evans}), we obtain that $\p_x
(u-Ke^x)(\underline{X}\,,t_1)<0$.
 On the other hand, Theorem \ref{th4.1} implies that $\p_x u\in C(\Omega_T)$. It means that
 $\p_x u$ continuously crosses the free boundary $c(\tau)$, and
 $\p_x (u-Ke^x)(\underline{X}\,,t_1)=0$. Hence, we have a contradiction.

 \noindent \underline{Case 2}: If $c_0=\ln L-\ln K$ (see Figure $3.8$), then it is sufficient to prove that
 there exists a positive constant $\underline{t}$ such that
 $$
 c(\tau)>\ln L-\ln K,\qquad\forall\;\;\tau\in(0,\underline{t}\,].
 $$
What we need to prove is that there exists a positive constant
$\underline{t}$ such that
 \be                                                                             \label{eq5.2}
 u(\ln L-\ln K,\tau)>K\,e^x|_{x=\ln L-\ln K}=L,
 \qquad\forall\;\;\tau\in(0,\underline{t}\,].
 \ee
Indeed, we denote $w$ as the solution of PDE (\ref{eq3.2}), then
 $w$ takes the explicit form of (\ref{eq3.5}). Hence, the A-B-P maximum principle (see \cite{Tao}) implies that $u\geq w$ in
 $\overline{\Omega}_T$.

 In order to prove (\ref{eq5.2}), we use the
 explicit form of (\ref{eq3.5}) to estimate asymptotic behavior of
 $w(\ln L-\ln K,\tau)$ as $\tau\rightarrow0^+$. It is not difficult to
 check that as $\tau\rightarrow0^+$, we have
 \bee
 &&
 \left\{
 \begin{array}{ll}
 \mbox{if}\;x>0,\quad
 &\Phi_1(x,\tau,t)=\Phi_2(x,\tau,t)=1+o(\,\sqrt\tau\,),
 \vspace{2mm} \\
 \mbox{if}\;x<0,\quad
 &\Phi_1(x,\tau,t)=\Phi_2(x,\tau,t)=o(\,\sqrt\tau\,),
 \vspace{2mm} \\
 \mbox{if}\;x=0,\quad &\Phi_1(x,\tau,t)={1\over2}-{\sigma\,\alpha_1\over\sqrt{2\pi}}\,\sqrt{\tau-t}
 +o(\,\sqrt\tau\,),
 \vspace{2mm} \\
 \mbox{if}\;x=0,\quad &\Phi_2(x,\tau,t)={1\over2}-{\sigma\,(\alpha_1+1)\over\sqrt{2\pi}}\,\sqrt{\tau-t}
 +o(\,\sqrt\tau\,),
 \end{array}
 \right.\\[2mm]
 &&w(\ln L-\ln K,\tau)-L
 \\[2mm]
 &=&o(\,\sqrt{\tau}\,)
 +L\,\left(\,{1\over2}-{\sigma\,\alpha_1\over\sqrt{2\pi}}\,\sqrt{\tau}
 +o(\,\sqrt{\tau}\,)\,\right)
 -L\,\left(\,{1\over2}-{\sigma\,(\alpha_1+1)\over\sqrt{2\pi}}\,\sqrt{\tau}
 +o(\,\sqrt{\tau}\,)\,\right)
 \\[2mm]
 &=&{\sigma\,L\over\sqrt{2\pi}}\,\sqrt{\tau}
 +o(\,\sqrt{\tau}\,).
 \eee
 Hence, there exists a positive constant $\underline{t}$ satisfies (\ref{eq5.2}).

 (3). Since we have proved the property (2),
 it is sufficient to show
 $$
 \limsup\limits_{\tau\rightarrow0^+}c(\tau)\leq c_0.
 $$
 The above inequality is obvious if we can prove that for any fixed
 $x_1>c_0$, there exists a positive constant $\delta^*$ such that
 \be                                                                                      \label{eq5.3}
 u(x_1,\tau)=K\,e^x|_{x=x_1},\quad\forall\;\tau\in[\,0,\delta^*\,].
 \ee
 Indeed, for any fixed $x_1>c_0$, we construct a function such that
 $$
 W(x,\tau)=Ke^x+\delta(x-x_1)^2,
 \quad (x,\tau)\in{\cal N}
 \stackrel{\triangle}{=}[\,x_1-\delta,x_1+\delta\,]\times[\,0,\delta^*\,],
 $$
 where $\delta, \,\delta^*$ are positive constants to be determined. We first
 assume $\delta$ small enough so that $x_1-\delta>c_0$ and $x_1+\delta<0$.
 Next, we show that $u\leq W$ in $\overline{{\cal N}}$. Indeed, it is easy to check that in the domain ${\cal N},\,W$
satisfies
 \bee
 \p_\tau W-{\cal L}W
 &=&qKe^x-\delta\,[\sigma^2+(2r-2q-\sigma^2)(x-x_1)-r(x-x_1)^2]
 \\[2mm]
 &\geq&qKe^{x_1-\delta}-\delta\,[\sigma^2+(2r+2q+\sigma^2)\delta\,]
 >c-\delta\,[\sigma^2+(2r+2q+\sigma^2)\delta\,],\qquad
 \eee
 where we have used $x_1-\delta>c_0\geq \underline{X}$ in the last inequality.
 Choose $\delta$ small enough such that
 $$
 \p_\tau W-{\cal L}W>c\;\;\mbox{in}\;{\cal N}.
 $$
 Moreover, it is clear that
 $$
 W(x_1\pm\delta,0)>Ke^x\Big|_{(x_1\pm\delta,0)}=u(x_1\pm\delta,0).
 $$
 Recalling $u\in C(\,\overline{\Omega}_T\,)$, we deduce that there exists
 a positive constant $\delta^*$ such that
 $$
 W(x_1\pm\delta,\tau)\geq u(x_1\pm\delta,\tau),\quad
 \forall\;\;\tau\in[\,0,\delta^*\,].
 $$
 Hence, $W$ satisfies
 \bee
 \left\{
 \begin{array}{l}
 \p_\tau W-{\cal L} W\geq c,\quad
 W\geq Ke^x, \quad \mbox{in}\;{\cal N},
 \vspace{2mm} \\
 W\geq u,\qquad \mbox{on}\;\p_p {\cal N}.
 \end{array}
 \right.
 \eee
 The A-B-P maximum principle (see \cite{Tao}) implies that $u\leq W$ in the domain
 $\overline{{\cal N}}$. In particularly, $u\leq W=Ke^x$ on the line
 $x=x_1,\,\tau\in(0,\delta^*\,]$. By combining $u\geq Ke^x$,
 we obtain (\ref{eq5.3}).
\end{proof}

 Next, we analyze the asymptotic behavior of the free boundary and
 the solution of the VI (\ref{eq4.1}) as $\tau\rightarrow\infty$:

 \begin{thm} (Asymptotics of the free boundary)

                                                                          \label{th5.3}
 For the case $c<qK$, the free boundary $c(\tau)$ and the solution $u(x,\tau)$ of the VI (\ref{eq4.1})
 has the following asymptotic properties:\smallskip

 \noindent(1) If furthermore $c\leq rK(\alpha_+-1)/\alpha_+$ holds, where $\alpha_+$ is
 defined in Lemma \ref{lm4.1}, then we have (see Figure 3.6 and Figure 3.8)
 \bee
 &&\lim_{\tau\rightarrow+\infty}c(\tau)=c_{\infty}\stackrel{\Delta}{=}
 \ln \left(\,{\alpha_+\over \alpha_+-1}\;{c\over rK}\,\right),
 \\[2mm]
 &&\lim_{\tau\rightarrow+\infty}u(x,\tau)= u_{1,\,\infty}(x)
 \stackrel{\Delta}{=}
 \left\{
 \begin{array}{ll}
 {\displaystyle {K\over \alpha_+}\,\exp\Big\{\,\alpha_+x+(1-\alpha_+)\,c_{\infty}\,\Big\}
 +{c\over r}, }\;\; &x<c_{\infty},
 \vspace{2mm}\\
 Ke^x,  &c_{\infty}\leq x\leq 0.\qquad
 \end{array}
 \right.
 \eee
 \noindent(2) If furthermore $c>rK(\alpha_+-1)/\alpha_+$ holds, then there exists a positive
 constant $\overline{T}$ such that the free boundary $c(\tau)$ ends at the point
 $(0,\overline{T})$ (see Figure 3.7), i.e.,
 \bee
 &&
 c(\tau)=0,\;\;\text{for}\;\tau\in[\overline{T},T],\quad
 {\bf CT}_x\supset(-\infty,0)\times[\,\overline{T},T\,],
 \\[2mm]
 &&\lim_{\tau\rightarrow+\infty}u(x,\tau)= u_{2,\,\infty}(x)
 \stackrel{\Delta}{=}
 Ke^{\alpha_+x}+{c\over r}\,\left(\,1-e^{\alpha_+x}\,\right),
 \quad x\leq 0.
 \eee
\end{thm}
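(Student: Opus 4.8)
The plan is to pass to the associated perpetual problem, solve it in closed form, and then transport its geometry to the finite-horizon VI (\ref{eq4.1}) by comparison. I first solve the stationary version of (\ref{eq4.1}), namely the obstacle problem $\min\{-\mathcal{L}u_\infty-c,\,u_\infty-Ke^x\}=0$ on $(-\infty,0]$ with $u_\infty(0)=K$ and $u_\infty$ bounded as $x\to-\infty$. In the continuation region the bounded solutions of $-\mathcal{L}u=c$ are $u=Ae^{\alpha_+x}+c/r$, the mode $e^{\alpha_-x}$ being excluded by boundedness ($\alpha_\pm$ from Lemma \ref{lm4.1}); the smooth-fit conditions $u_\infty(c_\infty)=Ke^{c_\infty}$ and $u_\infty'(c_\infty)=Ke^{c_\infty}$ then force precisely $c_\infty=\ln\!\left(\frac{\alpha_+}{\alpha_+-1}\frac{c}{rK}\right)$ together with the stated profile $u_{1,\infty}$. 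The dichotomy in the theorem is just the sign of $c_\infty$: $c\le rK(\alpha_+-1)/\alpha_+$ is equivalent to $c_\infty\le0$, so the contact point lies inside the domain (case (1)); when $c>rK(\alpha_+-1)/\alpha_+$ one has $c_\infty>0$, the obstacle is never reached on $(-\infty,0]$, and the bounded profile matching $u(0)=K$ is the pure-continuation solution $u_{2,\infty}(x)=Ke^{\alpha_+x}+\frac{c}{r}(1-e^{\alpha_+x})$ (case (2)). A direct computation gives $Ke^x\le u_\infty\le K$ on $(-\infty,0]$ and, using $c_\infty\ge\underline{X}$ (which follows from the location of $\alpha_+$, cf. Theorem \ref{th5.2}), confirms that $u_\infty$ genuinely solves the stationary VI.

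Next I would prove $u(\cdot,\tau)\to u_\infty$ by comparison with the barriers $u_\infty(x)\pm Ce^{-\lambda\tau}$ with $0<\lambda\le r$. Since $\mathcal{L}1=-r$, one has $(\partial_\tau-\mathcal{L})(u_\infty+Ce^{-\lambda\tau})=(\partial_\tau-\mathcal{L})u_\infty+(r-\lambda)Ce^{-\lambda\tau}$, which is $\ge c$ both on the continuation set (where $(\partial_\tau-\mathcal{L})u_\infty=c$) and on the contact set (where it equals $qKe^x\ge c$, as $x\ge c_\infty\ge\underline{X}$); together with $u_\infty+Ce^{-\lambda\tau}\ge Ke^x$ this makes the upper barrier a supersolution, and symmetrically the lower one a subsolution, of (\ref{eq4.1}). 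For $C$ large the initial datum $\max\{L,Ke^x\}$ is dominated and, since $u_\infty(0)=K$, so is the lateral datum; the comparison principle therefore yields $|u(x,\tau)-u_\infty(x)|\le Ce^{-\lambda\tau}$, which the interior $W^{2,1}_p$ estimates of Theorem \ref{th4.1} upgrade to $C^1_{loc}$ convergence, establishing the stated limits $u_{1,\infty}$ and $u_{2,\infty}$.

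For the free boundary, the bound $\liminf_\tau c(\tau)\ge c_\infty$ is immediate: for fixed $x<c_\infty$ the gap $u_\infty(x)-Ke^x$ is strictly positive, so $u(x,\tau)\ge u_\infty(x)-Ce^{-\lambda\tau}>Ke^x$ for large $\tau$ and $x\in{\bf CT}_x$. For the reverse bound $\limsup_\tau c(\tau)\le c_\infty$ there is a transparent argument when $c\ge rL$: then $u_\infty\ge c/r\ge L$, so $\max\{L,Ke^x\}\le u_\infty$ and comparison gives $u\le u_\infty$ for all $\tau$; combined with $u\ge Ke^x$ and $u_\infty=Ke^x$ on $[c_\infty,0]$ this forces $u=Ke^x$ there, i.e. $c(\tau)\le c_\infty$ for every $\tau$, whence $c(\tau)\to c_\infty$. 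In case (2) the same comparison shows $u\le u_{2,\infty}$, and since $u_{2,\infty}(x)>Ke^x$ strictly on $(-\infty,0)$ the interior contact set shrinks to the single boundary point; that $c(\tau)$ actually reaches $x=0$ at a finite $\overline{T}$ and stays there follows, when $c\ge rL$, directly from the monotonicity $\partial_\tau u\ge0$ of (\ref{eq4.13}) (under which ${\bf CT}_x$ is nested increasing), and otherwise from the dissipation argument below.

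The genuine difficulty is the upper bound when $c<rL$ (Figure 3.8): here $u_\infty\not\ge L$, the comparison $u\le u_\infty$ fails, $c(\tau)$ may overshoot $c_\infty$, and convergence of $u$ does not by itself pin the boundary. The point is to exploit the strict forcing in the continuation region: on any interval $[x_1-a,x_1+a]\subset(c_\infty,0)$ the function $w:=u-Ke^x>0$ obeys $\partial_\tau w-\mathcal{L}w=c-qKe^x\le-m<0$, while Step~2 gives the a priori smallness $0\le w\le Ce^{-\lambda\tau}$ there. A positive solution of such a strictly dissipative inequality cannot persist: at an interior spatial maximum $\mathcal{L}w\le0$ forces $\partial_\tau w\le-m$, so $\sup_{|x-x_1|\le a}w$ decays at rate $m$ and must reach $0$ within time $O(Ce^{-\lambda\tau})$, i.e. the contact set exhausts $(c_\infty,0)$ as $\tau\to\infty$. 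Equivalently one may run the local supersolution $W=Ke^x+\delta(x-x_1)^2$ of Theorem \ref{th5.2}(3) on the shifted solutions $u(\cdot,\tau+n)$, whose data converge to $u_\infty$. Since $u-Ke^x$ is nonincreasing in $x$ by (\ref{eq4.12}), the resulting contact at $x_1$ gives $c(\tau)\le x_1$ for large $\tau$, and letting $x_1\downarrow c_\infty$ completes $\limsup_\tau c(\tau)\le c_\infty$. I expect this dissipation (non-degeneracy) step to be the crux, precisely because it is where the mere decay of $u$ to $u_\infty$ must be converted into genuine contact with the obstacle.
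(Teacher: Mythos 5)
Your part (1) is essentially sound: the stationary profile, the exponential barriers $u_\infty\pm Ce^{-\lambda\tau}$ (valid since $-\mathcal{L}u_\infty\geq c$ on the contact set because $c_\infty\geq\underline{X}$), the clean observation that for $c\geq rL$ the profile $u_{1,\infty}$ dominates the initial datum so that $u\leq u_{1,\infty}$ and hence $c(\tau)\leq c_\infty$ for \emph{all} $\tau$, and the dissipation idea for $c<rL$ (which can be made rigorous with a space-time barrier handling the lateral boundary of $[x_1-a,x_1+a]$, where your "interior spatial maximum" argument as stated does not apply). Note, though, that your alternative route via the shifted supersolution $W=Ke^x+\delta(x-x_1)^2$ fails at the initial comparison: $u(x_1,n)-Ke^{x_1}$ may be of order $e^{-\lambda n}>0$ while $W-Ke^x$ vanishes at $x=x_1$, so the transplanted argument of Theorem \ref{th5.2}(3) does not go through.

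The genuine gap is part (2). The theorem asserts a \emph{finite} $\overline{T}$ with $c(\tau)=0$, i.e. \emph{no interior contact at all}, for every $\tau\geq\overline{T}$, and none of your tools can deliver this. (i) The step "$u\leq u_{2,\infty}$ and $u_{2,\infty}>Ke^x$ on $(-\infty,0)$, hence the contact set shrinks to $\{0\}$" is logically backwards: an \emph{upper} bound on $u$ can never rule out contact with the \emph{lower} obstacle; only a lower bound strictly above $Ke^x$ can. (Moreover the comparison $u\leq u_{2,\infty}$ itself needs $u_{2,\infty}\geq\max\{L,Ke^x\}$, which fails when $c<rL$ since $u_{2,\infty}\to c/r<L$ as $x\to-\infty$.) (ii) Your actual lower bound $u\geq u_{2,\infty}-Ce^{-\lambda\tau}$ degenerates exactly where it is needed: $u_{2,\infty}-Ke^x\to0$ as $x\to0^-$, so it shows only that each fixed $x<0$ enters ${\bf CT}_x$ after a time $\tau_x$ with $\tau_x\to\infty$ as $x\to0^-$ — a pointwise, non-uniform statement. (iii) Monotonicity $\p_\tau u\geq0$ (when $c\geq rL$) upgrades this to $c(\tau)\uparrow 0$ but still not to $c(\tau)=0$ in finite time. (iv) The dissipation argument proves the \emph{presence} of contact, which is the wrong direction here. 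The paper closes exactly this hole with a different construction: it compares $u$ not with $u_{2,\infty}\pm Ce^{-\lambda\tau}$ but with solutions $\underline{u}\,_t,\overline{u}_t$ of the \emph{stationary VIs with perturbed coupon} $c\mp re^{-rt/2}$, corrected by $\mp(e^{rt/2-r\tau}-e^{-rt/2})$ so that at $\tau=t$ one gets the exact sandwich $\underline{u}\,_\tau\leq u(\cdot,\tau)\leq\overline{u}_\tau$. Since for large $\tau$ the forcing $c-re^{-r\tau/2}$ is still above $rK(\alpha_+-1)/\alpha_+$, the lower barrier $\underline{u}\,_\tau$ is in the pure-continuation regime, hence \emph{strictly} above $Ke^x$ on all of $(-\infty,0)$; this is the uniform strict lower bound you are missing, and it yields $c(\tau)=0$ for all $\tau\geq\overline{T}$ in one line. (The same sandwich of contact sets also handles both directions of part (1) simultaneously, making the dissipation step unnecessary.)
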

\begin{rmk}                                                                         \label{rm5.1}
 In fact, the above results imply that the solution $u(x,\tau)$ and
 the free boundary $c(\tau)$ of the finite horizon problem
 converge to the solution $u_{1,\,\infty}$ ( or $u_{2,\,\infty}$)
 and the free boundary $c_\infty$ (or $0$) of the corresponding
 perpetual problem as time tends to infinity, respectively.
\end{rmk}

\begin{proof}
 The proof is divided into five steps:

 \noindent \underline{Step 1}: Construct a super-solution and a sub-solution of
 the VI (\ref{eq4.1}).

 For any fixed $t>0$, we denote $\overline{u}_t$ as the
 $W^2_{p,\,loc}(\Omega)\cap C(\,\overline{\Omega}\,)$ solution of
 the following VI:
 \be                                                                              \label{eq5.4}
 \left\{
 \begin{array}{l}
 -{\cal L}\,\overline{u}_t=c+re^{-rt/2},
 \qquad\mbox{if}\;\overline{u}_t>Ke^x\;\;\mbox{and}\;\;
 x\in \Omega\triangleq(-\infty,0),
 \vspace{2mm} \\
 -{\cal L}\,\overline{u}_t\geq c+re^{-rt/2},
 \qquad\mbox{if}\;\overline{u}_t=Ke^x\;\;\mbox{and}\;\;
 x\in \Omega,
 \vspace{2mm} \\
 \overline{u}_t(0)\;=K.
 \end{array}
 \right.
 \ee
 We will give the explicit solution of the VI
 (\ref{eq5.4}) in Step 2. Denote
 $$
 W=\overline{u}_t+e^{rt/2-r\tau}-e^{-rt/2}.
 $$
We claim that $W$ is a super-solution of VI (\ref{eq4.1}) if $t$ is
large
 enough. In fact, it is not difficult to check that
 \bee
 \left\{
 \begin{array}{l}
 \p_\tau W-{\cal L}\,W\geq c
 \;\;\mbox{and}\;\;W\geq Ke^x,
 \vspace{2mm} \\
 W(0,\tau)\geq K=u(0,\tau),\qquad \qquad\quad
 0\leq \tau\leq t,
 \vspace{2mm} \\
 W(x,0)\geq K+e^{rt/2}-e^{-rt/2}\geq K\geq\max\{L,Ke^x\}=u(x,0),\quad
 x\leq 0.\;\;\mbox{(note that $r>0$)}
 \end{array}
 \right.
 \eee
By applying the comparison principle for VI {(see \cite{Yan1})}, we
deduce that
 \be                                                                             \label{eq5.5}
 u\leq W=\overline{u}_t+e^{rt/2-r\tau}-e^{-rt/2}.
 \ee

 Next, denote $\underline{u}\,_t$ as the
 $W^2_{p,\,loc}(\Omega)\cap C(\,\overline{\Omega}\,)$ solution of the
 following VI:
 \be                                                                              \label{eq5.6}
 \left\{
 \begin{array}{l}
 -{\cal L}\,\underline{u}\,_t=c-re^{-rt/2},
 \qquad\mbox{if}\;\underline{u}\,_t>Ke^x\;\;\mbox{and}\;\;
 x\in \Omega,
 \vspace{2mm} \\
 -{\cal L}\,\underline{u}\,_t\geq c-re^{-rt/2},
 \qquad\mbox{if}\;\underline{u}\,_t=Ke^x\;\;\mbox{and}\;\;
 x\in \Omega,
 \vspace{2mm} \\
 \underline{u}\,_t(0)\;=K.
 \end{array}
 \right.
 \ee
 We will give the explicit solution of the VI
 (\ref{eq5.6}) in Step 2. Denote
 $$
 w=\underline{u}\,_t-e^{rt/2-r\tau}+e^{-rt/2}.
 $$
Repeating the same argument as above, we deduce that
 \be                                                                             \label{eq5.7}
 u\geq w=\underline{u}\,_t-e^{rt/2-r\tau}+e^{-rt/2},
 \ee
provided $t$ is large enough.

 \noindent \underline{Step 2}: We solve the VIs (\ref{eq5.4}) and
 (\ref{eq5.6}). it is sufficient to solve the following elliptic VI:
 \be                                                                             \label{eq5.8}
 \left\{
 \begin{array}{l}
 -{\cal L}\,v=c^*,
 \qquad\mbox{if}\;v>Ke^x\;\;\mbox{and}\;\;
 x\in \Omega,
 \vspace{2mm} \\
 -{\cal L}\,v\geq c^*,
 \qquad\mbox{if}\;v=Ke^x\;\;\mbox{and}\;\;
 x\in \Omega,
 \vspace{2mm} \\
 v(0)\;=K.
 \end{array}
 \right.
 \ee
 It is clear that (\ref{eq5.4}) and (\ref{eq5.6}) coincide with
 the VI (\ref{eq5.8}) if we let $c^*=c+re^{-rt/2}$ and $c^*=c-re^{-rt/2}$,
 respectively.

 \noindent (1) In the case $c^*\leq rK(\alpha_+-1)/\alpha_+$, we
 first find out the bounded solution of the following associated free boundary problem
 of (\ref{eq5.8}):
 \be                                                                             \label{eq5.9}
 \left\{
 \begin{array}{l}
 -{\cal L}\,v=c^*>0,\quad x\in(-\infty,x^*),
 \vspace{2mm} \\
 \p_x v(x^*)=v(x^*)=Ke^{x^*}.
 \end{array}
 \right.
 \ee
It is not difficult to check that the solution of (\ref{eq5.9})
should
 take the form of
 $$
 v=A\,e^{\alpha_+x}+B\,e^{\alpha_-x}+{c^*\over r},\qquad
 x<x^*,
 $$
 where $\alpha_-$ is defined in Lemma \ref{lm4.1}. Since $v$ is bounded
 and $\alpha_-<0$, then we have $B=0$. Recalling the boundary condition,
 we deduce
 $$
 A\,e^{\alpha_+x^*}=Ke^{x^*}-{c^*\over r},\qquad
 A\,\alpha_+\,e^{\alpha_+x^*}=Ke^{x^*}.
 $$
Since $\alpha_+>1$, then we have
 \be                                                                             \label{eq5.10}
 x^*=\ln \left(\,{\alpha_+\over \alpha_+-1}\,{c^*\over rK}\,\right),
 \qquad
 v={K\over \alpha_+}\,e^{\alpha_+x+(1-\alpha_+)\,x^*}+{c^*\over r}.
 \ee
It is clear that $x^*\leq0$. Extend $v$ into $(-\infty,0\,]$ as
follows:
 \be                                                                             \label{eq5.11}
  v(x)=
 \left\{
 \begin{array}{ll}
 {\displaystyle{K\over \alpha_+}\,\exp\left\{\,\alpha_+x+(1-\alpha_+)\,x^*\,\right\}
 +{c^*\over r},}
 \quad &x<x^*,
 \vspace{2mm}\\
 Ke^x,  &x^*\leq x\leq 0.
 \end{array}
 \right.
 \ee

 Next, we prove that $v$ is the unique
 $W^2_{p,\,loc}(\Omega)\cap C(\,\overline{\Omega}\,)\cap L^\infty(\Omega)$
 solution of the VI (\ref{eq5.8}). In fact, the uniqueness follows from the comparison
 principle for VI {(see \cite{Yan1})}, and it is easy to verify the regularity of the solution.
 Then it is sufficient to prove that $v$ satisfies the VI (\ref{eq5.8}).
According to (\ref{eq5.10}), we can check that
 $$
 \p_x v(x)=K\,e^{\alpha_+x+(1-\alpha_+)\,x^*}
 =Ke^x\,e^{(\alpha_+-1)(x-x^*)}\leq Ke^x
 =\p_x(Ke^x),\quad x\leq x^*.
 $$
 By combining the boundary condition of (\ref{eq5.9}), we obtain that
 $$
 v(x)-Ke^x\geq 0,\;\forall\;x\leq x^*.
 $$
 Hence, we only need to prove that
 $$
 c^*\leq -{\cal L}\,Ke^x=q\,Ke^x,\;\;\forall\;\;x>x^*.
 $$
 It is sufficient to show that
 \be                                                                             \label{eq5.12}
 c^*\leq q\,Ke^{x^*}= q\,K\,{\alpha_+\over \alpha_+-1}\,{c^*\over rK}
 \Leftrightarrow  {\alpha_+\over \alpha_+-1}\geq {r\over q}
 \Leftrightarrow  \alpha_+ \leq {r\over r-q}.
 \ee
In fact, it is easy to check that
 $$
 {\sigma^2\over2}\,\left(\,{r\over r-q}\,\right)^2+
 \left(\,r-q-{\sigma^2\over2}\,\right)\,\left(\,{r\over r-q}\,\right)-r
 ={\sigma^2\over2}\,\left[\,\left(\,{r\over r-q}\,\right)^2
 -{r\over r-q}\,\right]>0.
 $$
 Recalling the definition of $\alpha_+$, we deduce (\ref{eq5.12}) from the
 property of quadratic functions. Hence, we have checked that $v$
 is the uniqueness solution of the VI (\ref{eq5.8}).

 \noindent (2) In the case of $c^*>rK(\alpha_+-1)/\alpha_+$, since
 $x^*$ defined in (\ref{eq5.10}) is larger than zero, then $v$ defined in
 (\ref{eq5.11}) is not the solution of the VI (\ref{eq5.8}). Now, we
 need to reconstruct the solution of the VI (\ref{eq5.8}). We first  solve
 the following ODE
 \be                                                                             \label{eq5.13}
 -{\cal L}\,v=c^*>0,\;\; x\in(-\infty,0);\qquad
 v(0)=K.
 \ee
 It is not difficult to check that the bounded solution is
 \be                                                                             \label{eq5.14}
 v(x)=Ke^{\alpha_+x}+{c^*\over r}\,\left(\,1-e^{\alpha_+x}\,\right),
 \quad x\leq 0.
 \ee

 Next, we prove that $v$ is the unique
 $W^2_{p,\,loc}(\Omega)\cap C(\,\overline{\Omega}\,)\cap L^\infty(\Omega)$
 solution of the VI (\ref{eq5.8}). By the same argument as above, it is sufficient to prove $v(x)\geq Ke^x$ for any $x\leq0$. Indeed,
we calculate
 $$
 \p_x v(x)=\alpha_+\,\left(\,K-{c^*\over r}\,\right)\,e^{\alpha_+x}
 \leq Ke^{\alpha_+x}\leq Ke^x,\;\;\forall\;x\leq0,
 $$
 where we have used $\alpha_+>1$. By combining the boundary condition of
 (\ref{eq5.13}), we deduce that $v(x)\geq Ke^x$ for any $x\leq0$.
 Hence, we have showed that $v$ is the unique solution of the VI (\ref{eq5.8}).

 \noindent \underline{Step 3}: Prove the property (1) in the case of
 $c<rK(\alpha_+-1)/\alpha_+$.

 In view of (\ref{eq5.5}) and (\ref{eq5.7}), we deduce the following
 inequality if $t$ is  large enough,
 $$
 \underline{u}\,_t(x)-e^{rt/2-r\tau}+e^{-rt/2}\leq u(x,\tau)
 \leq \overline{u}_t(x)+e^{rt/2-r\tau}-e^{-rt/2}.
 $$
 In particular, by taking $\tau=t$ we have
 $$
 \underline{u}\,_t(x)\leq u(x,t) \leq \overline{u}_t(x).
 $$
Since $\underline{u}\,_t,\,u,\,\overline{u}_t\geq Ke^x$, we derive
 $$
 \{x:\underline{u}\,_t(x)=Ke^x\}\supset
 \{x:u(x,t)=Ke^x\}\supset
 \{x:\overline{u}_t(x)=Ke^x\}.
 $$
It is not difficult to check that
 $$
 c+re^{-rt/2},\,c-re^{-rt/2}\rightarrow c<rK(\alpha_+-1)/\alpha_+\;\;
 \mbox{as}\;t\rightarrow+\infty.
 $$
 Hence, the conclusion in Step 2 implies that $\underline{u}\,_t,\,\overline{u}_t$
 takes the form of (\ref{eq5.11}) with $c^*=c-re^{-rt/2}$ and $c^*=c+re^{-rt/2}$,
 respectively.  Denote $\underline{x}\,_t,\,\overline{x}_t$ as the corresponding
 free boundary points $x^*$ defined in (\ref{eq5.10}). Since $t$ is arbitrary, then we have
 $$
 [\,\underline{x}\,_\tau,0\,]=\{x:\underline{u}\,_\tau(x)=Ke^x\}\supset
 \{x:u(x,\tau)=Ke^x\}\supset
 \{x:\overline{u}_\tau(x)=Ke^x\}=[\,\overline{x}_\tau,0\,],
 $$
 provided $\tau$ is large enough. Hence, the definition of the free boundary $c(\tau)$ implies that
 $$
 \ln \left(\,{\alpha_+\over \alpha_+-1}\,{c-re^{-r\tau/2}\over rK}\,\right)
 =\underline{x}\,_\tau\leq c(\tau) \leq \overline{x}_\tau
 =\ln \left(\,{\alpha_+\over \alpha_+-1}\,{c+re^{-r\tau/2}\over rK}\,\right)<0,
 $$
 provided $\tau$ is large enough.
Moreover, it is not difficult to check that
 $$
 \lim\limits_{\tau\rightarrow+\infty}\underline{x}\,_\tau=c_{\infty}
 =\lim\limits_{\tau\rightarrow+\infty}\overline{x}_\tau,\quad
 \lim\limits_{\tau\rightarrow+\infty}
 \underline{u}\,_\tau(x)= u_{1,\,\infty}(x)
 =\lim\limits_{\tau\rightarrow+\infty}
 \overline{u}\,_\tau(x),\;
 \forall\;x\in\overline{\Omega}.
 $$
 Hence, the property (1) follows.

 \noindent\underline{Step 4}: Prove the property (1) in the case of
 $c=rK(\alpha_+-1)/\alpha_+$.

 In this case, $c-re^{-rt/2}<rK(\alpha_+-1)/\alpha_+$, and
 $\underline{u}\,_t$ still takes the form of (\ref{eq5.11}) if $t$ is
 large enough. Repeating same the argument as in Step 3, we still
 have that
 \bee
 &&c(\tau)\geq\underline{x}\,_\tau=
 \ln \left(\,{\alpha_+\over \alpha_+-1}\,{c-re^{-r\tau/2}\over rK}\,\right),
 \qquad\liminf\limits_{\tau\rightarrow+\infty} c(\tau)\geq
 \lim\limits_{\tau\rightarrow+\infty} \underline{x}\,_\tau
 =0=c_{\infty},
 \\[2mm]
 &&\liminf\limits_{\tau\rightarrow+\infty} u(x,\tau)
 \geq \lim\limits_{\tau\rightarrow+\infty} \underline{u}\,_\tau(x)
 =u_{1,\,\infty}(x)
 ={K\over \alpha_+}\,e^{\,\alpha_+x}+{c\over r},
 \quad\forall\;x\in\overline{\Omega},
 \eee
 provided $\tau$ is large enough.

 On the other hand, the definition of the free boundary $c(\tau)$
 implies that $c(\tau)\leq0$. Hence, we deduce that
 $$
 \lim\limits_{\tau\rightarrow+\infty} c(\tau)=0=c_{\infty}.
 $$
By applying the same method as in Step 3, we derive that
 $$
 \limsup\limits_{\tau\rightarrow+\infty} u(x,\tau)
 \leq \lim\limits_{\tau\rightarrow+\infty} \overline{u}_\tau(x),
 \quad\forall\;x\in\overline{\Omega}.
 $$
Since
$c+re^{-r\tau/2}>(\alpha_+-1)\,rK/\alpha_+,\,\overline{u}_\tau$
takes form of (\ref{eq5.14}) rather than (\ref{eq5.11}).
 It is easy to calculate that
 $$
 \lim\limits_{\tau\rightarrow+\infty}\overline{u}_\tau
 =\left(\,K-{c\over r}\,\right)\,e^{\alpha_+x}+{c\over r}
 =\left(\,K-{(\alpha_+-1)\,rK\over \alpha_+r}\,\right)
 \,e^{\alpha_+x}+{c\over r}=u_{1,\,\infty}(x).
 $$
From the above arguments, we have that
 $$
 \liminf\limits_{\tau\rightarrow+\infty} u(x,\tau)
 \geq \lim\limits_{\tau\rightarrow+\infty}\underline{u}\,_\tau(x)
 =u_{1,\,\infty}(x)
 =\lim\limits_{\tau\rightarrow+\infty}\overline{u}_\tau(x)
 \geq\limsup\limits_{\tau\rightarrow+\infty} u(x,\tau).
 $$
 Hence, we have proved the property (1) in the case of $c=rK(\alpha_+-1)/\alpha_+$.

 \noindent \underline{Step 5}: Prove the property (2).

 In this case, $\underline{u}\,_\tau,\,\overline{u}_\tau$ take the form
 of (\ref{eq5.14}) if $\tau$ is large enough.
 Repeating the same arguments as in Step 3,
 we get
 $$
 \{x=0\}=\{x:\underline{u}\,_\tau(x)=Ke^x\}\supset
 \{x:u(x,\tau)=Ke^x\}\supset
 \{x:\overline{u}_\tau(x)=Ke^x\}=\{x=0\},
 $$
 provided $\tau$ is large enough. Then the definition of the free boundary
 $c(\tau)$ implies that $c(\tau)=0$ if $\tau$ is large enough. Hence, there
 exists a positive constant $\overline{T}$ such that
 $$
 c(\tau)=0,\;\;\forall\;\tau\geq \overline{T}.
 $$
It is clear that
 $$
 \lim\limits_{\tau\rightarrow+\infty}\underline{u}\,_\tau(x)
 =u_{2,\,\infty}(x)
 =\lim\limits_{\tau\rightarrow+\infty}\overline{u}_\tau(x),\;\;
 \forall\;x\in\overline{\Omega}.
 $$
 Hence, the property (2) follows.
\end{proof}

 In view of the properties (2), (3) in Theorem \ref{th5.2} and the property (1) in
 Theorem \ref{th5.3}, we claim the non-monotonicity property of the free boundary $c(\tau)$ (see Figure 3.8).

\begin{thm}  (Non-monotonicity of the free boundary)

                                                                       \label{th5.4}
 For the case $c<qK$, if furthermore $c\leq
 rL(\alpha_+-1)/\alpha_+$ holds, then the free boundary $c(\tau)$ is non-monotonic in the interval
$[\,0,T\,]$ (where we suppose that $T$ is large enough).
\end{thm}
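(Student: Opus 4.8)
The plan is to read off the non-monotonicity directly from the two preceding theorems: the behaviour of $c(\tau)$ near $\tau=0$ from Theorem \ref{th5.2} and its limit as $\tau\to+\infty$ from Theorem \ref{th5.3}. The strategy is to exhibit times $0<\underline{t}<\tau^*$ at which the three values $c(0)$, $c(\underline{t})$, $c(\tau^*)$ are ordered in a way incompatible with $c(\cdot)$ being either nondecreasing or nonincreasing on $[0,T]$.

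First I would check that the present hypothesis puts us in case (1) of Theorem \ref{th5.3}. Since $L<K$ and $\alpha_+>1$ we have $rL(\alpha_+-1)/\alpha_+<rK(\alpha_+-1)/\alpha_+$, so $c\leq rL(\alpha_+-1)/\alpha_+$ forces $c<rK(\alpha_+-1)/\alpha_+$, whence $\lim_{\tau\to+\infty}c(\tau)=c_\infty=\ln\!\left(\frac{\alpha_+}{\alpha_+-1}\,\frac{c}{rK}\right)$. Rewriting the hypothesis as $\frac{\alpha_+}{\alpha_+-1}\,\frac{c}{rK}\leq \frac{L}{K}$ and taking logarithms gives $c_\infty\leq \ln L-\ln K$. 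Since $c_0=\max\{\underline{X},\,\ln L-\ln K\}\geq \ln L-\ln K$, this yields the crucial inequality $c_\infty\leq c_0$.

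I would then combine this with Theorem \ref{th5.2}: by part (3), $c(0)=c_0$, while by part (2) there is a $\underline{t}>0$ with $c(\underline{t})>c_0$, so $c(\underline{t})>c(0)$. On the other hand, because $c(\tau)\to c_\infty\leq c_0<c(\underline{t})$ as $\tau\to+\infty$, taking $T$ large enough lets me pick $\tau^*\in(\underline{t},T]$ with $c(\tau^*)$ so close to $c_\infty$ that $c(\tau^*)<c(\underline{t})$. If $c$ were nondecreasing on $[0,T]$ this would force $c(\tau^*)\geq c(\underline{t})$, a contradiction; if $c$ were nonincreasing it would force $c(\underline{t})\leq c(0)=c_0$, again a contradiction. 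Hence $c(\tau)$ is non-monotonic.

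I expect no deep obstacle: once Theorems \ref{th5.2} and \ref{th5.3} are available, this statement is essentially a corollary. The only points requiring care are verifying that the hypothesis selects the case-(1) asymptotics rather than case (2) (which rests on $L<K$ together with $\alpha_+>1$), and making the clause ``$T$ large enough'' precise, namely choosing $\tau^*$ far enough out that the convergence $c(\tau)\to c_\infty$ has pushed $c(\tau^*)$ strictly below the value $c(\underline{t})>c_0$ guaranteed by the initial bump of the free boundary.
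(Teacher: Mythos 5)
Your proposal is correct and follows essentially the same route as the paper's proof: both combine the initial bump from Theorem \ref{th5.2} (parts (2)--(3)), giving $c(\underline{t}\,)>c_0=c(0)$, with the convergence $c(\tau)\to c_\infty$ from Theorem \ref{th5.3} (1) and the inequality $c_\infty\leq c_0$ to contradict monotonicity. Your verification of $c_\infty\leq c_0$ is even a slight streamlining: you prove only the needed implication $c\leq rL(\alpha_+-1)/\alpha_+\Rightarrow c_\infty\leq \ln L-\ln K\leq c_0$ directly, whereas the paper establishes the full equivalence $c_0\geq c_\infty\Leftrightarrow c\leq rL(\alpha_+-1)/\alpha_+$, which requires the additional fact $\alpha_+<r/(r-q)$.
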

\begin{proof}
 If $c_0\geq c_{\infty}$, then the properties (2),
 (3) in Theorem \ref{th5.2} imply that there exists a $t_1>0$ such that
 $$
 c(t_1)>c_0=c(0)\geq c_{\infty}.
 $$
According to the property (1) in Theorem \ref{th5.3}, we know that
there exists a
 $t_2$ large enough such that $t_2>t_1$ and
 $$
 c(t_2)\leq {c_{\infty}+c(t_1)\over 2}<c(t_1).
 $$
 Hence, the free boundary $c(\tau)$ is non-monotonic. On the other hand, it is
 clear that
 \bee
 c_{0}\geq c_{\infty}\Leftrightarrow
 \max\left\{\,{c\over qK},\,{L\over K}\right\}\geq
 {\alpha_+\over \alpha_+-1}\;{c\over rK}
 \Leftrightarrow
 \max\left\{\,{c\over q},\,L\right\}\geq
 {\alpha_+\over \alpha_+-1}\;{c\over r}.
 \eee
 By applying the same method as in the proof of (\ref{eq5.12}), we conclude that
 $$
 \alpha_+ <{r\over r-q}\Leftrightarrow (r-q)\alpha_+< r
 \Leftrightarrow {c\over q}<{\alpha_+\over \alpha_+-1}\;{c\over r}.
 $$
 Hence,
 $$
 c_{0}\geq c_{\infty}\Leftrightarrow
 L\geq {\alpha_+\over \alpha_+-1}\;{c\over r}
 \Leftrightarrow
 c\leq {rL\,(\alpha_+-1)\over \alpha_+}.
 $$
\end{proof}

 Next, we consider the monotonicity and regularity of the free boundary $c(\tau)$ if $c\geq rL$. Since (\ref{eq4.13}) holds,
 the problem is
relatively standard in this case.

\begin{thm} (Regularity of the free boundary)                                                                     \label{th5.5}
 For the case $c<qK$, if furthermore $c\geq rL$ holds, the free boundary $c(\tau)$ is increasing with respect
 to $\tau$ on the interval $[\,0,T\,]$ with $c(\tau)\in C[\,0,T\,]\cap C^\infty(0,T\,]$.
 {Moreover, $c(\tau)$ is strictly increasing on $[\,0,\underline{T}\,]$ with
 $\underline{T}=\sup\{\tau\in[\,0,T\,]:c(\tau)<0\}$.}
\end{thm}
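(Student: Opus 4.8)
The plan is to read the monotonicity of $c(\tau)$ directly off the gradient estimate \eqref{eq4.13}, upgrade it to continuity and then to $C^\infty$‑smoothness by the standard one–phase free boundary machinery, and finally promote the monotonicity to strict monotonicity below the level $x=0$ by a Hopf‑type argument. First I would prove that $c(\tau)$ is nondecreasing. Since $c\ge rL$, Theorem \ref{th4.1} gives $\p_\tau u\ge 0$ a.e. in $\Omega_T$, and because the obstacle $Ke^x$ is independent of $\tau$ the coincidence set $\mathbf{CV}_x=\{u=Ke^x\}$ can only grow: if $u(x_0,\tau_0)>Ke^{x_0}$ then $u(x_0,\tau)\ge u(x_0,\tau_0)>Ke^{x_0}$ for every $\tau>\tau_0$, so $(x_0,\tau_0)\in\mathbf{CT}_x$ forces $(x_0,\tau)\in\mathbf{CT}_x$. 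Recalling $\mathbf{CT}_x=\{x<c(\tau)\}$, the continuation region expands with $\tau$, which is exactly the statement that $c(\tau)$ increases; together with $c(0)=c_0$ from Theorem \ref{th5.2}(3) this fixes the left endpoint.

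Next I would establish $c\in C[0,T]$. Being monotone, $c$ has one–sided limits and at worst jumps. Left–continuity is immediate from the openness of $\mathbf{CT}_x=\{u>Ke^x\}$: if $x_0<c(\tau_0)$ then $x_0<c(\tau)$ for $\tau$ near $\tau_0$, so $\liminf_{\tau\to\tau_0}c(\tau)\ge c(\tau_0)$, which with monotonicity gives $c(\tau_0^-)=c(\tau_0)$. To exclude an upward jump at $\tau_0$, suppose $c(\tau_0)<c(\tau_0^+)$ and pick $x_1$ strictly between them. Then $u(\cdot,\tau)=Ke^{\cdot}$ near $x_1$ for all $\tau\le\tau_0$, while $u(x_1,\tau)>Ke^{x_1}$ for $\tau$ slightly above $\tau_0$; hence on the bottom edge of this continuation strip the equation forces $\p_\tau u\to c+\mathcal{L}(Ke^x)=c-qKe^{x}$. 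Since $x_1>c(\tau_0)\ge\underline{X}$ by Theorem \ref{th5.2}(1), we get $c-qKe^{x_1}<0$, contradicting $\p_\tau u\ge 0$. Thus no jump occurs and $c$ is continuous.

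With $c$ continuous and monotone, the smoothness is the standard and technically heaviest part: the monotonicity $\p_\tau(u-Ke^x)=\p_\tau u\ge 0$ is precisely the hypothesis under which the free boundary of such a one–phase parabolic obstacle problem is known to be Lipschitz and then $C^\infty$ on $(0,T]$ (see \cite{Fr2}). Concretely, I would first bootstrap continuity to Lipschitz continuity of $c$, then flatten the free boundary by a hodograph/Legendre transform, turning \eqref{eq4.1} into a fixed–boundary problem whose coefficients inherit the regularity of $u$; iterated Schauder estimates then yield $c\in C^\infty(0,T]$. I expect this bootstrap to be the \emph{main obstacle} to a fully self–contained argument, although it is routine once \eqref{eq4.13} is available.

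Finally I would strengthen monotonicity to strict monotonicity on $[0,\underline{T}]$. Suppose instead $c\equiv\xi$ on some $[\tau_1,\tau_2]\subseteq[0,\underline{T}]$ with $\tau_1<\tau_2$; then $\xi=c(\tau_1)<0$ by the definition of $\underline{T}$. Set $w=\p_\tau u\ge 0$; differentiating $\p_\tau u-\mathcal{L}u=c$ in $\tau$ (all coefficients and $c$ being constant) shows $w$ solves the homogeneous equation $\p_\tau w-\mathcal{L}w=0$ in the continuation region. Along the flat boundary $\{x=\xi\}$ both $u=Ke^{\xi}$ and $\p_x u=Ke^{\xi}$ are constant in $\tau$, so $w=0$ and $\p_x w=\p_\tau(\p_x u)=0$ there. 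If $w\not\equiv 0$, the strong maximum principle gives $w>0$ in the interior, and Hopf's lemma at the flat boundary then forces $\p_x w<0$, contradicting $\p_x w=0$. Hence $w\equiv 0$ near the flat piece, i.e. $u$ is stationary there and coincides with the perpetual profile $u_{1,\infty}$ of Theorem \ref{th5.3}; propagating $w\equiv0$ backward by the strong maximum principle contradicts the strict time–dependence of $u$ near $\tau=0$ (for $x<\ln L-\ln K$ the equation gives $\p_\tau u\to c-rL\ge0$, strict when $c>rL$, while the asymptotics in the proof of Theorem \ref{th5.2} show that $u$ genuinely increases). Therefore no flat piece with $\xi<0$ can exist, so $c$ is strictly increasing up to the level where it reaches $x=0$, that is, on $[0,\underline{T}]$.
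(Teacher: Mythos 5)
Your proposal follows essentially the same route as the paper's proof: monotonicity of $c(\tau)$ read off from the estimate $\p_\tau u\geq 0$ in (\ref{eq4.13}), continuity by contradiction via the computation $\p_\tau u=c+{\cal L}(Ke^x)=c-qKe^x<0$ on a hypothetical jump strip (using $c(\tau)\geq\underline{X}$ from Theorem \ref{th5.2}), smoothness deferred to the standard theory of \cite{Fr2}, and strict monotonicity by applying Hopf's lemma to $w=\p_\tau u$ along a hypothetical flat vertical piece of the free boundary, contradicting $\p_\tau(\p_x u)=0$ there. The only notable difference is that you explicitly rule out the degenerate alternative $w\equiv0$ (which Hopf's lemma does not cover) by propagating stationarity back to the initial data, a case the paper's proof passes over silently, so your argument is if anything slightly more complete on that point.
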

\begin{proof}
 According to (\ref{eq4.12}) and (\ref{eq4.13}), we have
 \bee
 \p_x(u-Ke^x)\leq0,\;\;\p_\tau(u-Ke^x)\geq0\;\;\mbox{a.e. in }\Omega_T.
 \eee
 By combining $u-Ke^x\in C(\,\overline{\Omega_T}\,)$, we deduce
 that $u(x,\tau)-Ke^x$ is increasing with respect to $\tau$ and decreasing
 with respect to $x$.

 For any fixed $\tau_0\in(0,T\,]$ and any $x\in[\,c(\tau_0),0\,],\;\tau\in[\,0,\tau_0\,]$, we derive that
 $$
 0\leq u(x,\tau)-Ke^x\leq u(c(\tau_0),\tau)-Ke^{c(\tau_0)}
 \leq  u(c(\tau_0),\tau_0)-Ke^{c(\tau_0)}=0,
 $$
 where we have used
 that $u=Ke^x$ on the free boundary. Hence, the definition of the free boundary
 implies that $c(\tau)\leq c(\tau_0)$ for any $\tau\in[\,0,\tau_0\,]$. Hence,
 we deduce that $c(\tau)$ is increasing on $[\,0,T\,]$.

 The property (3) in Theorem \ref{th5.2} implies that $c(\tau)$ is right-continuous
 at $\tau=0$. Next, we prove that $c(\tau)$ is continuous on $(0,T\,]$. Otherwise,
 there exist some constants $x_1,\,x_2,\,t_1$ such that $x_2<x_1\leq 0,\,0<t_1<T,\,
 \lim_{\tau\ra t_1^-}c(\tau)=x_2,\;\lim_{\tau\ra t_1^+}c(\tau)=x_1$ (see Figure 3.9), and
 $$
 \p_\tau u-{\cal L}u=c\;\;\mbox{in}\;\;
 (x_2,x_1)\times[\,t_1,T\,],\qquad
 u(x,t_1)=Ke^x,\;\forall\;x\in(x_2,x_1).
 $$
  \begin{picture}(270,130)(90,0)
 \put(240,10){\vector(1,0){140}} \put(360,6){\vector(0,1){100}}
 \put(240,95){\line(1,0){125}}
 \put(358,108){$\tau$}\put(383,10){$x$}
 {\thicklines\qbezier(260,9)(290,15)(300,30)
 \put(300,40){\line(1,0){30}}
 \qbezier(330,40)(350,42)(355,95)}
 \multiput(300,95)(0,-5){17}{\line(0,-1){2}}
 \multiput(330,95)(0,-5){17}{\line(0,-1){2}}
 \put(328,7){$\bullet$}\put(328,-3){$x_1$}
 {\thicklines\put(300,30){\line(0,1){10}}}
 \put(297,7){$\bullet$}\put(298,-3){$x_2$}
 \multiput(330,40)(5,0){7}{\line(1,0){2}}
 \multiput(300,30)(5,0){12}{\line(1,0){2}}
 \put(358,37){$\bullet$}\put(368,37){$t_1$}
 \put(358,27){$\bullet$}\put(368,27){$t_2$}
 \put(358,92){$\bullet$}\put(368,90){$T$}
 \put(265,45){${\bf CT}_x$}\put(321,20){${\bf CV}_x$}
 \put(305,45){$c(\tau)$}\put(307,75){${\cal M}$}
 \end{picture}
 \begin{center}
 Figure$\!$ 3.9. $\;\;\;$Non-continuous free boundary
 \end{center}

If $x_2<x<x_1$, then we have
 $$
 \p_\tau u(x,t_1)=c+{\cal L} Ke^x=c-qK\,e^x<0,
 $$
 where the last inequality follows from $x_2\geq \underline{X}\,$, which is deduced from
 the property (1) in Theorem \ref{th5.2}. It is clear that the above inequality
 contradicts (\ref{eq4.13}).


 {Next, we prove that $c(\tau)$ is strictly increasing on
 $[\,0,\underline{T}\,]$.} Otherwise, there exist some constants
 $x_2,\,t_1,\,t_2$ such that $x_2<0,\;0\leq t_2<t_1\leq {\underline{T}}$ and
 $c(\tau)=x_2$ for any $\tau\in[\,t_2,t_1\,]$ (see Figure 3.9).
It is clear that $u(x,\tau)=Ke^x$ for any
$(x,\tau)\in[\,x_2,0\,]\times[\,t_2,t_1\,]$.
 Since $\p_x u$ continuously crosses the free boundary, then
 $\p_x u(x_2,\tau)=Ke^{x_2}$ for any $\tau\in[\,t_2,t_1\,]$.
 We then deduce that
 \be                                                                          \label{eq5.15}
 \p_\tau u(x_2,\tau)=0,\;\;\p_\tau(\p_xu)(x_2,\tau)=0,\quad
 \forall\;\tau\in[\,t_2,t_1\,].
 \ee
 On the other hand, in the domain ${\cal N}=(-\infty,x_2)\times(t_2,t_1\,]$,
 $u$ and $\p_\tau u$ respectively satisfies
 \bee
 &&\p_\tau u-{\cal L}u=c\;\;\;\mbox{in}\;\;{\cal N},\qquad
 u(x_2,\tau)=Ke^{x_2},\;\forall\;\tau\in (t_2,t_1),
 \\[2mm]
 &&\left\{
 \begin{array}{l}
 \p_\tau (\p_\tau u)-{\cal L}(\p_\tau u)=0,\;\;\;
 \p_\tau u\geq0\;\;\;\mbox{in}\;\;{\cal N},
 \vspace{2mm}\\
 \p_\tau u(x_2,\tau)=0,\;\forall\;\tau\in (t_2,t_1).
 \end{array}
 \right.
 \eee
 By applying the Hopf lemma, we deduce $\p_x(\p_\tau u)(x_2,\tau)<0$,
 which contradicts the second equality in (\ref{eq5.15}).

Finally, since we have the estimate (\ref{eq4.13}), it is standard
to show that $C^\infty(0,T\,]$ (see \cite{Fr2}).
\end{proof}

 Next, we improve the regularity of the free boundary $c(\tau)$ for the case
 $c<rL$. In this case, (\ref{eq4.13}) is false, so the standard method in
 \cite{Fr2} does not apply to this problem. The main idea to improve the regularity
 is to apply some proper coordinate transformation to the original problem, and
 transform it into a new problem, and achieve the estimate similar to (\ref{eq4.13}).

\begin{thm} (Regularity of the free boundary)

                                                                         \label{th5.6}
 For the case $c<qK$, if furthermore $c<rL$ holds, then the free boundary $c(\tau)\in C[\,0,T\,]\cap C^\infty(0,T\,]$.
\end{thm}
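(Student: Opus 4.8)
The plan is to reduce this case to the situation already settled in Theorem \ref{th5.5} by restoring, in suitably chosen coordinates, the one-signed monotonicity estimate (\ref{eq4.13}). Recall that for $c\ge rL$ the entire regularity argument rests on $\p_\tau u\ge0$, and that this is exactly what fails now: on the flat part $\{x<\ln L-\ln K\}$ of the terminal datum one has $u(\cdot,0)=L$, whence $\p_\tau u(x,0^+)=c-rL<0$. So the first task is to pin down the precise source of the non-monotonicity and to confine it away from the free boundary, where it does not affect the regularity of $c(\tau)$.

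The structural facts I would exploit are those already furnished by Theorem \ref{th4.1}. From (\ref{eq4.12}) the difference $u-Ke^x$ is nonincreasing in $x$, with the smooth fit $\p_x(u-Ke^x)=0$ and hence $\p_\tau(u-Ke^x)=0$ holding on the free boundary. Writing $v=u-Ke^x$, in the continuation region $v$ solves $\p_\tau v-{\cal L}v=c-qKe^x$, and by Theorem \ref{th5.2}(1) the free boundary satisfies $c(\tau)\ge\underline{X}=\ln c-\ln K-\ln q$, so the source $c-qKe^x$ is $\le0$ on the free boundary and strictly negative wherever $c(\tau)>\underline{X}$. This strict negativity is precisely the non-degeneracy that a free-boundary regularity theory requires, and I would first check, by a Hopf-lemma argument paralleling the proof of Theorem \ref{th5.2}(2), that $c(\tau)>\underline{X}$ for every $\tau\in(0,T]$ (recall also $c_\infty>\underline{X}$ as in (\ref{eq5.12})).

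With this in hand I would introduce a coordinate transformation $(x,\tau)\mapsto(\xi,\tau)$ that tilts the time direction into a direction $\p_\tau-\mu\,\p_x$ with $\mu>0$, transversal to the free boundary, chosen so that in a one-sided neighborhood of the coincidence set the transformed difference satisfies the analogue of (\ref{eq4.13}), i.e. it is monotone in the new time variable. The point is that on the free boundary $\p_x v=\p_\tau v=0$ while $\p_x v<0$ just inside, so the directional derivative $\p_\tau v-\mu\,\p_x v$ can be made nonnegative near the boundary, and this sign can then be propagated into the transformed continuation region by the comparison principle for variational inequalities (the tool already used in Theorems \ref{th5.3}--\ref{th5.5}, see \cite{Yan1}). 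The transformed problem is then a uniformly parabolic VI whose free boundary is a monotone (Lipschitz) graph in the new coordinates, so the standard machinery of Friedman \cite{Fr2} applies as in Theorem \ref{th5.5} and yields $C^\infty$ regularity of the transformed boundary for positive time; since the change of variables is a smooth diffeomorphism preserving the coincidence set, $c(\tau)\in C^\infty(0,T]$ follows. Continuity on the closed interval is obtained as before: right-continuity at $\tau=0$ is Theorem \ref{th5.2}(3), while a jump on $(0,T]$ would, after the transformation, force the solution to coincide with the obstacle along a horizontal segment and thereby violate the newly established monotonicity, exactly as the jump was excluded in the proof of Theorem \ref{th5.5}.

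I expect the main obstacle to be the construction and verification of the transformation near the corner $(\ln L-\ln K,0)$. There the terminal datum is only Lipschitz, the estimate (\ref{eq4.13}) genuinely fails, and $\p_x v\to0$ as one approaches the free boundary, so one must verify two delicate points simultaneously: that a single tilt angle $\mu$ yields the directional monotonicity in a full one-sided neighborhood of the free boundary up to $\tau=0^+$ (and not merely on the boundary itself), and that the transformation keeps the variational inequality nondegenerate and parabolic across this singular point. Once the tilted estimate is secured, the remainder is a routine adaptation of the arguments in Theorems \ref{th5.2} and \ref{th5.5}.
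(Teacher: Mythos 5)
Your top-level strategy is in fact the paper's: the proof of Theorem \ref{th5.6} tilts the coordinates, $y=x+(r-c/L)\tau$ (so the new time direction is exactly $\p_\tau-\mu\,\p_x$ with $\mu=r-c/L>0$), restores the analogue of (\ref{eq4.13}) in the new variables, and then invokes the standard method of \cite{Fr2}. But your mechanism for producing the tilted monotonicity has a genuine gap. First, the paper's transformation (\ref{eq5.16}) is not a pure tilt: it also multiplies by the weight $e^{(r-c/L)\tau}$, i.e. $v(y,\tau)=e^{(r-c/L)\tau}\,(u(x,\tau)-Ke^x)$, which changes the zeroth-order coefficient of the operator from $-r$ to $-c/L$. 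This weight is essential, and the obstruction is visible on the initial line, not at the free boundary: on the flat part of the initial datum, $\p_\tau(u-Ke^x)(x,0^+)=c-rL<0$ while $\p_x(u-Ke^x)(x,0)=-Ke^x\to0$ as $x\to-\infty$, so for any fixed $\mu>0$ the pure-tilt combination tends to $c-rL<0$ far out on the initial line; no choice of $\mu$ rescues this. With the weight and with precisely $\mu=r-c/L$, the relevant quantity becomes $\p_\tau u-\mu\,\p_x u+\mu u$, which on the flat part equals $(c-rL)+(r-c/L)L=0$: the flat part of the datum is turned into a stationary state, and indeed $L-Ke^y$ becomes a subsolution of the transformed VI (\ref{eq5.17}).

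Second, your core local claim --- that since $\p_x v=\p_\tau v=0$ on the free boundary and $\p_x v<0$ just inside, the combination $\p_\tau v-\mu\,\p_x v$ "can be made nonnegative near the boundary" --- is not a valid deduction: both derivatives vanish on the boundary, $\p_\tau v$ may be negative and of the same order as $|\p_x v|$ just inside, and comparing their rates of vanishing is precisely what has to be proved. Nor can the sign be "propagated by the comparison principle": the differentiated function $w=\p_\tau v-\mu\,\p_x v$ satisfies $\p_\tau w-{\cal L}w=\mu qKe^x\geq0$ in ${\bf CT}_x$, so a minimum-principle argument needs the sign of $w$ on the \emph{whole} parabolic boundary of whatever region you use --- either the initial line (where it fails for a pure tilt, as above) or an artificial lateral boundary inside ${\bf CT}_x$ (where it is unknown). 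The paper sidesteps differentiation entirely: it shows $(L-Ke^y)^+$ is a subsolution of (\ref{eq5.17}), deduces $v(\cdot,\delta)\geq v(\cdot,0)$, and then compares the time-shifted solution $\widetilde v(y,\tau)=v(y,\tau+\delta)$ with $v$ via the comparison principle for VIs (the shift also increases the source $ce^{(r-c/L)\tau}$, again because $r-c/L>0$), obtaining $\p_\tau v\geq0$ globally; your preliminary step $c(\tau)>\underline{X}$ is not needed. So: right transformation idea, but the monotonicity must come from the weighted transformation plus a global comparison of shifted VI solutions, not from a local sign argument at the free boundary.
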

\begin{proof} We first apply the following transformation
 \be                                                                                           \label{eq5.16}
 y=x+\left(\,r-\frac cL\,\right)\tau,\qquad\;
 v(y,\tau)=e^{(r-\frac cL)\,\tau}\,(\,u(x,\tau)-Ke^x\,).
 \ee
 It is not difficult to deduce that $v$ satisfies the following VI (see Figure 3.10):

\begin{picture}(300,130)(100,0)
 \put(190,10){\vector(1,0){190}} \put(260,6){\vector(0,1){105}}
 \put(258,113){$\tau$}\put(383,10){$y$}
 {\thicklines\qbezier(210,9)(265,50)(280,100)}
 \put(258,7){$\bullet$}\put(252,-4){$O$}
 \put(258,13){$\bullet$}\put(252,16){$\delta$}
 \put(260,10){\line(1,1){90}}
 \qbezier[100](190,15)(225,15)(260,15)
 \qbezier[100](260,15)(300,55)(345,100)
 \put(220,45){${\bf CT}_y$}\put(265,45){${\bf CV}_y$}
 \put(245,78){${\bf c_y(\tau)}$}
 \end{picture}
 \begin{center}
 Figure$\!$ 3.10.$\;\;\;\;\;$The free boundary $c_y(\tau)$ after transformation
\end{center}
 \be                                                                                           \label{eq5.17}
 \left\{
 \begin{array}{ll}
 \p_\tau v-{\cal L}_y v
 =ce^{(r-\frac cL)\tau}-q\,\!Ke^{y},
 \qquad\mbox{if}\;v>0\;\;\mbox{and}\;\;
 (y,\tau)\in \Omega\,^y_T,
 \vspace{2mm}\\
 \p_\tau v-{\cal L}_y v
 \geq ce^{(r-\frac cL)\tau}-q\,\!Ke^{y},
 \qquad\mbox{if}\;v=0\;\;\mbox{and}\;\;
 (y,\tau)\in \Omega\,^y_T,
 \vspace{2mm}\\
 v((r-c/L)\tau,\tau)=0,\hspace{3.8cm}0\leq\tau\leq T,
 \vspace{2mm}\\
 v(y,0)=(L-Ke^y)^+,\hspace{3.8cm}y\leq0,
 \end{array}
 \right.
 \ee
 where
 $$
 {\cal L}_y v=\frac {\sigma^2}{2}\,\p_{yy}v +\left(\,\frac cL -q-\frac
 {\sigma^2}{2}\,\right)\,\p_y v-\frac cL v,\;\;
 \Omega\,^y_T\stackrel{\Delta}{=}
 \left\{\,y<\left(\,r-\frac cL\,\right)\,\tau,\;
 0<\tau\leq T\,\right\}.
 $$

 For any small enough $\delta>0$, we denote
 $$
 \widetilde{v}(y,\tau)=v(y,\tau+\delta),\;\;(y,\tau)\in \Omega\,^y_{T-\delta}.
 $$
 Then $\widetilde{v}$ satisfies the following VI (see Figure 3.10):
 \bee
 \left\{
 \begin{array}{ll}
 \p_\tau \widetilde{v}-{\cal L}_y \widetilde{v}
 =ce^{(r-\frac cL)(\tau+\delta)}-q\,\!Ke^{y}\geq ce^{(r-\frac
 cL)\tau}-q\,\!Ke^{y},
 \quad&\mbox{if}\;\widetilde{v}>0\;\;\mbox{and}\;\;
 (y,\tau)\in \Omega\,^y_{T-\delta},
 \vspace{2mm}\\
 \p_\tau \widetilde{v}-{\cal L}_y \widetilde{v}
 \geq ce^{(r-\frac cL)(\tau+\delta)}-q\,\!Ke^{y}\geq ce^{(r-\frac
 cL)\tau}-q\,\!Ke^{y},
 \quad&\mbox{if}\;\widetilde{v}=0\;\;\mbox{and}\;\;
 (y,\tau)\in \Omega\,^y_{T-\delta},
 \vspace{2mm}\\
 \widetilde{v}((r-c/L)\tau,\tau)=v((r-c/L)\tau,\tau+\delta)\geq0,&0\leq\tau\leq
 T-\delta,
 \vspace{2mm}\\
 \widetilde{v}(y,0)=v(y,\delta),&y\leq0.
 \end{array}
 \right.
 \eee

 Next, we prove $\widetilde{v}\geq v$ in $\overline{\Omega\,^y_{T-\delta}}$. In fact, the comparison
 principle for VI {(see \cite{Yan1})} implies that it is sufficient to show that
 $$
 \widetilde{v}(y,0)=v(y,\delta)\geq (L-Ke^y)^+=v(y,0).
 $$
Moreover, since $v\geq0$, then what we need to prove is that
$L-Ke^y$ is a subsolution of~\eqref{eq5.17}.
 Indeed, we can check that
 \bee
 \left\{
 \begin{array}{ll}
 \p_\tau (L-Ke^y)-{\cal L}_y (L-Ke^y)=c-q\,\!Ke^{y}\leq
 ce^{(r-\frac cL)\tau}-q\,\!Ke^{y},
 \vspace{2mm}\\
 (L-Ke^y)\Big|_{y=(r-c/L)\tau}\leq0=v(y,\tau)\Big|_{y=(r-c/L)\tau},\;\;0\leq\tau\leq T.
 \end{array}
 \right.
 \eee
Hence, we conclude that $L-Ke^y$ is indeed a subsolution
of~\eqref{eq5.17}.

We have showed $v(y,\tau+\delta)=\widetilde{v}(y,\tau)\geq
v(y,\tau)$ in $\overline{\Omega\,^y_{T-\delta}}$
 for any small enough $\delta$, which implies $\p_\tau v\geq0$ almost everywhere in $\Omega\,^y_T$. Hence, by using the
 method as in \cite{Fr2}, we can prove that $c_y(\tau)\in C[\,0,T\,]\cap C^\infty(0,T\,]$.
According to the transformation (\ref{eq5.16}), we have
 $c_x(\tau)=c_y(\tau)-\left(\,r-\frac cL\,\right)\tau$. Therefore,
 $c(\tau)\in C[\,0,T\,]\cap C^\infty(0,T\,]$.
\end{proof}

\appendix
\section{The Proof of Theorem \ref{th4.1}}

We prove Theorem \ref{th4.1} in this appendix. Since (\ref{eq4.1})
lies in the unbounded domain $\Omega_T$,
 we use the following VI in the bounded domain to approximate
 (\ref{eq4.1}),
 \be                                                                               \label{eq4.2}
 \left\{
 \begin{array}{l}
 \p_\tau u_n-{\cal L} u_n=c,
 \qquad\qquad\qquad\mbox{if}\;u_n>Ke^x\;\;\mbox{and}\;\;
 (x,\tau)\in \Omega_T^n,
 \vspace{2mm} \\
 \p_\tau u_n-{\cal L} u_n\geq c,
 \qquad\qquad\qquad\mbox{if}\;u_n=Ke^x\;\;\mbox{and}\;\;
 (x,\tau)\in \Omega_T^n,
 \vspace{2mm} \\
 u_n(-n,\tau)\;={c\over r}+{rL-c\over r}\,e^{-r\tau},
 \qquad u_n(0,\,\tau)\;=K,\qquad
 0\leq \tau\leq T,
 \vspace{2mm} \\
 u_n(x,0)\;=\max\{L,Ke^x\},\hspace{1cm}\qquad\quad
 -n\leq x\leq 0,
 \end{array}
 \right.
 \ee
 where $\Omega_T^n\stackrel{\Delta}{=}(-n,0)\times(0,T\,]$ and
 $n\in \mathbb{N}_+$ satisfying $n>\max\{\,\ln K-\ln L,\ln r+\ln K-\ln c\,\}$.

 Next, we utilize the penalty method to prove the existence of the solution of
 (\ref{eq4.2}). We first  construct the penalty function
 $\beta_\ep(\cdot)$ such that
\be\nonumber
 &\beta_\ep (s)\in C^\infty(\mathbb{R}),\qquad \beta_\ep(s)\geq 0,\qquad
 \beta_\ep^{\prime }(s)\geq 0,\qquad \beta_\ep^{\prime \prime }(s)\geq
 0,
 \\                                                                              \label{eq4.3}
 &\beta_\ep(s)=0\;\,\mbox{for any}\;\,s\leq-\ep,\qquad\qquad
 \beta_\ep(0)=M\stackrel{\triangle}{=}qK-c>0,
 \ee
 and
 $$
    \lim\limits_{\varepsilon\rightarrow 0} \beta_\varepsilon (s)
    =\left\{
    \begin{array}{ll}
    0, & s<0,
    \vspace{2mm}\\
    +\infty, &s>0.
    \end{array}
    \right.
 $$

 Then we use the following penalty problem to approximate (\ref{eq4.2}):
\be \label{eq4.4}
 \left\{
 \begin{array}{l}
 \p_\tau u_{\ep,\,n}-{\cal L} u_{\ep,\,n}-\beta_\ep(Ke^x-u_{\ep,\,n})=c
 \qquad\quad\mbox{in}\;\;\Omega_T^n,
 \vspace{2mm} \\
 u_{\ep,\,n}(-n,\,\tau)\;={c\over r}+{rL-c\over r}\,e^{-r\tau},\qquad
 u_{\ep,\,n}(0,\,\tau)\;=K,\hspace{0.6cm}
 0\leq \tau\leq T,
 \vspace{2mm} \\
 u_{\ep,\,n}(x,0)\;=\pi_\ep(Ke^x-L)+L,\hspace{1cm}\quad\;\;
 -n\leq x\leq 0,
 \end{array}
 \right.
\ee
 where $\pi_\ep(s)$ is a smoothing function for smoothing the initial value
 $\max\{L,Ke^x\}$,  which satisfies
 $\pi_\ep (s)\in C^\infty(\mathbb{R}),\;\;\pi_\ep(s)\geq s,\;\;
 0\leq \pi_\ep^\prime(s)\leq 1,\;\;
 \pi_\ep ^{\prime\prime}(s)\geq 0,\;\;
 \lim\limits_{\ep\rightarrow 0^+}\pi_\ep (s)=s^+$ and
 \bee
 \pi_\ep (s)=\left\{
 \begin{array}{ll}
 s, & s\geq\ep,   \vspace{2mm}\\
 0, & s\leq -\ep.
 \end{array}
 \right.
 \eee

\begin{lem}                                                                             \label{lm4.1}
 For any fixed $n$ and $\ep$, (\ref{eq4.4}) has a unique strong solution
 such that $u_{\ep,n}\in W_p^{2,\,1}(\Omega_T^n)\cap C(\,\overline{\Omega_T^n}\,)$ for
 any  $1<p<\infty$, and we have the following estimates:
 \be                                                                                     \label{eq4.5}
 &\displaystyle{\max\left\{Ke^x,{c\over r}
 +{rL-c\over r}\,e^{-r\tau}\right\}
 \leq u_{\ep,\,n}\leq K}
 &\mbox{in}\;\;\overline{\Omega_T^n};
 \\[2mm]                                                                                 \label{eq4.6}
 &0\leq\p_x  u_{\ep\,n}
 \leq K\left(e^{x}-\alpha_- e^{\alpha_-(x+n)-n}\right)
 &\mbox{on}\;\;\overline{\Omega_T^n},
 \ee
 where $\alpha_+,\,\alpha_-$ are the
 positive and negative characteristic roots for the ordinary differential
 operator ${\cal L}$, respectively. That is, $\alpha_+,\,\alpha_-$
 are respectively
 the positive and negative roots of the following algebra equation:
 $$
 {\sigma^2\over2}\,\alpha^2+
 \left(\,r-q-{\frac{\sigma ^2}2}
 \,\right)\,\alpha-r=0.
 $$
 If furthermore $c\geq r\,L$, we have the following estimate:
 \be                                                                                   \label{eq4.7}
 \p_\tau  u_{\ep\,n} \geq -r\ep\;\;
 \mbox{a.e. in }\;\;\Omega_T^n.
 \ee
\end{lem}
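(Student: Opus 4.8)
The plan is to treat (\ref{eq4.4}) as a semilinear, uniformly parabolic Dirichlet problem on the bounded cylinder $\Omega_T^n$ and to extract every estimate from the maximum principle, differentiating the equation where needed. For existence I would use the method of sub- and super-solutions: the order bounds (\ref{eq4.5}), proved independently below, confine any solution to a range on which $\beta_\ep$ and its derivatives are bounded, so monotone iteration between the sub-solution $\max\{Ke^x,\frac cr+\frac{rL-c}{r}e^{-r\tau}\}$ and the super-solution $K$ converges; linear $L^p$-theory upgrades the limit to $W^{2,1}_p(\Omega_T^n)\cap C(\overline{\Omega_T^n})$, and interior Schauder bootstrapping (using the smoothness of $\beta_\ep,\pi_\ep$ and the constant coefficients of ${\cal L}$) renders $u_{\ep,n}$ smooth inside, which legitimizes the differentiations below. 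Uniqueness is immediate: for two solutions $u_1,u_2$, the difference $d=u_1-u_2$ satisfies $\p_\tau d-{\cal L}d+\beta_\ep'(\xi)d=0$ for an intermediate $\xi$, with null parabolic data and nonnegative zeroth-order coefficient $r+\beta_\ep'(\xi)$, so $d\equiv0$.

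For (\ref{eq4.5}) I would exhibit explicit barriers. The constant $K$ is a super-solution, since $\p_\tau K-{\cal L}K-\beta_\ep(Ke^x-K)-c=rK-c-\beta_\ep(Ke^x-K)\geq rK-c-\beta_\ep(0)=(r-q)K\geq0$, using $Ke^x-K\leq0$, the monotonicity of $\beta_\ep$, $\beta_\ep(0)=qK-c$, and $r\geq q$. Likewise $Ke^x$ is a sub-solution because ${\cal L}(Ke^x)=-qKe^x$ leaves defect $qK(e^x-1)\leq0$, and $\phi(\tau):=\frac cr+\frac{rL-c}{r}e^{-r\tau}$ is a sub-solution because ${\cal L}\phi=-r\phi$ forces $\p_\tau\phi-{\cal L}\phi=c$ and leaves defect $-\beta_\ep(Ke^x-\phi)\leq0$. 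The hypotheses $n>\max\{\ln K-\ln L,\ln r+\ln K-\ln c\}$ are exactly what orders the lateral and initial data (they give $Ke^{-n}<\min\{L,c/r\}\leq\phi$, while $\pi_\ep(s)\geq\max\{s,0\}$ places the initial datum between $Ke^x$ and $K$), so the comparison principle yields (\ref{eq4.5}).

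For (\ref{eq4.6}) I would differentiate the equation in $x$; since ${\cal L}$ has constant coefficients, $w:=\p_x u_{\ep,n}$ solves $\p_\tau w-{\cal L}w+\beta_\ep'(Ke^x-u)w=\beta_\ep'(Ke^x-u)Ke^x$, with nonnegative zeroth-order coefficient $r+\beta_\ep'$ and nonnegative right-hand side. The lower bound $w\geq0$ follows from the minimum principle after checking $w\geq0$ on the parabolic boundary $\p_p\Omega_T^n$: on $\{\tau=0\}$, $w=\pi_\ep'(Ke^x-L)Ke^x\geq0$; and since (\ref{eq4.5}) makes $u$ attain its slicewise maximum $K$ at $x=0$ and its slicewise minimum $\phi(\tau)$ at $x=-n$, one-sided difference quotients give $\p_x u(0,\tau)\geq0$ and $\p_x u(-n,\tau)\geq0$. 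For the upper bound I would use that $\alpha_-$ is a characteristic root, so ${\cal L}(e^{\alpha_-x})=0$ and $h(x):=K(e^x-\alpha_-e^{\alpha_-(x+n)-n})$ satisfies $\p_\tau h-{\cal L}h+\beta_\ep'h-\beta_\ep'Ke^x=qKe^x+\beta_\ep'(h-Ke^x)\geq0$, because $h-Ke^x=-K\alpha_-e^{\alpha_-(x+n)-n}\geq0$; hence $h$ is a super-solution of the $w$-equation, and $\pi_\ep'\leq1$ gives $w\leq Ke^x\leq h$ on $\{\tau=0\}$. At $x=0$ the sub-solution $Ke^x$ touches $u$, so $\p_x u(0,\tau)\leq K\leq h(0)$. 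The hard part will be the boundary-gradient bound at $x=-n$: there I expect to build a local super-solution from $e^{\alpha_-x}$ on a thin strip adjacent to $x=-n$ that touches $u$ at $x=-n$, yielding $\p_x u(-n,\tau)\leq h(-n)$. With $w\leq h$ on all of $\p_p\Omega_T^n$ the maximum principle closes (\ref{eq4.6}); note $h-Ke^x\to0$ as $n\to\infty$, which is what lets the bound survive the passage to the unbounded domain.

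For (\ref{eq4.7}) under $c\geq rL$ I would differentiate in $\tau$: $p:=\p_\tau u_{\ep,n}$ solves $\p_\tau p-{\cal L}p+\beta_\ep'(Ke^x-u)p=0$, again with nonnegative zeroth-order coefficient. On the lateral boundaries $p(0,\tau)=0$ and $p(-n,\tau)=(c-rL)e^{-r\tau}\geq0$. The crux is the initial slice, where $p(x,0)={\cal L}u_0+\beta_\ep(Ke^x-u_0)+c$ with $u_0=\pi_\ep(Ke^x-L)+L$; I would estimate it region by region: where $Ke^x-L\geq\ep$ one has $u_0=Ke^x$ and $p=qK(1-e^x)\geq0$; where $Ke^x-L\leq-\ep$ one has $u_0=L$ and $p=c-rL\geq0$; and in the transition layer, $\pi_\ep'',\pi_\ep'\geq0$, $r\geq q$, $\beta_\ep\geq0$ and $u_0\leq L+\ep$ give ${\cal L}u_0\geq-r(L+\ep)$ and hence $p\geq c-rL-r\ep\geq-r\ep$. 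Thus $p\geq-r\ep$ on $\p_p\Omega_T^n$, and since the nonnegative zeroth-order term forbids an interior minimum more negative than $0$, the minimum principle yields $p\geq-r\ep$ throughout. Overall I expect the lateral-gradient bound at $x=-n$ in (\ref{eq4.6}) to be the single most delicate ingredient; the remaining steps are routine comparison and differentiation arguments.
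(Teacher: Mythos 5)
Your proposal matches the paper's proof almost everywhere: the sub/super-solutions $Ke^x$, $\frac cr+\frac{rL-c}{r}e^{-r\tau}$, $K$ for (\ref{eq4.5}); differentiating the penalized equation in $x$ with the sign of the parabolic-boundary data for the lower bound in (\ref{eq4.6}); using $\overline{W}(x)=K\bigl(e^{x}-\alpha_- e^{\alpha_-(x+n)-n}\bigr)$ (your $h$) as a super-solution of the equation for $w=\p_x u_{\ep,n}$, via ${\cal L}e^{\alpha_-x}=0$ and $h-Ke^x\geq0$; and differentiating in $\tau$ for (\ref{eq4.7}), where your invocation of $\pi_\ep'',\pi_\ep'\geq0$ and $r\geq q$ is exactly what makes $\frac{\sigma^2}{2}\p_{xx}u_0+(r-q-\frac{\sigma^2}{2})\p_x u_0=\frac{\sigma^2}{2}\pi_\ep''(Ke^x)^2+(r-q)\pi_\ep'Ke^x\geq0$ in the transition layer. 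The genuine gap is the single step you deferred: the boundary-gradient bound $\p_x u_{\ep,n}(-n,\tau)\leq \overline{W}(-n,\tau)=Ke^{-n}(1-\alpha_-)$, and the route you sketch for it would not work.

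A "local super-solution on a thin strip adjacent to $x=-n$" cannot produce this bound. On the artificial inner edge $x=-n+\delta$ the only information available at this stage is $u_{\ep,n}\leq K$ from (\ref{eq4.5}), so your barrier $\psi$ would have to climb from $\psi(-n,\tau)=u_{\ep,n}(-n,\tau)=\frac cr+\frac{rL-c}{r}e^{-r\tau}<K$ up to at least $K$ across the strip, while its slope at $x=-n$ must not exceed the exponentially small number $Ke^{-n}(1-\alpha_-)$. The super-solution inequality forbids this: it caps the convexity of $\psi$ (roughly $\frac{\sigma^2}{2}\p_{xx}\psi\leq \p_\tau\psi+r\psi-c+|r-q-\frac{\sigma^2}{2}|\,|\p_x\psi|$), so a barrier with tiny slope at $x=-n$ stays close to $\phi(\tau)$ across a thin strip and never reaches $K$; concretely, the natural candidate $\psi=\phi(\tau)+A\bigl(\cosh(\lambda(x+n))-1\bigr)$, which has zero slope at $x=-n$, already fails the super-solution condition at the touching edge, since there $\p_\tau\psi-{\cal L}\psi=c-\frac{\sigma^2}{2}A\lambda^2<c$. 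The paper's resolution is to make the barrier global: set $W=\frac cr+\frac{rL-c}{r}e^{-r\tau}+K\bigl(e^{x}-e^{\alpha_-(x+n)-n}\bigr)$ on all of $\Omega_T^n$. Because ${\cal L}e^{\alpha_-x}=0$ one has $\p_\tau W-{\cal L}W=c+qKe^x>c$, and $W\geq Ke^x+\ep$ (for $n$ large, $\ep$ small) annihilates the penalty term; moreover $W$ coincides with the prescribed Dirichlet data of $u_{\ep,n}$ identically on $\{x=-n\}$ and dominates the data on $\{x=0\}$ and $\{\tau=0\}$. Comparison on the whole cylinder gives $u_{\ep,n}\leq W$, and since $W-u_{\ep,n}\geq 0$ vanishes along $\{x=-n\}$, it follows that $\p_x u_{\ep,n}(-n,\tau)\leq \p_x W(-n,\tau)=Ke^{-n}(1-\alpha_-)=\overline{W}(-n,\tau)$. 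The crucial point, absent from your plan, is that the barrier's "outer" boundary must be the true boundary $x=0$ (where $u_{\ep,n}=K$ exactly), not an interior line where only the crude bound $K$ is known. Once this boundary datum is in hand, your comparison of $w$ against $\overline{W}$ closes the upper bound in (\ref{eq4.6}) exactly as in the paper.
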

\begin{proof}
 The existence of the solution to (\ref{eq4.4}) can be
proved in a similar way as in \cite{Yan,Yang3,Yang2}, and we refer
to those papers for the details. The uniqueness follows directly
from the A-B-P maximum principle (see \cite{Tao}).


 Next, we prove (\ref{eq4.5}). Letting $w=Ke^x$ and
 recalling (\ref{eq4.3}), we calculate that
 \bee
 \left\{
 \begin{array}{l}
 \p_\tau w-{\cal L} w-\beta_\ep(Ke^x-w)-c
 =qKe^x-\beta_\ep(0)-c
 =qKe^x+c-qK-c\leq0,
 \vspace{2mm} \\
 w(-n,\,\tau)=Ke^{-n}\leq \min\left\{{c\over r},L\right\}
 \leq u_{\ep,\,n}(-n,\,\tau),
 \quad w(0,\,\tau)=K=u_{\ep,\,n}(0,\,\tau),
 \vspace{2mm} \\
 w(x,0)=Ke^x\leq \max\{L,Ke^x\}\leq \pi_\ep(Ke^x-L)+L=u_{\ep,\,n}(x,0).
 \end{array}
 \right.
 \eee
 Hence, $w=Ke^x$ is a sub-solution of (\ref{eq4.4}), and
 we have showed
 $u_{\ep,\,n}\geq Ke^x$. Letting
 $$w={c\over r}+{rL-c\over r}\,e^{-r\tau},$$
 we have that
 \bee
 \left\{
 \begin{array}{l}
 \p_\tau w-{\cal L} w-\beta_\ep(Ke^x-w)-c
 \leq \p_\tau w+rw-c=0,
 \vspace{2mm} \\
 w(-n,\,\tau)=u_{\ep,\,n}(-n,\,\tau),
 \;\; w(0,\,\tau)\leq\max\left\{{c\over r},L\right\}
 \leq\max\left\{{qK\over r},L\right\}
 \leq K=u_{\ep,\,n}(0,\,\tau),
 \vspace{2mm} \\
 w(x,0)=L\leq \max\{L,Ke^x\}\leq \pi_\ep(Ke^x-L)+L=u_{\ep,\,n}(x,0).
 \end{array}
 \right.
 \eee
 Therefore, $w$ is another sub-solution of (\ref{eq4.4}), and we
 have proved the first inequality in (\ref{eq4.5}).

%
 Moreover, it is easy to check that $K$ is a super-solution of
 (\ref{eq4.4}). Hence,  the second inequality in (\ref{eq4.5}) is obvious.

 Next, we prove the second inequality in (\ref{eq4.6}). Let
 $$W={c\over r}+{rL-c\over r}\,e^{-r\tau}+K\left(e^{x}-e^{\alpha_-(x+n)-n}\right).$$
 If $\ep$ is small enough and $n$ is large enough, then in the domain $\Omega_T^n,\,W$ satisfies
 \bee
 && W(x,\tau)\geq \min\left\{{c\over r},L\right\}+Ke^x-Ke^{-n}
 \geq Ke^x+\ep,
 \\[2mm]
 &\mbox{and}&\left\{
 \begin{array}{l}
 \p_\tau W-{\cal L} W+\beta_\ep(Ke^x-W)-c
 =c+qKe^x-c>0,
 \vspace{2mm} \\
 W(-n,\,\tau)= u_{\ep,\,n}(-n,\,\tau);
 \quad W(0,\,\tau)\geq K+\ep>K=u_{\ep,\,n}(0,\,\tau),
 \vspace{2mm} \\
 W(x,0)\geq L+Ke^x-Ke^{-n}\geq \pi_\ep(Ke^x-L)+L=u_{\ep,\,n}(x,0).
 \end{array}
 \right.
 \eee
 Hence, $W$ is another super-solution of (\ref{eq4.4}), and satisfies
 $$
 u_{\ep,\,n}(x,\tau)\leq
 {c\over r}+{rL-c\over r}\,e^{-r\tau}+K\left(e^{x}-e^{\alpha_-(x+n)-n}\right)
 =u_{\ep,\,n}(-n,\tau)+K\left(e^{x}-e^{\alpha_-(x+n)-n}\right).
 $$
If we define
 $$
 \overline{W}(x,\tau)=K\left(e^{x}-\alpha_- e^{\alpha_-(x+n)-n}\right),
 $$
 then we have $\p_x u_{\ep,\,n}(-n,\,\tau)\leq \overline{W}(-n,\,\tau)$.
Since $u_{\ep,\,n}(x,\,\tau)\geq Ke^x$ while $x\leq0$,
 and $u_{\ep,\,n}(0,\,\tau)=Ke^x|_{x=0}$, we conclude that
 $$\p_x u_{\ep,\,n}(0,\,\tau)\leq K e^x\Big|_{x=0}\leq \overline{W}(0,\,\tau).$$
Differentiating (\ref{eq4.4}) with respect to $x$, we deduce that
\bee
 \left\{
 \begin{array}{l}
 (\p_\tau -{\cal L})(\p_x u_{\ep,\,n}-\overline{W})
 +\beta'_\ep(\cdot)(\p_x u_{\ep,\,n}-\overline{W})
 =-(\p_\tau \overline{W}-{\cal L} \overline{W})
 +\beta'_\ep(\cdot)(Ke^x-\overline{W})
 \vspace{2mm} \\
 \qquad\qquad\qquad\qquad\qquad\quad\qquad\qquad\qquad\quad\;\;\;
 \leq-(\p_\tau \overline{W}-{\cal L} \overline{W})
 =-qKe^x<0,
 \vspace{2mm} \\
 \p_x u_{\ep,\,n}(-n,\,\tau)-\overline{W}(-n,\,\tau)\;\leq0,
 \qquad \p_x u_{\ep,\,n}(0,\,\tau)-\overline{W}(0,\,\tau)\;\leq 0,
 \vspace{2mm} \\
 \p_x u_{\ep,\,n}(x,0)-\overline{W}(x,\,0)\;
 =\pi'_\ep(Ke^x-L)Ke^x-\overline{W}(x,\,0)
 \leq Ke^x-\overline{W}(x,\,0)\leq 0.
 \end{array}
 \right.
 \eee
Hence, the comparison principle implies the second inequality in
(\ref{eq4.6}).

 Recalling (\ref{eq4.5}) and the boundary condition in (\ref{eq4.4}), we deduce
 that for any $\tau\in[\,0,T\,]$, the following inequalities hold
 $$
 \p_x u_{\ep,\,n}(0,\,\tau)\geq 0,\qquad
 \p_x u_{\ep,\,n}(-n,\,\tau)\geq 0.
 $$
Differentiating (\ref{eq4.4}) with respect to $x$, we derive that
 \bee
 \left\{
 \begin{array}{l}
 (\p_\tau -{\cal L})\p_x u_{\ep,\,n}
 +\beta'_\ep(Ke^x-u_{\ep,\,n})\,\p_x u_{\ep,\,n}
 =\beta'_\ep(Ke^x-u_{\ep,\,n})\,Ke^x\geq 0,
 \vspace{2mm} \\
 \p_x u_{\ep,\,n}(-n,\,\tau)\;\geq0,\qquad
 \p_x u_{\ep,\,n}(0,\,\tau)\;\geq0,
 \vspace{2mm} \\
 \p_x u_{\ep,\,n}(x,0)\;
 =\pi'_\ep(Ke^x-L)\,Ke^x\geq0.
 \end{array}
 \right.
 \eee
 Hence, the comparison principle implies the first inequality in (\ref{eq4.6}).

 In order to prove (\ref{eq4.7}), we differentiate (\ref{eq4.4}) with respect to
 $\tau$, then we have
 \bee
 \left\{
 \begin{array}{l}
 (\p_\tau -{\cal L})\p_\tau u_{\ep,\,n}
 +\beta'_\ep(Ke^x-u_{\ep,\,n})\,\p_\tau u_{\ep,\,n}=0,
 \vspace{2mm} \\
 \p_\tau u_{\ep,\,n}(-n,\,\tau)\;=
 (c-rL)\,e^{-r\tau}\geq0,\qquad
 \p_\tau u_{\ep,\,n}(0,\,\tau)\;=0.
 \end{array}
 \right.
 \eee
 Recalling (\ref{eq4.4}), we deduce that
 \bee
 &&\p_\tau u_{\ep,\,n}(x,0)\;
 =c+{\cal L}u_{\ep,\,n}(x,0)+\beta_\ep(Ke^x-u_{\ep,\,n}(x,0))
 \\[2mm]
 &\geq& c+(r-q)\,\pi'_\ep(Ke^x-L)\,Ke^x
 -rL-r\pi_\ep(Ke^x-L)+\beta_\ep(Ke^x-L-\pi_\ep(Ke^x-L))
 \\[2mm]&\geq&
 \left\{
 \begin{array}{l}
 c-rL\geq0,\hspace{8.5cm} Ke^x-L<-\ep,
 \vspace{2mm} \\
 c-rL-r\ep\geq-r\ep,\hspace{6.7cm} -\ep\leq Ke^x-L\leq \ep,
 \vspace{2mm} \\
 c+(r-q)\,Ke^x-rL-r(Ke^x-L)+qK-c=qK-qKe^x\geq0,\;\; Ke^x-L>\ep.
 \end{array}
 \right.
 \eee
 Moreover, it is clear that
 $$
 (\p_\tau -{\cal L})(-r\ep)
 +\beta'_\ep(Ke^x-u_{\ep,\,n})\,(-r\ep)\leq-r^2\ep<0.
 $$
 Hence, (\ref{eq4.7}) follows from the comparison principle.
 \end{proof}

\begin{lem}                                                                              \label{lm4.2}
 For any fixed $n\in I\!\!N$ satisfying $n>\max\{\ln K-\ln L,\,\ln r+\ln K-\ln c\}$,
 (\ref{eq4.2}) has a unique solution
 $u_n\in W_p^{2,1}(\Omega_T^n \backslash B_\delta(P_0))\cap C(\overline{\Omega_T^n})$
 for any $1<p<+\infty$, where $P_0=(-\ln K+\ln L,0),\;
 B_\delta(P_0)=\{(x,t):(x+\ln K-\ln L)^2+t^2\leq\delta^2\}$.
 Moreover, $\p_x u_n\in C(\Omega_T)$ and we have the following estimates:
 \be                                                                               \label{eq4.8}
 &\displaystyle{\max\left\{Ke^x,{c\over r}
 +{rL-c\over r}\,e^{-r\tau}\right\}
 \leq u_n\leq K}\;\;\;
 &\mbox{in}\;\;\overline{\Omega^n_T};
 \\[2mm]                                                                            \label{eq4.9}
 &0\leq\p_x  u_n
 \leq K\left(e^{x}-\alpha e^{\alpha_-(x+n)-n}\right)\;\;\;
 &\mbox{in}\;\;\overline{\Omega^n_T},
 \ee
 where $\alpha_-$ is defined in Lemma \ref{lm4.1}. If furthermore
 $c\geq r\,L$ holds, we have the following estimate:
 \be                                                                                \label{eq4.10}
 \p_\tau  u_n \geq 0\;\;\;
 \mbox{a.e. in}\;\;\Omega_T^n.
 \ee
\end{lem}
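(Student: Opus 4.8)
The plan is to obtain $u_n$ as the limit of the penalized solutions $u_{\ep,n}$ constructed in Lemma \ref{lm4.1} as $\ep\ra 0^+$, exploiting the fact that the estimates (\ref{eq4.5})--(\ref{eq4.7}) are uniform in $\ep$. The first step is a uniform bound on the penalty term. Since $u_{\ep,n}\geq Ke^x$ by (\ref{eq4.5}) and $\beta_\ep$ is nondecreasing with $\beta_\ep(0)=M=qK-c$, we have
$$0\leq \beta_\ep(Ke^x-u_{\ep,n})\leq \beta_\ep(0)=qK-c\qquad\mbox{in }\;\Omega_T^n,$$
so the right-hand side $c+\beta_\ep(Ke^x-u_{\ep,n})$ of (\ref{eq4.4}) is bounded in $L^\infty(\Omega_T^n)$ uniformly in $\ep$.

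Away from the corner $P_0=(\ln L-\ln K,0)$ of the initial datum $\max\{L,Ke^x\}$, the interior $L^p$ parabolic estimates then furnish a bound for $u_{\ep,n}$ in $W_p^{2,1}(\Omega_T^n\setminus B_\delta(P_0))$ that is uniform in $\ep$, for every $\delta>0$ and every $1<p<\infty$. Combined with (\ref{eq4.5}), (\ref{eq4.6}) and the parabolic Sobolev embedding (taking $p>3$) this also yields uniform H\"older bounds for $u_{\ep,n}$ and $\p_x u_{\ep,n}$. Extracting a subsequence, I would pass to the limit to produce $u_n$ with $u_{\ep,n}\ra u_n$ strongly in $C(\overline{\Omega_T^n})$, weakly in $W_p^{2,1}(\Omega_T^n\setminus B_\delta(P_0))$, and $\p_x u_{\ep,n}\ra\p_x u_n$ locally uniformly on $\Omega_T^n$ (any compact subset of $\Omega_T^n$ has $\tau$ bounded away from $0$ and so avoids $P_0$), giving $\p_x u_n\in C(\Omega_T^n)$.

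To verify that $u_n$ solves the VI (\ref{eq4.2}) I would run the standard penalty argument: the uniform bound on $\beta_\ep$ together with $\beta_\ep(s)\ra+\infty$ for $s>0$ forces $Ke^x-u_n\leq 0$, i.e. the obstacle constraint $u_n\geq Ke^x$; on the open set $\{u_n>Ke^x\}$ one has $Ke^x-u_{\ep,n}<-\ep$ for $\ep$ small on compact subsets, hence $\beta_\ep=0$ there and $\p_\tau u_n-{\cal L}u_n=c$; while everywhere $\p_\tau u_n-{\cal L}u_n=c+\lim_\ep\beta_\ep\geq c$. The estimates (\ref{eq4.8})--(\ref{eq4.10}) follow by passing to the limit in (\ref{eq4.5})--(\ref{eq4.7}); in particular the lower bound $-r\ep$ in (\ref{eq4.7}) improves to $\p_\tau u_n\geq 0$ in (\ref{eq4.10}). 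Uniqueness is immediate from the A-B-P maximum principle exactly as in Lemma \ref{lm4.1}.

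The main obstacle is the handling of the singular point $P_0$. Because the smoothing $\pi_\ep$ of the initial datum disappears as $\ep\ra0$, the limiting initial value $\max\{L,Ke^x\}$ is only Lipschitz at $x=\ln L-\ln K$, so the $W_p^{2,1}$ bound cannot be carried up to $P_0$; this is precisely why the regularity is claimed only on $\Omega_T^n\setminus B_\delta(P_0)$. The delicate point is to establish the $L^p$ second-order bound uniformly up to the lateral and the remaining initial boundary --- so that the obstacle constraint and the continuity of $\p_x u_n$ survive in the limit --- while isolating the genuine loss of regularity at the single corner $P_0$.
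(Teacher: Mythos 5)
Your proposal follows essentially the same route as the paper's proof: the uniform bound $0\leq\beta_\ep(Ke^x-u_{\ep,\,n})\leq\beta_\ep(0)=qK-c$, uniform $W_p^{2,1}$ estimates on $\Omega_T^n\setminus B_\delta(P_0)$ together with global H\"older estimates, extraction of a subsequence converging weakly in $W_p^{2,1}$ and uniformly in $C(\overline{\Omega_T^n})$, the standard penalty-limit argument to verify the VI (which the paper delegates to \cite{Fr1} and \cite{Yang}), and recovery of (\ref{eq4.8})--(\ref{eq4.10}) from (\ref{eq4.5})--(\ref{eq4.7}).

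The one step you state too loosely is uniqueness: it is not ``exactly as in Lemma \ref{lm4.1}''. There, the two solutions satisfy the same semilinear \emph{equation}, so the maximum principle applies directly to their difference, using $\beta_\ep'\geq0$ (equivalently, monotonicity of the penalty term in $u$). For the variational inequality the difference of two solutions satisfies no equation globally, and A-B-P cannot be invoked on all of $\Omega_T^n$. The paper's argument restricts to the open set ${\cal N}=\{(x,t)\in\Omega_T^n: u_n^1<u_n^2\}$: there the larger solution $u_n^2$ lies strictly above the obstacle $Ke^x$, hence satisfies $\p_t u_n^2-{\cal L}u_n^2=c$ exactly, while $u_n^1$ is a supersolution, so $W=u_n^1-u_n^2$ satisfies $\p_t W-{\cal L}W\geq0$ in ${\cal N}$ with $W=0$ on $\p_p{\cal N}$; A-B-P then gives $W\geq0$ in ${\cal N}$, contradicting the definition of ${\cal N}$. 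This localization to the set where the ordering fails is the standard but necessary ingredient your one-line appeal omits.
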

\begin{proof} From (\ref{eq4.3}) and (\ref{eq4.5}), we deduce that
 $$
 0\leq\beta_\ep(Ke^x-u_{\ep,\,n})\leq \beta_\ep(0)=M.
 $$
 By employing $W_p^{2,1}$ and $C^{\alpha,\,\alpha/2}(0<\alpha<1)$ estimates for parabolic
 equations (see \cite{LSU}), we derive that
 $$
 \|u_{\ep,\,n}\|_{W_p^{2.1}(\,\Omega_T^n \backslash B_\delta(P_0)\,)}
 +\|u_{\ep,\,n}\|_{C^{\alpha,\,\alpha/2}(\,\overline{\Omega_T^n }\,)}\leq C,
 $$
 where $C$ is a constant independent of $\ep$. Hence, there exists a
 $u_n\in W_p^{2.1}(\,\Omega_T^n \backslash B_\delta(P_0)\,)\cap C(\overline{\Omega_T^n})$
 and a subsequence of $\{u_{\ep,\,n}\}$, such that
 as $\ep\rightarrow 0^+$,
 $$
 u_{\ep,\,n}\rightharpoonup u_n\;\;
 \mbox{in}\;\;W_p^{2.1}(\,\Omega_T^n \backslash B_\delta(P_0)\,)\;\;
 \mbox{weakly}\;\;\;\mbox{and}\;\;\;u_{\ep,\,n}\rightarrow u_n\;\;
 \mbox{in}\;\;C(\overline{\Omega_T^n}).
 $$
By applying the method in \cite{Fr1} or \cite{Yang}, we can prove
that
 $u_n$ is the solution of (\ref{eq4.2}). And (\ref{eq4.8})-(\ref{eq4.10})
 are the consequences of (\ref{eq4.5})-(\ref{eq4.7}) as $\ep\rightarrow 0^+$.

 Finally, we prove the uniqueness of the solution. Suppose $u_n^1$ and
 $u_n^2$ are two $W_{p,\;loc}^{2,1}(\Omega_T^n)\cap C(\overline{\Omega_T^n})$
 solutions of (\ref{eq4.2}) and denote
 $$
 {\cal N}\stackrel{\Delta}{=}\{(x,t)\in \Omega_T^n
 :u_n^1(x,t)<u_n^2(x,t)\}.
 $$
 Suppose ${\cal N}$ is not empty, then in the domain ${\cal N}$,
 $$
 Ke^x\leq u_n^1(x,t)<u_n^2(x,t),\;\;
 \p_t u_n^2-{\cal L}u_n^2=c,\;\;
 \p_t (u_n^1-u_n^2)-{\cal L}(u_n^1-u_n^2)\geq0.
 $$
 Denote $W=u_n^1-u_n^2$, we have
 $$
 \p_t W-{\cal L}W\geq0\;\;\mbox{in }{\cal N},\qquad
 W=0\;\;\mbox{on }\p_p{\cal N}.
 $$
 From the A-B-P maximum principle (see \cite{Tao})), we have
 $W\geq0$ in ${\cal N}$, which contradicts the definition of ${\cal N}$.
\end{proof}

\noindent {\bf Proof of Theorem \ref{th4.1}: }Rewrite (\ref{eq4.2})
as follows:
 \bee
 \left\{
 \begin{array}{l}
 \p_t u_n-{\cal L}u_n=f(x,t),
 \hspace{3cm} (x,\,t)\in \Omega_T^n,
 \vspace{2mm} \\
 u_n(-n,\,\tau)\;={c\over r}+{rL-c\over r}\,e^{-r\tau},
 \qquad u_n(0,\,\tau)\;=K,\qquad
 0\leq \tau\leq T,
 \vspace{2mm} \\
 u_n(x,0)\;=\max\{L,Ke^x\},\hspace{1cm}\qquad\quad
 -n\leq x\leq 0.
 \end{array}
 \right.
 \eee
 Since $u_n\in W_{p,\,{\rm loc}}^{2,1}(\Omega_T^n)$, then we have
 $f(x,t)\in  L^p_{loc}(\Omega_T^n)$
 and
 $$
  f(x,t)=cI_{\{u_n>Ke^x\}}+qKe^xI_{\{u_n=Ke^x\}}.
 $$

 By the $W_p^{2,\,1}$ and $C^{\alpha,\,\alpha/2}$ estimates for parabolic
 equations (see \cite{LSU}), we deduce that for any fixed $R>\delta>0$, the following
 estimates hold
 \be                                                                                       \label{eq4.14}
 \|u_n\|_{W_p^{2,1}(\Omega_T^R \backslash B_\delta(P_0))} \leq C_{R,\delta},\qquad
 \|u_n\|_{C^{\alpha,\,\alpha/2}(\overline\Omega_T^R )} \leq C_R,
 \ee
 where $C_{R,\delta}$ depends on $R$ and $\delta,\,C_R$ depends on $R$, but they are independent of $n$.
 Then we derive that there exists a function
 $u\in W_{p,loc}^{2,\,1}(\,\Omega_T)\cap C(\overline{\Omega}_T)$
 and a function subsequence of $\{u_{n}\}$
 such that for any $R>\delta>0,\,p>1$,
 $$
 u_n\rightharpoonup u \;\;\mbox{in}\;\;W_p^{2.1}
 (\Omega_T^R \backslash B_\delta(P_0))\;\mbox{weakly}
 \quad\quad \mbox{as}\quad n\rightarrow+\infty.
 $$
 Moreover, (\ref{eq4.14}) and the imbedding theorem imply that
 \be                                                                                          \label{eq4.15}
 u_n\rightarrow u \;\;
 \mbox{in}\;\;C(\overline{\Omega}_T^R )\quad
 \mbox{and }\p_x u_n\rightarrow \p_x u \;\;
 \mbox{in}\;\;C(\overline{\Omega}_T^R \backslash
 B_\delta(P_0))\quad\mbox{as}\quad n\rightarrow+\infty.\quad
 \ee
 By the method in  \cite{Fr1} or \cite{Yang}, we can deduce that
 $u$ is the strong solution of (\ref{eq4.1}). Moreover, (\ref{eq4.15})
 implies that $\p_x u\in C(\Omega)$. And (\ref{eq4.11})-(\ref{eq4.13})
 are the consequences of (\ref{eq4.8})-(\ref{eq4.10}). The
 proof of the uniqueness is similar to the uniqueness proof in Lemma \ref{lm4.2}.
 \hfill$\Box$

\section{ The explicit solution of the PDE (\ref{eq2.5}).}

We present the explicit solution of the
 PDE (\ref{eq2.5}) in this appendix.
Since (\ref{eq2.5}) is a degenerate backward problem, we make the
 transformation (\ref{eq3.1}) as for the VI
 (\ref{eq2.10}).
Then it is not difficult to check that $u$ is governed by
 \be                                                                             \label{eq3.2}
 \left\{
 \begin{array}{l}
 \p_\tau u-{\cal L} u=c
 \hspace{3cm}\mbox{in}\;\; \Omega_T,
 \vspace{2mm} \\
 u(0,\,\tau)\;=K,\hspace{2.6cm}
 0\leq \tau\leq T,
 \vspace{2mm} \\
 u(x,0)\;=\max\{L,Ke^x\},\hspace{1cm}
 x\leq 0,
 \end{array}
 \right.
 \ee

It is standard to show that the classical solution of (\ref{eq3.2})
has the following integral
 expression (see for example \cite{Jiang}):
 \be\nonumber
 u(x,\tau)&\!\!\!=\!\!\!&Ke\,^x
 +c\,\int_0^\tau\,\Phi_1(-x,\tau,t)\,dt
 -q\,K\,e\,^x\,\int_0^\tau\,\Phi_2(-x,\tau,t)\,dt
 -c\,e^{-2\alpha_1x}\int_0^\tau\,\Phi_1(x,\tau,t)\,dt
 \\[2mm]\nonumber
 &\!\!\!\!\!\!&+\;q\,K\,e^{-2\alpha_1x-x}\int_0^\tau\,\Phi_2(x,\tau,t)\,dt
 +L\,\Phi_1(\ln L-\ln K-x,\tau,0)
 -K\,e\,^x\,\Phi_2(\ln L-\ln K-x,\tau,0)
 \\[2mm]                                                                             \label{eq3.5}
 &\!\!\!\!\!\!&-\,L\,e^{-2\alpha_1x}\,\Phi_1(\ln L-\ln K+x,\tau,0)
 +K\,e^{-2\alpha_1x-x}\,\Phi_2(\ln L-\ln K+x,\tau,0),
 \ee
 where
 \bee
 &&\Phi_1(x,\tau,t)=e^{r\,\!t-r\tau}\,\Phi(d_1(x,\tau,t)),\qquad\;\,
 \Phi_2(x,\tau,t)=e^{q\,\!t-q\tau}\,\Phi(d_2(x,\tau,t)),
 \\[2mm]
 &&{\displaystyle d_1(x,\tau,t)={x\over\sigma\sqrt{\tau-t}}
 -\sigma\,\alpha_1\,\sqrt{\tau-t},\quad
 d_2(x,\tau,t)={x\over\sigma\sqrt{\tau-t}}
 -\sigma\,(\alpha_1+1)\,\sqrt{\tau-t},}\qquad
 \\[2mm]
 &&\Phi(x)={\displaystyle {1\over\sqrt{2\pi}}
 \,\int_{-\infty}^x\,e^{-y^2/2}\,dy, \hspace{1.7cm}
 \alpha_1=-{1\over 2}+{r-q\over \sigma^2}.}
 \eee


\begin{thebibliography}{99}
\bibitem{BCJR} T. R. Bielecki, S. Cr\'epey, M. Jeanblanc, M.
Rutkowski,
Arbitrage Pricing of Defaultable Game Options with Applications to
Convertible Bonds. {Quantitative Finance}, Volume 8, (2008), 795 --
810.

\bibitem{Bla}A. Blanchet, On the regularity of the free boundary in the
parabolic obstacle problem, Application to American options. { Nonl.
Anal.} 65 (2006), 1362--1378.

\bibitem{Brennan}M. Brennan, E. Schwartz,  Convertible Bonds: valuation and optimal
strategies for call and conversion, {J. Financ.} 32 (1977),
1699--1715.

\bibitem{Chen} N. Chen, M. Dai, X. Wan, A non-zero-sum game approach to convertible bonds: tax benefit,
 bankruptcy cost and early/late calls, {Math. Financ.} 23(2013), 57--93.

\bibitem{Crepey} S. Cr\'epey, A. Rahal. Pricing Convertible Bonds with Call
Protection. Journal of Computational Finance, Vol. 15 Num. 2,
(2011/12), 37--75.

\bibitem{Karatzas}
J. Cvitani{\'c}, I. Karatzas,
\newblock {Backward stochastic differential equations with reflection and {D}ynkin games},
\newblock {The Annals of Probability}, 24(4), (1996), 2024--2056.

\bibitem{ElKaroui1997}
N. El Karoui, S. Peng, M. Quenez, Backward SDEs in finance, { Math.
Finance}, 7(1), (1997), 1--71.

\bibitem {ElKaroui}
N. El Karoui, C. Kapoudjian, E. Pardoux, S. Peng, M.C. Quenez,
\newblock{Reflected solutions of backward {SDE}'s, and related obstacle problems for {PDE}'s},
\newblock{The Annals of Probability}, 25(2), (1997), 702--737.

\bibitem{Evans}
L. C. Evans, Partial Differential Equations, Graduate Studies in
Mathematics, AMS, 2010.

\bibitem{Fr2}A. Friedman, Parabolic variational inequalities in one space
 dimension and smoothness of the free boundary, J. Funct. Anal. 18, (1975), 151-176.

\bibitem{Fr1}A. Friedman,  Variational principle and free boundary problems, Wiley, New York, 1982.

\bibitem{Tru}D. Gilbarg, N. S. Trudinger, Elliptic partial differential equations of
    second order, Springer, New York, 1983.

\bibitem{Ingersoll}J. Ingersoll,  A contingent-claims valuation of convertible securities, ¡¡
 J. Financ. Econ. 4 (1977) 289-321.

\bibitem{Jiang} L. Jiang, Mathematical modeling and methods of option pricing, {World
Scientific Publishing Co.} 2005.



\bibitem{Krylov}N. V. Krylov, Controlled diffusion processes, Springer, New York, 1980.



\bibitem{LSU}G. M. Lieberman, Second order parabolic differential equations, World Scientific, Singapore, 1996.

\bibitem{Pet}A. Petrosyan, H. Shahgholian, Parabolic obstacle problems
applied to finance. Recent developments in nonlinear partial
differential equations, Contemp. Math. 439 (2007) 117-133.

\bibitem{SPS}M. Sirbu, I. Pikovsky, SE. Shreve, Perpetual convertible
bonds, SIAM J. Control Optim. 43 (2004) 58-85.

\bibitem{SS1}M. Sirbu, S. E. Shreve, A two-person game for pricing convertible
bonds, SIAM J. Control Optim. 45 (2006) 1058-1639.


\bibitem{Tao}K. Tao,  On Aleksandrov, Bakel'man type maximum principle for second order
         parabolic equations, Comm. PDE's 10 (1985) 543-553.


\bibitem{WIL} P. Wilmott,  Derivatives, The theory and practice of financial engineering, Wiley, New York, 1998.

\bibitem{Yan1}H. Yan, G. Liang, Z. Yang, Indifference pricing and hedging in a Multiple-Priors model with
 trading constraints, Science China Mathematics, accepted.

\bibitem{Yan}H. Yan, F. Yi,  A variational inequality from pricing convertible
bond, Advances in Difference Equations (2011) DOI:
10.1155/2011/309678.

\bibitem{Yang4}Z. Yang, S. Tang, Dynkin game of stochastic differential equations
 with random coefficients, and associated backward stochastic partial differential variational
 inequality, SIAM J. Control Optim. 51 (2013) 64-95.

\bibitem{Yang3}Z. Yang, F. Yi, A free boundary problem arising from pricing
convertible bond, Appl. Anal. 89 (2010) 307-323.

\bibitem{Yang}Z. Yang, F. Yi, A Variational inequality arising from
American installment call options pricing, Journal of Math. Anal.
and Appl. 357 (2009) 54-68.

\bibitem{Yang2}F. Yi, Z. Yang, Free boundary problem concerning pricing convertible
bond, Math. Meth. Appl. Sci. 34 (2011) 1036-1049.

\end{thebibliography}
 \end{document}